\newcommand{\mkv}{-\!\!\!\!\minuso\!\!\!\!-}
\newcommand{\rmr}{\mathrm{r}}
\newtheorem{theorem}{Theorem}
\newtheorem{lemma}{Lemma}
\newtheorem{proposition}{Proposition}
\newtheorem{corollary}{Corollary}
\newtheorem{definition}{Definition}
\newtheorem{remark}{Remark}
\newtheorem*{exmp*}{Example}
\newcommand*{\medcap}{\mathbin{\scalebox{1.5}{\ensuremath{\cap}}}}%
\titleclass{\subsubsubsection}{straight}[\subsection]
\newcounter{subsubsubsection}[subsubsection]
\renewcommand\thesubsubsubsection{\thesubsubsection.\arabic{subsubsubsection}}
\renewcommand\paragraph{\@startsection{paragraph}{5}{\z@}%
  {3.25ex \@plus1ex \@minus.2ex}%
  {-1em}%
  {\normalfont\normalsize\bmseries}}
\renewcommand\subparagraph{\@startsection{subparagraph}{6}{\parindent}%
  {3.25ex \@plus1ex \@minus .2ex}%
  {-1em}%
  {\normalfont\normalsize\bmseries}}
\def\toclevel@subsubsubsection{4}
\def\toclevel@paragraph{5}
\def\toclevel@subparagraph{6}
\def\l@subsubsubsection{\@dottedtocline{4}{7em}{4em}}
\def\l@paragraph{\@dottedtocline{5}{10em}{5em}}
\def\l@subparagraph{\@dottedtocline{6}{14em}{6em}}
\begin{document}

\title{Dependence Balance and Capacity Bounds for Multiterminal Communication and Wiretap Channels}

\author{Amin Gohari and Gerhard Kramer}

%
\allowdisplaybreaks
\maketitle
\begin{abstract}
An information measure based on fractional partitions of a set is used to derive a general dependence balance inequality for communication. 
This inequality is used to obtain new upper bounds on reliable and secret rates for multiterminal channels.
For example, we obtain a new upper bound on the rate of shared randomness generated among terminals, a counterpart of the cut-set bound for reliable communication. 
The bounds for reliable communication use the concept of auxiliary receivers, and we show that they are optimized by Gaussian distributions for Gaussian channels. The bounds are applied to multiaccess channels with generalized feedback and relay channels, and improve the cut-set bound for scalar Gaussian channels. The improvement for Gaussian relay channels complements results obtained with other methods.\footnote{This work was presented in part at the 2023 and 2026 IEEE International Symposium on Information Theory.
The work of G.~Kramer was supported by the German Federal Ministry of Education and Research in the Program “Souverän.\ Digital.\ Vernetzt.” Joint Project 6G-Life under Project 16KISK002, and by the German Research Foundation (DFG) under project 509917421.}
\end{abstract}

\section{Introduction}
Mutual information quantifies the dependence of two random variables. One operational interpretation of mutual information is that it characterizes the maximum common randomness generated through interactive, public, and noiseless communication \cite{AhlswedeCsiszar1993,Maurer1993}, referred to as the \emph{source model}. A natural question is how to generalize mutual information to more than two random variables. For instance, one can define the \emph{shared information} as the maximum common randomness that multiple terminals can generate in the source model \cite{csiszar2008secrecy,5625626,chan2015multivariate,chan2014multiterminal}. For random variables $Y_1, Y_2, \dots, Y_k$, this leads to an information measure based on the \emph{fractional partition} $\lambda$ of the set $\{1, 2, \dots, k\}$; see \cite{csiszar2008secrecy}. We call this shared information \emph{the fractional partition multivariate information} or \emph{$\lambda$-multivariate information}.

The $\lambda$-multivariate information for $k\ge 3$ does not include the usual mutual information; hence, we define a mixed version that does. We further use $\lambda$-multivariate information to derive a new \emph{dependence balance (DB) inequality}. 
The original DB inequality was proposed for single-output two-way channels and multiaccess channels (MACs) with feedback in \cite{hekstra1989dependence} and was extended to discrete memoryless networks in \cite{kramer03,kramer2006dependence}.  Without feedback, the channel inputs are independent (conditioned on a time-sharing random variable) because they are functions of independent messages. However, feedback lets transmitters learn of each other's messages and generate statistically dependent inputs. DB constrains the mutual information of the channel inputs, i.e., each terminal ``must produce the dependence it consumes''~\cite[Sec.~IV]{hekstra1989dependence}. The new DB inequality with auxiliary receivers extends the bounds in \cite{hekstra1989dependence,
kramer03,kramer2006dependence,
gastpar06,gastpar06b,
tandon2011dependence,Sula-IT20} and is central to our proofs.

\subsection{Contributions and Organization}

This paper studies the following questions. How can $\lambda$-multivariate information be used to study common randomness generation and \emph{secrecy} for the source model? What happens for the \emph{channel model}, which replaces the noiseless public channels with a noisy network? What are the implications for reliable communication in noisy networks?

Our contributions can be summarized as follows. 
\begin{itemize} \itemsep 0pt
    \item We derive a new DB inequality with $\lambda$-multivariate information.
    \item For shared randomness generation:
    
    \begin{enumerate}[label=(\roman*)]
    \item We propose a general communication model for sharing randomness and derive an upper bound on secret key rates in terms of $\lambda$-multivariate information. The bound leverages the DB inequality and auxiliary receivers as in \cite{hekstra1989dependence,gohari2021outer}.
    \item We show the upper bound generalizes existing bounds for the source and channel models \cite{gohari2010information1,gohari2010information2}. For instance, the bound recovers the key agreement bound in \cite{Ardestanizadeh} for wiretap channels with a secure rate-limited feedback link. 
    \item The theory establishes a new upper bound on the shared randomness rates analogous to the cut-set bound for reliable communication \cite{aref80,ElGamal-NTC81,thomas2006elements,kramer03}. 
        \end{enumerate}
    \item For reliable communication over arbitrary multiterminal noisy networks:
    
    \begin{enumerate}[label=(\roman*)]
    \item We generalize the classic cut-set bound by including dependence balance constraints.
    \item For Gaussian multiterminal channels, we show that Gaussian distributions characterize the new bound. The bound thus requires optimizing only second-order statistics, as in the cut-set bound.
    \item We strengthen existing bounds for Gaussian MACs with generalized feedback and relay channels. The improvement for Gaussian relay channels complements the work in~\cite{gohari2021outer,el2022strengthened}.
    \end{enumerate}
\end{itemize}

This paper is organized as follows. Section~\ref{sec:prelim} introduces fractional partitions and $\lambda$-multivariate information and proves a general DB constraint. Section~\ref{sec:multiterminal-wiretap-channels} develops a new outer bound based on the DB constraint on the secret key rates.  Section~\ref{sec:mt-networks} similarly derives new capacity upper bounds for reliable communication. Section~\ref{sec:conclusions} concludes the paper.

\begin{remark} Prakash Narayan presented several open problems on $\lambda$-multivariate information in a plenary talk on ``Shared Information'' at the \emph{2024 IEEE Information Theory Workshop}, including the following.
\begin{itemize} \itemsep 0pt
    \item \emph{Noisy Interactive Communication:} The source-model key agreement framework assumes noiseless communication---can $\lambda$-multivariate information be utilized to study interactive communication over noisy channels? We address this question in Section~\ref{sec:multiterminal-wiretap-channels}.
    \item \emph{Network Coding Applications:} What is the operational significance of $\lambda$-multivariate information in network source and channel coding? We address this question in Section~\ref{sec:mt-networks}.
\end{itemize}
\end{remark}

\section{Preliminaries}
\label{sec:prelim}

The set $\{1,\cdots, k\}$ is denoted by $[k]$ and the cardinality of a set $\mathcal U$ is written as $|\mathcal{U}|$. Let $Y_{\mathcal{U}}$ denote $(Y_{i}: i\in\mathcal{U})$ so that $Y_{[k]}=(Y_1, Y_2, \cdots, Y_k)$. Let $Y^{i}$ denote the string $(Y_1, Y_2, \cdots, Y_{i})$. We similarly write
\begin{align}
    Y_{[u]}^{i} & = 
    \big( Y_{[u]1}, Y_{[u]2}, \cdots, Y_{[u]i} \big) \nonumber \\
    & = \big( Y_{11},\cdots,Y_{u1},\; Y_{12},\cdots,Y_{u2}, \; \cdots, Y_{1i},\cdots,Y_{ui} \big).
\end{align}
The expression $Y_{[u]}^i$ is the empty string if $i<1$.
We use the common notation $H(X)$ for discrete entropy, $I(X;Y)$ for mutual information, and $h(X)$ for differential entropy. We say $X\mkv Y\mkv Z$ forms a Markov chain if $I(X;Z|Y)=0$. Unless stated otherwise, we write ${\mathcal B}^c$ for the complement of the set $\mathcal B$, i.e., ${\mathcal B}^c=[k]-\mathcal B$.

\subsection{Fractional Partitions and Multivariate Information}

This section reviews a notion of multivariate information using fractional partitions.

\begin{definition}[Fractional Partition]\label{defFP}
Let $k\geq 2$ be a natural number. Let $\mathsf B$ be the collection of all non-empty proper subsets of $[k]$, i.e., sets $\mathcal B$ such that $\mathcal B\ne\emptyset$ and $\mathcal B\ne[k]$. A fractional partition of $[k]$ is a collection of non-negative weights $\lambda_{\mathcal B}$, $\mathcal{B} \in \mathsf{B}$, such that
\begin{align}
\sum_{{\mathcal B}\in \mathsf B:\,i\in {\mathcal B}}\lambda_{\mathcal B}&=1,\quad\forall\, i\in[k].
\label{eqnEE2}
\end{align}
\end{definition}

The $k$ constraints \eqref{eqnEE2} should not be confused with a constraint on the sum over all $\lambda_\mathcal{B}$. 
For example, for the set $[2]=\{1,2\}$ we have $\lambda_{\{1\}}=\lambda_{\{2\}}=1$. Similarly, for the set $[3]=\{1,2,3\}$ and
\begin{align} 
    \lambda_{\{1,2\}}=\lambda_{\{3\}}=1
\end{align}
we have $\lambda_{\mathcal B}=0$ for $\mathcal{B}\notin\{\{1,2\},\{3\}\}$. This fractional partition corresponds to the partition $\{1,2\}\cup\{3\}$. On the other hand, the choice
\begin{align} \label{eq:frac-example}
    \lambda_{\{1,2\}} = \lambda_{\{1,3\}} = \lambda_{\{2,3\}} = 1/2
\end{align}
is a fractional partition that does not correspond to any partition or linear combination of partitions.

Note that $\lambda_{\mathcal B}$ was defined for $\emptyset\subsetneq\mathcal{B}\subsetneq [k]$. Alternatively, one may include $\mathcal{B}=\emptyset$ and $\mathcal{B}=[k]$ by requiring $\lambda_\emptyset=\lambda_{[k]}=0$, and we use this convention below. Observe that $\sum_{{\mathcal B}}\lambda_{\mathcal B}\geq 1$ in any fractional partition. 

\begin{definition}[Multivariate Information] \label{defMI} Let $k\geq 2$ be a natural number. Let $(\lambda_\mathcal B: \mathcal B\in \mathsf{B})$ be a fractional partition of $[k]$. The $\lambda$-multivariate information of variables $X_i$, $i\in [k]$, conditioned on a variable $T$ is
\begin{align}
    I_{\lambda}(X_1;X_2;\cdots;X_k|T)
    & = H(X_{[k]}|T) - \sum\nolimits_{\mathcal B\subsetneq [k]}\lambda_{\mathcal B} H(X_{\mathcal B}|X_{{\mathcal B}^c},T) \nonumber \\
    & = \left(1-\sum\nolimits_{\mathcal B\subsetneq [k]}\lambda_{\mathcal B}\right) H(X_{[k]}|T) + \sum\nolimits_{\mathcal B\subsetneq [k]}\lambda_{\mathcal B} H(X_{{\mathcal B}^c}|T).
    \label{eq:defMI}
\end{align}
\end{definition}

For example, for $k=2$ we recover the conditional mutual information $I_\lambda(X_1;X_2|T)=I(X_1;X_2|T)$. For $k=3$ and the choice \eqref{eq:frac-example} we obtain (see Appendix~\ref{subsec:appendixA1})
\begin{align}
    I_{\lambda}(X_1;X_2;X_3)
    & = H(X_1,X_2,X_3) - \frac12 \big( H(X_1,X_2|X_3) + H(X_1,X_3|X_2) + H(X_2,X_3|X_1) \big) \nonumber \\
    & = \frac12 \big( H(X_1) + H(X_2) + H(X_3) - H(X_1,X_2,X_3) \big).
\end{align}
Further basic properties of $I_{\lambda}$ are discussed in Appendix \ref{sec:appendixA}. 

\begin{remark} Definition~\ref{defMI} can be traced to \cite[Equation 6]{csiszar2008secrecy} (that refers to \cite{Chung-etal-SIAM88,madiman-barron-IT07}) where the minimum of $I_{\lambda}$ over all fractional partitions $\lambda$ is related to the secret key rate. This minimum is called multivariate information in \cite{chan2015multivariate} and shared information in \cite[Remark 3.11]{narayan2016multiterminal}; see also \cite{5625626}. We instead consider $ I_{\lambda}$ for each fixed choice of $\lambda$ as a multivariate information. 
\end{remark}

If $T$ is independent of $X_{[k]}$, we have
\begin{align}\label{eq:Ilambda}
    I_{\lambda}(X_1;X_2;\cdots;X_k)
    = \left(1-\sum\nolimits_{\mathcal B}\lambda_{\mathcal B}\right) H(X_{[k]}) + \sum\nolimits_{\mathcal B}\lambda_{\mathcal B} H(X_{{\mathcal B}^c}).
\end{align}
Since $\lambda_{\mathcal B}\geq 0$ and $\sum_{\mathcal B}\lambda_{\mathcal B}\geq 1$, the coefficient of $H(X_{[k]})$ is non-positive while the coefficient of $H(X_{{\mathcal B}})$ for any proper subset $\mathcal B$ is non-negative. Consequently,  we cannot express
\begin{equation}
   I(X_1;X_2)=H(X_1)+H(X_2)-H(X_1,X_2)
\end{equation}
as special case of $I_{\lambda}(X_1;X_2;\cdots;X_k)$ if $k>2$, as the coefficient of $H(X_1,X_2)$ is non-negative. We are thus motivated to consider a weighted version of $I_{\lambda}$ for different subsets of the variables.

\begin{definition}\label{defnneqw2}
   Let $k\geq 2$ be a natural number. For every subset $\mathcal{U}\subseteq[k]$ of cardinality $|\mathcal{U}|\geq 2$, take a fractional partition $\lambda^{\mathcal{U}}_{\mathcal{B}}$ for indices in $\mathcal{U}$ such that
\begin{align}
    \sum_{{\mathcal B\subsetneq \mathcal{U}}:\,i\in {\mathcal B}}\lambda^{\mathcal{U}}_{\mathcal{B}}&=1,\quad\forall\, i\in\mathcal{U}.
    \label{eqnEE2d}
\end{align}
    Writing $\mathcal{U}=\{i_1, i_2,\cdots, i_u\}\subseteq[k]$, the multivariate information using the fractional partition $\lambda^{\mathcal{U}}_{\mathcal{B}}$ is
\begin{align}
    I_{\lambda^{\mathcal{U}}}(X_{i_1};X_{i_2};\cdots;X_{i_u})
\end{align}
    where now the $\mathcal{B}^c$ in \eqref{eq:Ilambda} are the complements of $\mathcal{B}$ in $\mathcal{U}$.
    Let $\omega_\mathcal{U}$ be a non-negative weight assigned to set $\mathcal{U}$ such that $\sum_\mathcal{U}\omega_\mathcal{U}=1$.
    Then the $(\omega, \lambda^{\cdot})$ multivariate information among $X_1, \cdots, X_k$ is defined as
\begin{align}
    I_{\omega, \lambda^{\cdot}}(X_1;X_2;\cdots;X_k) \triangleq \sum\nolimits_\mathcal{U}\omega_\mathcal{U}\cdot I_{\lambda^\mathcal{U}}(X_{i_1};X_{i_2};\cdots;X_{i_u}).
\end{align}
\end{definition}

Note that setting 
$\omega_\mathcal{U}=0$ for $\mathcal{U}\neq \mathcal{U}^*$, and $\omega_{\mathcal{U}^*}=1$ recovers the ordinary $\lambda$-multivariate information on the subset $\mathcal{U}^*$. Thus, the weights $\omega_\mathcal{U}$ allow defining a multivariate information that specializes to $I(X_1;X_2)$ by setting $\omega_{\{1,2\}}=1$ and $\omega_\mathcal{U}=0$ for $\mathcal{U}\neq \{1,2\}$.

\begin{remark}
We utilize the $(\omega, \lambda^{\cdot})$ multivariate information to obtain tight upper bounds for the source model with silent terminals in Section~\ref{subsubsec:kterminalssec} and Appendix~\ref{appendixC}.
\end{remark}

\section{A General Dependence Balance Inequality}
\label{subsec:DB-bound}

The following bound is key to proving our main results. 

\begin{lemma}[General DB constraint]
\label{lemma1}
Let $k\geq 2$ and $n\geq 1$ be natural numbers. Consider random variables $W_{i}, X_{ij}, Y_{ij}$ and $Z_j$ for $i\in [k], j\in[n]$ satisfying
\begin{align}
    X_{ij} &= f_{ij}(W_i, Y_{i[j-1]}), \qquad i\in[k],~j\in [n] \label{eqnconditionx}
\end{align}
for some functions $f_{ij}(\cdot)$. Consider a set $\mathcal{U}\subseteq [k]$ with $|\mathcal{U}|=u\geq 2$ and assume the Markov chains
\begin{align}
    W_{\mathcal U} Y_{\mathcal U}^{j-1}
    \mkv X_{[k]j} Z^{j-1}
    \mkv Y_{\mathcal Uj} Z_{j},
    \quad j\in [n].
    \label{eq:DB-Markov-chain2}
\end{align}
Write $\mathcal{U}=\{i_1, i_2, \cdots, i_u\}$ and let $\lambda=(\lambda_\mathcal{B}:\mathcal{B}\subsetneq \mathcal{U})$ be a fractional partition of $\mathcal{U}$. We have
\begin{align}
    &I_{\lambda}(W_{i_1}Y_{i_1}^n;W_{i_2}Y_{i_2}^n;\cdots;W_{i_u}Y_{i_u}^n|Z^n) -I_{\lambda}(W_{i_1};W_{i_2};\cdots;W_{i_u}) \nonumber \\
    & \le \sum_{j\in[n]} \bigg[I_{\lambda}(X_{i_1j}Y_{i_1j};X_{i_2j}Y_{i_2j};\cdots;X_{i_uj}Y_{i_uj}|Z^{j-1},Z_j) - I_{\lambda}(X_{i_1j};X_{i_2j};\cdots;X_{i_uj}|Z^{j-1}) \nonumber \\
    &\qquad \qquad - \left(1-\sum\nolimits_{\mathcal B\subsetneq \mathcal{U}}\lambda_{\mathcal B}\right) I(X_{[k]j} ; Z_j Y_{\mathcal{U}j} | Z^{j-1} X_{\mathcal{U}j})\Bigg]
    \label{eq:DB-bound2}
\end{align}
where we recall that $X_{\mathcal{U}j}=(X_{i_1j},\cdots,X_{i_uj})$ and similarly for $Y_{\mathcal{U}j}$. Observe that choosing $\mathcal U=[k]$ makes the last mutual information term in \eqref{eq:DB-bound2} vanish.
\end{lemma}

\begin{proof}
One may assume $\mathcal{U}=[u]$ without loss of generality. Now expand
\begin{align}
    & I_{\lambda}(W_{1}Y_{1}^n ; W_{2}Y_{2}^n; \cdots; W_{u}Y_{u}^n \big| Z^n) - I_{\lambda}(W_{1};W_{2}; \cdots; W_{u}) \nonumber \\
    & \overset{(a)}{=} \sum\nolimits_{j \in [n]} \bigg[ I_{\lambda}(W_{1}Y_{1}^j ; W_{2}Y_{2}^j ; \cdots ; W_{u}Y_{u}^j \big| Z^j ) - I_{\lambda}(W_{1}Y_{1}^{j-1} ; W_{2}Y_{2}^{j-1} ; \cdots; W_{u}Y_{u}^{j-1} \big| Z^{j-1}) \bigg] \nonumber \\
    & \overset{(b)}{=} 
    \sum\nolimits_{j \in [n]} \bigg[ I_{\lambda}(W_{1}Y_{1}^jX_{1j} ; W_{2}Y_{2}^jX_{2j} ; \cdots ; W_{u}Y_{u}^jX_{uj} \big| Z^j )\nonumber \\
    & \qquad\qquad \qquad\color{black}- I_{\lambda}(W_{1}Y_{1}^{j-1}X_{1j} ; W_{2}Y_{2}^{j-1}X_{2j} ; \cdots; W_{u}Y_{u}^{j-1}X_{uj}\big| Z^{j-1}) \bigg] \nonumber
    \\&\color{black}\overset{(c)}{=} 
    \sum_{j\in[n]} \bigg[I_{\lambda}(X_{1j}Y_{1j};X_{2j}Y_{2j};\cdots;X_{uj}Y_{uj}|Z^{j-1},Z_j) - I_{\lambda}(X_{1j};X_{2j};\cdots;X_{uj}|Z^{j-1}) \nonumber \\
    &\qquad \qquad\color{black} - \left(1-\sum\nolimits_{\mathcal B\subsetneq [u]}\lambda_{\mathcal B}\right)  I(W_{[u]} Y_{[u]}^{j-1} ; Z_jY_{[u]j} \big| Z^{j-1} X_{[u]j} )
    \nonumber \\
    &\qquad \qquad\color{black}-\sum\nolimits_{\mathcal B} \lambda_{\mathcal B}I(W_{{\mathcal B}^c}Y_{{\mathcal B}^c}^{j-1} ; Z_j Y_{{\mathcal B}^cj} \big| Z^{j-1} X_{\mathcal{B}^cj})\bigg]
    \label{eqn15just}\\
    & \color{black}\overset{(d)}{\leq} 
    \sum_{j\in[n]} \bigg[I_{\lambda}(X_{1j}Y_{1j};X_{2j}Y_{2j};\cdots;X_{uj}Y_{uj}|Z^{j-1},Z_j) - I_{\lambda}(X_{1j};X_{2j};\cdots;X_{uj}|Z^{j-1}) \nonumber \\
    & \qquad \qquad \color{black}- \left(1-\sum\nolimits_{\mathcal B\subsetneq [u]}\lambda_{\mathcal B}\right) I(X_{[k]j} ; Z_jY_{[u]j} | Z^{j-1} X_{[u]j})\Bigg]
    \label{eqn16justn}
    \end{align}\color{black}
where step $(a)$ follows by telescoping and step $(b)$ by~\eqref{eqnconditionx}. Step $(c)$ follows by writing the expression as
\begin{align}
    &\sum\nolimits_{j \in [n]} \bigg[ \big(1-\sum\nolimits_{\mathcal B} \lambda_{\mathcal B}\big) \bigg( H( W_{[u]}Y_{[u]}^{j} X_{[u]j} \big| Z^{j})
    - H(W_{[u]}Y_{[u]}^{j-1} X_{[u]j} \big| Z^{j-1} ) \bigg) \nonumber \\
    & \qquad\qquad\qquad +\sum\nolimits_{\mathcal B} \lambda_{\mathcal B} \bigg( H(W_{{\mathcal B}^c}Y_{{\mathcal B}^c}^{j} X_{{\mathcal B}^cj} \big| Z^j)
    - H(W_{{\mathcal B}^c}Y_{{\mathcal B}^c}^{j-1} X_{{\mathcal B}^cj} \big| Z^{j-1}) \bigg)\bigg]
    \label{eq:DB-proof-expand1}
\end{align}
and expanding the first and second entropy differences in \eqref{eq:DB-proof-expand1} as
\begin{align}
    H( X_{[u]j} Y_{[u]j} \big| Z^{j})
    - H( X_{[u]j} \big| Z^{j-1})
    - I(W_{[u]} Y_{[u]}^{j-1} ; Z_jY_{[u]j} \big| Z^{j-1} X_{[u]j} )
    \label{eq:DB-proof-expand2} \\
    H(X_{\mathcal{B}^cj} Y_{\mathcal{B}^cj} \big| Z^j)
    - H(X_{\mathcal{B}^cj} \big| Z^{j-1})
    - I(W_{{\mathcal B}^c}Y_{{\mathcal B}^c}^{j-1} ; Z_j Y_{{\mathcal B}^cj} \big| Z^{j-1} X_{\mathcal{B}^cj}).
    \label{eq:DB-proof-expand3}
\end{align}
Step $(d)$ follows by upper bounding the term $I(W_{[u]} Y_{[u]}^{j-1} ; Z_jY_{[u]j} \big| Z^{j-1} X_{[u]j} )$ with
\begin{align}
    I(W_{[u]} Y_{[u]}^{j-1} X_{[k]j} ; Z_jY_{[u]j} \big| Z^{j-1} X_{[u]j} ) 
    = I(X_{[k]j} ; Z_jY_{[u]j} \big| Z^{j-1} X_{[u]j} )
\end{align}
where the equality follows by~\eqref{eq:DB-Markov-chain2}, and by using the non-negativity of mutual information. 
\end{proof}
\color{black}
\subsection{Discussion}
\label{subsec:auxiliary-discussion}

\subsubsection{Auxiliary Random Variables and Receivers}
\label{subsubsec:auxiliary-rvs}

The dependence balance bound in Lemma \ref{lemma1} involves auxiliary random variables $Z_j$, $j\in[n]$. Roughly speaking, auxiliary random variables can be categorized as either ``transmitter-side'' or ``receiver-side''. The former were introduced by Cover for coding theorems and by Gallager \cite{gal74} for converse proofs, in both cases for broadcast channels. The adjective ``auxiliary'' is misleading for coding theorems because the variables usually represent concrete coded symbols, e.g., in superposition coding. In Gallager-type converse proofs, however, the auxiliary variables often involve past and/or future variables of the problem and may lack an intuitive interpretation.

Receiver-side auxiliary variables instead represent \emph{new or artificial} receivers that do not necessarily exist in the original problem. These receivers do not communicate or influence the messages, nor do they decode; they may be viewed as silent observers. For example, Ozarow found the rate-distortion region of the Gaussian two-description problem \cite{ozarow1980source} by introducing ``an artificial [random variable that] ... plays no apparent intuitive role in the encoding/decoding process, [but] provides the crucial lower bound in the proof." A notable special class of auxiliary receivers is \emph{genies} or enhanced receivers. For example, genies help to analyze the capacity of Gaussian interference channels, where treating interference as noise characterizes the sum capacity under specific weak interference conditions; see~\cite{skc09,mok09,anv09} and also~\cite{kra04,elk11}. Other examples of auxiliary receivers are given in \cite{hekstra1989dependence,gohari2010information,wagner2008improved,liu2009capacity,Yu2018}. 

\subsubsection{Capacity Region Surface}
\label{subsec:capacity-surface}

Let $\mathsf{C}\big(p(y_{[k]}|x_{[k]})\big)$ be the capacity region of a network with the channel $p(y_{[k]}|x_{[k]})$. The paper \cite{gohari2021outer} used auxiliary receivers to study the surface of $\mathsf{C}\big(p(y_{[k]}|x_{[k]})\big)$. More precisely, the curvature of $\mathsf{C}\big(p(y_{[k]}|x_{[k]})\big)$ with respect to variations in $p(y_{[k]}|x_{[k]})$ is based on comparing 
\[
   \mathsf{C}\big(p(y_{[k]}|x_{[k]})\big) \quad \text{and} \quad \mathsf{C}\big(p(z_{[k]}|x_{[k]})\big)
\]
for two distinct channels, $p(y_{[k]}|x_{[k]})$ and $p(z_{[k]}|x_{[k]})$. Treating $p(z_{[k]}|x_{[k]})$ as an \emph{auxiliary channel}, one can derive an outer bound on $\mathsf{C}\big(p(y_{[k]}|x_{[k]})\big)$ if the following conditions are met:
\begin{itemize} \itemsep 0pt
    \item The gap between $\mathsf{C}\big(p(y_{[k]}|x_{[k]})\big)$ and $\mathsf{C}\big(p(z_{[k]}|x_{[k]})\big)$ can be characterized;
    \item A suitable outer bound on $\mathsf{C}\big(p(z_{[k]}|x_{[k]})\big)$ is available.
\end{itemize}
For instance, genie-aided proofs select $p(z_{[k]}|x_{[k]})$ as an enhanced version of $p(y_{[k]}|x_{[k]})$ so that $\mathsf{C}\big(p(y_{[k]}|x_{[k]})\big)$ is a subset of $\mathsf{C}\big(p(z_{[k]}|x_{[k]})\big)$, and so $p(z_{[k]}|x_{[k]})$ belongs to a class of channels for which the capacity can be characterized. However, the auxiliary receiver $Z_{[k]}$ need not be an enhanced version of $Y_{[k]}$. This perspective, combined with additional insights (such as modified manipulations of the past or future of the auxiliary receiver variable), lets one systematically derive outer bounds for broadcast, interference, and relay channels \cite{gohari2021outer}; see \cite{wen2024new,chen2025differential} for recent developments.

\subsubsection{Two Choices}
\label{subsec:two-choices}

We consider only auxiliary receivers and make the following choices; see \cite{gohari2021outer}.
\begin{itemize} \itemsep 0pt
    \item \emph{Modify Inactive Terminals:} We modify only the output variables $Y_i$ of inactive terminals, i.e., those with input alphabets having $|\mathcal{X}_i| = 1$. Specifically, we require $Z_i = Y_i$ for all terminals $i$ where $X_i$ is constant. This ensures that any encoding strategy designed for $p(y_{[k]}|x_{[k]})$ applies to $p(z_{[k]}|x_{[k]})$. For example, in key agreement problems with a passive eavesdropper, replacing the eavesdropper’s channel output with an auxiliary variable preserves compatibility with existing encoding schemes. We refer to Section \ref{sec:multiterminal-wiretap-channels}, which introduces the auxiliary receiver $T$.
    \item \emph{Output Enhancement:} Choose $Z_i$ as an \emph{enhanced} version of $Y_i$, e.g., so that $Y_i$ is a function of $Z_i$. Encoding strategies for $p(y_{[k]}|x_{[k]})$ then remain valid for $p(z_{[k]}|x_{[k]})$ since terminals may discard the enhanced information in $Z_i$. 
    Section \ref{sec:mt-networks} generalizes this approach by using multiple auxiliary receivers, rather than relying on a single one. 
\end{itemize}

We apply Lemma \ref{lemma1} with these choices. Specifically, Section \ref{sec:mt-networks} uses output enhancement to improve the cut-set bound for scalar Gaussian relay channels, rather than modifying inactive terminals as in \cite{gohari2021outer}. Note that \cite{gohari2021outer} used both approaches to develop outer bounds for broadcast channels. One may also combine the two ideas above by selecting multiple auxiliary receivers in Sections \ref{sec:multiterminal-wiretap-channels} and \ref{sec:mt-networks}.
\begin{remark}
  An example of how a sequence of auxiliary receivers can improve bounds is given in  \cite{Yuval2024}. See also Remark \ref{remarkbeyond} below for a recent attempt to go beyond the above two types of auxiliary receivers.
\end{remark}

\subsubsection{Continuous Random Variables}
\label{subsubsec:continuous-rvs}

Definition \ref{defMI} writes multivariate information using discrete entropy, which illustrates certain symmetries of the measure. {\color{black} More generally, for continuous or mixed discrete-continuous random variables, one may define $\lambda$-multivariate information as follows:
\begin{align}
    I_\lambda(X_1;X_2;\cdots;X_k | T)
    & = \sum\nolimits_{{\mathcal B\subsetneq [k]}} \sum\nolimits_{i\in {\mathcal B}} \lambda_{\mathcal B}\,
    I(X_i;X_{\mathcal B^c} | T, X^{i-1}).
    \label{eq:defMI-2}
\end{align}
The expression \eqref{eq:defMI-2} is less intuitive than \eqref{eq:defMI}, but the advantage is that it involves mutual information terms only. For example, for discrete random variables, we recover \eqref{eq:defMI} via
\begin{align}
    H(X_{[k]})
    & =\sum\nolimits_{i} \left( \sum\nolimits_{{\mathcal B}: i\in {\mathcal B}} \lambda_{\mathcal B} \right) H(X_i|X^{i-1}) \nonumber \\
       & =\sum\nolimits_{{\mathcal B}} \sum\nolimits_{i\in {\mathcal B}} \lambda_{\mathcal B}\, H(X_i|X^{i-1}) \nonumber \\
    & = \sum\nolimits_{{\mathcal B}} \sum\nolimits_{i\in {\mathcal B}} \lambda_{\mathcal B}\, H(X_i|X_{[i-1]\cap \mathcal B}, X_{\mathcal B^c}) +\sum\nolimits_{{\mathcal B}} \sum\nolimits_{i\in {\mathcal B}} \lambda_{\mathcal B}\, I(X_i;X_{[i-1]\cap \mathcal B}, X_{\mathcal B^c}|X^{i-1})\nonumber \\
    &=\sum\nolimits_{{\mathcal B}} \lambda_{\mathcal B} \, H(X_{\mathcal B}|X_{{\mathcal B}^c})+\sum\nolimits_{{\mathcal B}} \sum\nolimits_{i\in {\mathcal B}} \lambda_{\mathcal B}\, I(X_i; X_{\mathcal B^c}|X^{i-1}).
    \label{eq:defMI-2a}
\end{align}

The generalization of Lemma~\ref{lemma1} to mixed discrete--continuous random variables follows by applying similar proof steps and establishing the identity in~\eqref{eqn15just} using the definition~\eqref{eq:defMI-2}. More precisely, to establish~\eqref{eqn15just}, we wish to show
\begin{align}
    &\sum\nolimits_{j \in [n]} \bigg[ I_{\lambda}(W_{1}Y_{1}^jX_{1j} ; W_{2}Y_{2}^jX_{2j} ; \cdots ; W_{u}Y_{u}^jX_{uj} \big| Z^j )\nonumber \\
    & \qquad\qquad \qquad\color{black}- I_{\lambda}(W_{1}Y_{1}^{j-1}X_{1j} ; W_{2}Y_{2}^{j-1}X_{2j} ; \cdots; W_{u}Y_{u}^{j-1}X_{uj}\big| Z^{j-1}) \bigg] \nonumber
    \\&=\sum_{j\in[n]} \bigg[I_{\lambda}(X_{1j}Y_{1j};X_{2j}Y_{2j};\cdots;X_{uj}Y_{uj}|Z^{j-1},Z_j) - I_{\lambda}(X_{1j};X_{2j};\cdots;X_{uj}|Z^{j-1}) \nonumber \\
    &\qquad \qquad\color{black} - \left(1-\sum\nolimits_{\mathcal B\subsetneq [u]}\lambda_{\mathcal B}\right)  I(W_{[u]} Y_{[u]}^{j-1} ; Z_jY_{[u]j} \big| Z^{j-1} X_{[u]j} )
    \nonumber \\
    &\qquad \qquad\color{black}-\sum\nolimits_{\mathcal B} \lambda_{\mathcal B}I(W_{{\mathcal B}^c}Y_{{\mathcal B}^c}^{j-1} ; Z_j Y_{{\mathcal B}^cj} \big| Z^{j-1} X_{\mathcal{B}^cj})\bigg].\label{neqtp2}
\end{align}
See Appendix \ref{appendixNew2} for details.}

\section{Multiterminal Wiretap Channels}
\label{sec:multiterminal-wiretap-channels}

Consider a memoryless network with the channel $p(y_{[k]}|x_{[k]})$ where the $X_i$ and $Y_i$ are the respective channel inputs and outputs of the $i$-th transceiver for $i\in[k]$. In this paper, we are interested in \emph{common/shared randomness} that can be generated among the terminals. Common randomness includes reliable communication since messages sent between terminals can be interpreted as producing shared randomness. Common randomness may also be generated through correlated channel noise. 

We include secrecy through a passive wiretapper with channel output $z$ and write the $(k+1)$-terminal network model as $p(y_{[k]},z|x_{[k]})$. The common randomness should be kept hidden from the wiretapper, i.e., the common randomness shared among a group of terminals can serve as a secret key. For example, the problem of generating multiple keys among different sets of terminals has been studied in~\cite{zhang2017multiple}. While capacity results are known for special cases, e.g., \cite{zhang2017multi},  no general outer bound on the trade-off of key rates is known.  We provide an upper bound that unifies several results in the literature. Some results involve channels with feedback; for example, we study the source and channel models that include noiseless public feedback links as in \cite[Chapter 22]{elk11}. To incorporate feedback, we consider a model where, in addition to the main channel $p(y_{[k]},z|x_{[k]})$, there are $L$ parallel channels $q_\ell(y_{[k]},z|x_{[k]})$ for $\ell=1,2,\cdots, L$ that the legitimate terminals can use. 

\subsection{System Model}
\label{subsec:system-model}

The main channel $p(y_{[k]},z|x_{[k]})$ has input alphabets $\mathcal{X}_i$ and output alphabets $\mathcal{Y}_i$ and $\mathcal{Z}$. The parallel channels $q_\ell(y_{[k]},z|x_{[k]})$ have input alphabets $\mathcal{X}_i^{(\ell)}$ and output alphabets $\mathcal{Y}_i^{(\ell)}$ and $\mathcal{Z}^{(\ell)}$, $\ell\in[L]$, where $x_i\in \mathcal{X}_i^{(\ell)}$, $y_i\in \mathcal{Y}_i^{(\ell)}$ and $z\in\mathcal{Z}^{(\ell)}$.\footnote{By writing $p(y_{[k]},z|x_{[k]})$ and $q_\ell(y_{[k]},z|x_{[k]})$, the input/output alphabet sets of the channels are formally the same. This restriction is unnecessary for the proofs, i.e., different channels can have different input/output alphabets.
}
For instance, a noiseless public discussion channel can be modeled by the parallel channel $Y_1=\cdots=Y_k=Z = X_{[k]}$.

A code of length $n$ is defined as follows: at time instance $j\in[n]$, the $i$-th legitimate terminal uses a local (private) random variable $W_i$ and transmits the symbol
\begin{align}
X_{ij} &= f_{ij}(W_i, Y_{i[j-1]}), \qquad i\in [k],\; j\in [n] \label{eqnDefCode}
\end{align}
over the main channel $p(y_{[k]},z|x_{[k]})$ or over one of the parallel channels  $q_\ell(y_{[k]},z|x_{[k]})$; the type of channel (main or parallel) used at time $j$ is known and fixed a priori. Here, $n$ is the number of transmissions and $f_{ij}(\cdot)$ is the encoding function of terminal $i$ at time $j$, and $Y_{ij}$ is the channel output of terminal $i$ at time $j$. The random string $Y_{i[j-1]}$, sometimes denoted by $Y_i^{j-1}$, is the string of past outputs of terminal $i$ at time $j$.
Suppose the main channel is used $m\leq n$ times during the $n$ transmissions, while the channel
$q_\ell(y_{[k]}, z|x_{[k]})$ is used $m_\ell$ times for $\ell\in[L]$. Thus, we have
$m+\sum_{\ell=1}^Lm_\ell=n$. We call \begin{align}
    \alpha_\ell = m_\ell \big/ m
    \label{def-alpha}
\end{align}
the \emph{rate of channel use} for $q_\ell(y_{[k]},z|x_{[k]})$.

After transmission, every subset $\mathcal V\subseteq [k]$ of terminals ($|\mathcal V|\geq 2$) generates a shared key of rate $R_{\mathcal V}$, i.e., the $i$-th terminal generates 
\begin{align}
    S_{i,\mathcal{V}} = g_{i,\mathcal{V}}(W_i, Y_{i[n]})\label{eqgenS}
\end{align}
for every $\mathcal{V}$ containing $i$ where  $S_{i,\mathcal{V}}\in [2^{mR_\mathcal{V}}]$. For an $(n,\epsilon)$ code, we require existence of random variables
\begin{align}\label{eqnAlphabet}
    S_{\mathcal{V}} \text{ with alphabet } [2^{mR_\mathcal{V}}], \quad \mathcal{V}\subseteq[k], |\mathcal V|\geq 2
\end{align}
that are (almost) mutually independent of each other and $Z^n$. Specifically, the following uniformity, reliability, independence, and security conditions must hold for the $S_{\mathcal{V}}$ and $S_{i,\mathcal{V}}$:
\begin{subequations}
\begin{align}
    \frac 1m H(S_{\mathcal{V}})
    & \geq R_\mathcal{V}-\epsilon
    \label{eqnCond-1} \\
    \mathbb{P}\left[\medcap_{i\in\mathcal{V}}
    \{S_{i,\mathcal{V}} = S_{\mathcal{V}}\}\right]
    & \geq 1-\epsilon
    \label{eqnCond-2} \\
    \frac1m \left( -H(\{S_{\mathcal{V}}:\mathcal{V}\subseteq [k]\}) + \sum\nolimits_{\mathcal{V}\subseteq [k]} H(S_{\mathcal{V}}) \right)
    & \leq \epsilon
    \label{eqnCond-3} \\
    \frac 1m I(\{S_{\mathcal{V}}:\mathcal{V}\subseteq [k]\};Z^n)
    & \leq \epsilon.
    \label{eqnCond-4}
\end{align}
\end{subequations}
Note the normalization factor $1/m$ rather than $1/n$. The non-negative number $R_\mathcal{V}$ is called the \emph{group secret key rate} for the subset $\mathcal{V}$. Given channel-use rates $\alpha_\ell\geq 0$ for $\ell\in [L]$, we are interested in the rates $R_\mathcal{V}$ that can be achieved for any $\epsilon>0$ as $m\rightarrow\infty$.

An important special case is when there is only one subset of terminals -- without loss of generality taken to be the first $u$ terminals -- that generate
the secret keys, i.e., $R_\mathcal{V}=0$ when $\mathcal{V}\neq [u]$. Thus, terminals $u+1, u+2, \cdots, k$ do not generate secret keys but can participate as \emph{helper terminals}. If we wish to keep the secret key private from the helper terminals, their outputs could be included in the eavesdropper's $Z$. 

Our model includes several special cases. 
\begin{itemize} \itemsep 0pt
\item \emph{Source model:} consider $k=2$ and let the main channel $X_1$ and $X_2$ be constants. The source model follows by adding a channel for public discussion with $\alpha_1\rightarrow\infty$, meaning public discussion is unrestricted. Similarly, the multiuser case studied in \cite{csiszar2008secrecy,gohari2010information1} 
is a special case of our model. The capacity of the source model is open in general; see \cite{8437913,8995629,vippathalla2021secret}.
\item \emph{Channel model:} consider $k=2$ and let the main channel $X_2$ and $Y_1$ be constants. The channel model follows by adding a channel for public discussion with $\alpha_1\rightarrow\infty$. Similarly, the multiuser case in \cite{csiszar2008secrecy,gohari2010information1} 
is a special case of our model. Also, we can include the MAC models in~\cite{csiszar2012secrecy,tyagi2013secret}, where each legitimate terminal is either a receiver or a transmitter, by setting the alphabets of $X_i$ or $Y_i$ to be constants.
\item \emph{Wiretap channels with a private feedback link:} A secure rate-limited feedback link as in \cite{Ardestanizadeh} is included by choosing $k=2$ and a parallel channel where $Y_2$ and $Z$ are constant while $p(y_1|x_{2})$ has a capacity equal to the desired feedback rate.
\item The channel model of \cite{9366098}  reduces to the model considered here if the parallel channels are public and available to all parties. 
\end{itemize}

\subsection{Special Case: Common Key with Free Public Discussion}
We begin with a special case and generalize in the next section. Consider $R_{\mathcal{V}} = 0$ for $\mathcal{V} \neq [k]$, i.e., only the entire set of terminals aims to create a common key $S_{[k]}$. The objective is to maximize the key rate $R_{[k]}$. Moreover, suppose free, noiseless public discussion is available to all terminals, modeled by a parallel channel with $Y_1 = \cdots = Y_k = Z = X_{[k]}$ and $\alpha_1\rightarrow\infty$.
Here, $X_{[k]}$ refers to the parallel channel inputs. For the main channel inputs, we consider two special cases.

\textbf{Case of $|\mathcal{X}_i| = 1$:}  
When $|\mathcal{X}_i| = 1$, i.e., the $X_i$'s are constants, the model reduces to the source model key agreement problem \cite{csiszar2008secrecy,gohari2010information1}. For $k=2$ users with one-way public communication from the first terminal, the secrecy capacity of the source model is given in \cite{AhlswedeCsiszar1993}.

\begin{definition}
Given a joint distribution $p_{A,B,C}$, the one-way secrecy capacity in the source model problem is defined as
\begin{align}
  S(A \rightarrow B \| C) = \max \left[ I(V; B | U) - I(V; C | U) \right]
\end{align}
where the maximum is over Markov chains $(U, V) \mkv A \mkv (B, C)$ satisfying cardinality bounds
$$|\mathcal{U}|\leq |\mathcal{A}|,\qquad |\mathcal{V}|\leq |\mathcal{A}|.$$
It is known that $S(A \rightarrow B \| C) \leq I(A; B | C)$ and $S(A \rightarrow B \| C) = 0$ when $B = C$.
\end{definition}

Let $S(Y_1; Y_2; \cdots; Y_k \| Z)$ be the supremum of the key rates $R_{[k]}$ using free public discussion. The current best upper bound for the source model and $k=2$ users \cite{7976410} is as follows. Let $T$ be an auxiliary receiver with conditional distribution $P_{T | Y_1, Y_2, Z}$.
 The paper \cite{7976410} showed that
\begin{align}
    S(Y_1; Y_2 \| Z) &\leq S(Y_1; Y_2 \| T) + S(Y_1, Y_2 \rightarrow T \| Z) .
    \label{eqn24}
\end{align}
Since $S(Y_1; Y_2 \| T)\leq I(Y_1;Y_2|T)$, we obtain the following bound for the source model and $k=2$ users:
\begin{align}
S(Y_1; Y_2 \| Z) &\leq I(Y_1; Y_2 | T) + S(Y_1, Y_2 \rightarrow T \| Z) \nonumber \\
&= I(Y_1; Y_2 | T) + \max_{(V, U) \mkv (Y_1, Y_2) \mkv (T, Z)} \left[ I(V; T | U) - I(V; Z | U) \right].
\label{eq:R2-upper-bound}
\end{align}
By using the arguments in \cite{7976410}, or Theorem \ref{thm1} in this paper, one can generalize \eqref{eq:R2-upper-bound} to any number of users, any conditional distribution $P_{T | Y_{[k]}, Z}$, and any fractional partition $\lambda$:
\begin{align}
S(Y_1; Y_2; \cdots; Y_k \| Z) \leq I_{\lambda}(Y_1; Y_2; \cdots; Y_k | T) + S(Y_{[k]} \rightarrow T \| Z). \label{eqnSM1}
\end{align}

Next, suppose $Z = \emptyset$ is a constant. If all terminals participate in public discussion, \cite{csiszar2008secrecy} shows that
\begin{align}
S(Y_1; Y_2; \cdots; Y_k \| \emptyset) = \min_{\lambda} I_{\lambda}(Y_1; Y_2; \cdots; Y_k). \label{eqnSZc}
\end{align}
Thus, the upper bound \eqref{eqnSM1} is tight when $T$ is chosen as a constant. The key capacity is also known if only a subset of parties participates in public discussion; see \cite[Theorem 6]{gohari2010information1} and Appendix \ref{appendixC} for the explicit expression. However, the capacity does not have the simple form given in \eqref{eqnSZc}. Nevertheless, after some manipulation (see Appendix \ref{appendixC}), we rewrite the expression from \cite[Theorem 6]{gohari2010information1} using $I_{\omega, \lambda^{\cdot}}(X_1; X_2; \cdots; X_k)$ as in Definition \ref{defnneqw2}. Our general upper bound involves $I_{\omega, \lambda^{\cdot}}(X_1; X_2; \cdots; X_k)$ rather than $I_{\lambda}(X_1; X_2; \cdots; X_k)$, as we aim to derive an upper bound that is tight for the source model with silent terminals in Section~\ref{subsubsec:kterminalssec} and Appendix~\ref{appendixC}.

\textbf{Case of arbitrary $|\mathcal{X}_i|$:} Permitting any $\mathcal{X}_i$ includes the channel model. Our main result in Theorem \ref{thm1} implies that for any fractional partition $\lambda$ and any conditional distribution $P_{T | X_{[k]}, Y_{[k]}, Z}$, the key rate is bounded from above by
\begin{align}
    \max \Big[ &I_{\lambda}(X_1 Y_1; X_2 Y_2; \cdots; X_k Y_k | T) - I_{\lambda}(X_1; X_2; \cdots; X_k) + S(X_{[k]} Y_{[k]} \rightarrow T \| Z) \Big] \label{eqnSM1Add}
\end{align}
where the maximum over all $p(x_{[k]})$. This formula generalizes \eqref{eqnSM1}.

The term $I_{\lambda}(X_1 Y_1; X_2 Y_2; \cdots; X_k Y_k | T) - I_{\lambda}(X_1; X_2; \cdots; X_k)$ can be interpreted as a DB term. The DB constraint was originally formulated for communication over MACs with feedback \cite{hekstra1989dependence}, which is a different setting from the source or channel models. Our work establishes a connection between these models.
\begin{remark}
    \label{remarkbeyond}
    It is interesting to relate \eqref{eqn24} to the discussion regarding the role of auxiliary receivers in Section \ref{subsec:capacity-surface} to characterize the surface of \( p_{Y_1,Y_2,Z} \mapsto S(Y_1; Y_2 \| Z) \).\end{remark}

\begin{remark}
    The following generalization of \eqref{eqn24} is conjectured in
    \cite[Section III]{abin2025source}: for any $p_{Y_1,Y_2,Z,Y'_1,Y'_2,T}$ we have
    \begin{align*}
    S(Y_1; Y_2 \| Z)-S(Y'_1; Y'_2 \| T) \leq S(Y_1, Y_2 \rightarrow T \| Z)+I(Y'_2T;Y_1|Y'_1)+I(Y'_1T;Y_2|Y'_2)+I(Y_1;Y_2|Y'_1Y'_2T).
\end{align*}
\end{remark}

\subsection{General Outer Bound}

Consider an auxiliary variable $T$ with alphabet $\mathcal{T}$ defined by a conditional distribution
$q(t\,|\,y_{[k]},z,x_{[k]})$. We refer to $T$ as an auxiliary receiver.

\begin{definition}
Consider a $(\omega, \lambda^{\cdot})$ in Definition \ref{defnneqw2} and a conditional distribution $q(t,y_{[k]},z\,|\,x_{[k]})$. Define
\begin{align}
    V_{\omega, \lambda^{\cdot}}(q(t,y_{[k]},z|x_{[k]}))
    = & \max\Big[ I_{\omega, \lambda^{\cdot}}
    (X_{1}Y_{1};X_{2}Y_{2};\cdots;X_{k}Y_{k}|T)- I_{\omega, \lambda^{\cdot}}(X_{1};X_{2};\cdots;X_{k}) \nonumber \\
    & \qquad\qquad - \sum\nolimits_{\mathcal{U}}\omega_\mathcal{U}\left(1-\sum\nolimits_{\mathcal B\subsetneq \mathcal{U}}\lambda^{\mathcal{U}}_{\mathcal B}\right)I(X_{[k]};Y_{\mathcal{U}},T|X_{\mathcal{U}}) \nonumber \\
    & \qquad\qquad+S(X_{[k]}Y_{[k]}\rightarrow T\|Z) \Big]
\end{align}
where the maximum is over all $p(x_{[k]})$.
\end{definition}
\begin{remark}
    One may replace $S(X_{[k]}Y_{[k]}\rightarrow T\|Z)$ by its upper bound $I(X_{[k]}Y_{[k]};T|Z)$ to obtain a simple upper bound on $V_{\omega, \lambda^{\cdot}}(q(t,y_{[k]},z\,|\,x_{[k]}))$.
\end{remark}

\begin{remark}\label{rmk7}
Consider $T=Z$, $\omega_{[k]}=1$, and $\omega_\mathcal{U}=0$ when $\mathcal{U}\neq [k]$. Let $\lambda$ be a fractional partition corresponding to $[k]$. We obtain
\begin{equation}
    V_{\omega, \lambda^{\cdot}}(q(t,y_{[k]},z|x_{[k]}))
    = \max \Big[ I_{\lambda}(X_{1}Y_{1};X_{2}Y_{2};\cdots;X_{k}Y_{k}|Z) -  I_{\lambda}(X_{1};X_{2};\cdots;X_{k} )\Big]
\end{equation}
where the maximum is over all $p(x_{[k]})$.
\end{remark}

We can now state our main upper bound.

\begin{theorem}\label{thm1}
Consider the main channel $p(y_{[k]},z\,|\,x_{[k]})$ and  $L$ parallel channels $q_\ell(y_{[k]},z\,|\,x_{[k]})$, $\ell\in[L]$, along with channel use rates $\alpha_\ell$
in \eqref{def-alpha}. Take auxiliary receivers $p(t\,|\,y_{[k]},z,x_{[k]})$ and $q_\ell(t\,|\,y_{[k]},z,x_{[k]})$ $(\ell=1,2,\cdots, L)$ for the main and parallel channels, respectively. 
 The group secret key rates $R_\mathcal{V}$ for $\mathcal{V}\subseteq[k]$ are achievable only if for any $(\omega, \lambda^{\cdot})$ (see Definition \ref{defnneqw2}) we have
\begin{align}
    & \sum_{\mathcal V}R_\mathcal V\left( \sum_{\mathcal{U}: \mathcal{V}\cap\mathcal{U} \neq \emptyset} \omega_\mathcal{U} \left(1 -\sum_{\mathcal B \subsetneq \mathcal{U}:~ \mathcal{V}\cap(\mathcal U - \mathcal B) = \emptyset} \lambda^\mathcal{U}_{\mathcal B}\right)\right) \nonumber \\
    & \leq V_{\omega, \lambda^{\cdot}} \big( p(y_{[k]},z|x_{[k]})p(t|x_{[k]},y_{[k]},z) \big) + \sum\nolimits_{\ell\in[L]} \alpha_\ell V_{\omega, \lambda^{\cdot}} \big( q_\ell(y_{[k]},z|x_{[k]}) q_\ell(t|x_{[k]},y_{[k]},z) \big). \label{eqntheB}
\end{align}
For the inner sum, if there is no $\mathcal B\subsetneq\mathcal{U}$ such that $ \mathcal{V}\cap(\mathcal U-\mathcal B)=\emptyset$, we take the sum to be zero.
\end{theorem}

Theorem~\ref{thm1} is proved in Section \ref{subsec:proof1} using Lemma \ref{lemma1} in Section \ref{subsec:DB-bound}. Intuitively, the expression
$$V_{\omega, \lambda^{\cdot}}\big(
p(y_{[k]},z|x_{[k]})p(t|x_{[k]},y_{[k]},z)\big)$$
is an upper bound on the contribution of the main channel to the total secret key, while
$$V_{\omega, \lambda^{\cdot}}\big(
q_\ell(y_{[k]},z|x_{[k]}) q_\ell(t|x_{[k]},y_{[k]},z) \big)$$
is an upper bound on the contribution of the $\ell$-th parallel channel.

\begin{remark}
The upper bound has a symmetric form in terms of $p(y_{1}, y_2, \cdots, y_k,z|x_1, x_2, \cdots, x_k)$ and the parallel channels $q_\ell(y_{1}, y_2, \cdots, y_k,z|x_1, x_2, \cdots, x_k)$. Suppose $\alpha_\ell\rightarrow\infty$, i.e., the parallel channel can be used as often as desired. Then, using \eqref{eqntheB} when $\alpha_\ell\rightarrow\infty$, one is restricted to $\omega, \lambda^{\cdot}$ for which
\begin{align}
    V_{\omega, \lambda^{\cdot}}\big(q_\ell(y_{[k]},z|x_{[k]})q_\ell(t|x_{[k]},y_{[k]},z)\big)=0.\label{eqnacf}
\end{align}
One can see this restriction explicitly when we specialize the general upper bound to the source model with silent terminals in Appendix \ref{appendixC}. If we consider noiseless or noisy parallel channels of finite capacity and assume $\alpha_\ell$ to be finite, our choice of $\omega, \lambda^{\cdot}$ is no longer required to satisfy \eqref{eqnacf}.
\end{remark}

\begin{remark}
Consider an auxiliary receiver $T$ described by $q(t|y_{[k]},z,x_{[k]})$. Then $V_{\omega, \lambda^{\cdot}}(q(t,y_{[k]},z|x_{[k]}))$ is computable if the $X_i$'s have finite alphabets. Thus, any choice of auxiliary receivers leads to a computable upper bound. Computing the best possible lower bound requires minimizing $V_{\omega, \lambda^{\cdot}}(q(t,y_{[k]},z|x_{[k]}))$ over all $q(t|y_{[k]},z,x_{[k]})$. The optimization will be an inf-max problem, and no cardinality bound on the alphabet of $T$ is known, even for the source model problem; see \cite{7976410}. 
\end{remark}

\begin{corollary}\label{corr9} Consider $\omega_{[k]}=1$ and $\omega_\mathcal{U}=0$ when $\mathcal{U}\neq [k]$.  Let $\lambda$ be a fractional partition for $[k]$. Then the group secret key rates $R_\mathcal{V}$ for $\mathcal{V}\subseteq[k]$ are achievable only if 
\begin{align}
    & \sum_{\mathcal V} R_\mathcal V \left(1-\sum_{\mathcal B:\,\mathcal V\subseteq \mathcal{B} \subsetneq [k]}\lambda_{\mathcal B}\right) \nonumber \\
    & \leq V_{\omega, \lambda^{\cdot}} \big( p(y_{[k]},z|x_{[k]}) p(t|x_{[k]},y_{[k]},z)\big) + \sum\nolimits_{\ell\in[L]} \alpha_\ell V_{\omega, \lambda^{\cdot}} \big( q_\ell(y_{[k]},z|x_{[k]}) q_\ell(t|x_{[k]},y_{[k]},z)\big).
    \label{eq:corr1}
\end{align}
where
\begin{align}
    V_{\omega, \lambda^{\cdot}}(q(t,y_{[k]},z|x_{[k]}))
    = \max \Big[  & I_{\lambda}(X_{1}Y_{1};X_{2}Y_{2};\cdots;X_{k}Y_{k}|T) 
    \nonumber \\
    & - I_{\lambda}(X_{1};X_{2};\cdots;X_{k}) + S(X_{[k]}Y_{[k]}\rightarrow T\|Z) \Big]
\end{align}
and the maximum is over all $p(x_{[k]})$.

\end{corollary}

The upper bound in Theorem~\ref{thm1} is rather general. 
Section~\ref{subsec:previous-results} demonstrates its versatility by recovering several known upper bounds as special cases, e.g., the bounds \eqref{eqnSM1} and \eqref{eqnSM1Add}.
We further use Theorem \ref{thm1} to derive a novel upper bound for a new setting in Section~\ref{subsec:new-bound-randomness-generation}.

\subsection{Proof of Theorem \ref{thm1}}
\label{subsec:proof1}

We first derive some consequences of \eqref{eqnCond-1}-\eqref{eqnCond-4}.
Observe that \eqref{eqnAlphabet} gives $|S_{\mathcal{V}}|=2^{mR_\mathcal{V}}$.
For any collection $\mathsf{B}'$ of subsets of $[k]$, we have
\begin{align}
    \frac1m H(\{S_{\mathcal{V}}:\mathcal{V}\in \mathsf{B}'\})
    & = \frac1m \Big[ H(\{S_{\mathcal{V}}:\mathcal{V}\subseteq [k]\})
    - H( \{S_{\mathcal{V}}:\mathcal{V}\notin \mathsf{B}'\}
    \,|\, \{S_{\mathcal{V}}:\mathcal{V}\in \mathsf{B}'\}) \Big] \nonumber \\
    & \overset{(a)}{\ge} 
    \frac1m \left(\sum\nolimits_{\mathcal V:\mathcal{V}\subseteq [k]} H(S_{\mathcal{V}}) \right) - \epsilon
    - \sum\nolimits_{\mathcal V:\mathcal{V}\notin \mathsf{B}'} R_{\mathcal V} \nonumber \\
    & \overset{(b)}{\ge} \left(\sum\nolimits_{\mathcal V:\mathcal{V}\in \mathsf{B}'} R_{\mathcal V} \right) - 2^k \epsilon
    \label{eq:Bprimebound}
\end{align}
where step $(a)$ uses \eqref{eqnCond-3} and $|S_{\mathcal{V}}|=2^{mR_\mathcal{V}}$, and step $(b)$ uses \eqref{eqnCond-1}. Next, \eqref{eqnCond-2} gives
\begin{align}
    \mathbb{P}[\cup_{i\in\mathcal V} \{ S_{i,\mathcal{V}} \ne S_{\mathcal{V}} \} ] < \epsilon \quad\implies\quad 
    \mathbb{P}[S_{i,\mathcal{V}} \ne S_{\mathcal{V}}] < \epsilon, \quad \forall\,i\in\mathcal V
\end{align}
and hence, Fano's inequality gives
\begin{align}
    H( \{S_{\mathcal{V}}:\mathcal{V}\in \mathsf{B}'\}
    \,|\, \{S_{i,\mathcal{V}}:\mathcal{V}\in \mathsf{B}'\})
    & \le m k(\epsilon), \quad i\in\mathcal V 
    \label{eq:Fanobound1} \\
    H( \{S_{i,\mathcal{V}}: i\in\mathcal{V}, \mathcal{V}\in \mathsf{B}'\}
    \,|\, \{S_{\mathcal{V}}:\mathcal{V}\in \mathsf{B}'\})
    & \le m k'(\epsilon)
    \label{eq:Fanobound2}
\end{align}
where $k(\epsilon)\rightarrow0$ and $k'(\epsilon)\rightarrow0$ as $\epsilon\rightarrow0$. Let $\mathbf{M}_i=(
S_{i,\mathcal{V}}:\mathcal{V}\cap\{i\}\ne\emptyset)$ be the string of keys generated by the $i$-th terminal. We have
\begin{align}
    \frac1m I(\mathbf{M}_{[k]};Z^n)
    & \le \frac1m I(\mathbf{M}_{[k]}, \{S_{\mathcal{V}}:\mathcal{V}\subseteq [k]\} ; Z^n) \nonumber \\
    & \le \frac1m I(\{S_{\mathcal{V}}:\mathcal{V}\subseteq [k]\} ; Z^n)
    + \frac1m H(\mathbf{M}_{[k]}|\{S_{\mathcal{V}}:\mathcal{V}\subseteq [k]\}) \nonumber \\
    & \overset{(a)}{\le} \epsilon + k'(\epsilon)
    \label{eq:security2}
\end{align}
where step $(a)$ follows by \eqref{eqnCond-4}, and by \eqref{eq:Fanobound2} with $\mathsf{B}'$ being all subsets of $[k]$.

Next, for the set $\mathcal{U}=\{i_1, i_2, \cdots, i_u\}$, let $X_{\mathcal{U}j}=(X_{i_1j},X_{i_2j},\cdots, X_{i_uj})$ and similarly for $Y_{\mathcal{U}j}$.
For the $j$-th time instance, let $P_{T_j|X_{[k]j},Y_{[k]j},Z_j}$ be the auxiliary channel equal to $p(t|x_{[k]},y_{[k]},z)$ if we use the main channel at time instance $j$, or $q_\ell(t|x_{[k]},y_{[k]},z)$ if we use the $\ell$-th parallel channel at time instance $j$.
Define $T^n$ via
\begin{align}
    P_{T^n|X_{[k]}^n,Y_{[k]}^n,Z^n}
    = \prod\nolimits_{j \in [n]} P_{T_j|X_{[k]j},Y_{[k]j},Z_j} .
\end{align}

Since $\mathbf{M}_i=(
S_{i,\mathcal{V}}:\mathcal{V}\cap\{i\}\ne\emptyset)$ is the string of keys generated by the $i$-th terminal, the collection of keys
$\mathbf{M}_{\mathcal{U}}$ should be the target keys $S_{\mathcal{V}}$ for all $\mathcal{V}$ satisfying $\mathcal{V}\cap\mathcal{U}\ne\emptyset$, which we write as $S_{\mathcal{V}:\mathcal{V}\cap\mathcal{U}\ne\emptyset}$, and with the target rate
$\sum_{\mathcal V:\mathcal{V}\cap\mathcal{U}\neq\emptyset}R_\mathcal V$. We have
\begin{align}
    \frac1m H(\mathbf{M}_{\mathcal{U}})
    & = \frac1m \left[ H\big(\mathbf{M}_{\mathcal{U}},S_{\mathcal{V}:\mathcal{V}\cap\mathcal{U}\ne\emptyset} \big)
    - H\big( S_{\mathcal{V}:\mathcal{V}\cap\mathcal{U}\ne\emptyset} | \mathbf{M}_{\mathcal{U}} \big) 
    \right] \nonumber \\
    & \overset{(a)}{\geq}  \left(\sum\nolimits_{\mathcal V:\mathcal{V}\cap\mathcal{U}\neq\emptyset} R_\mathcal V \right) - k_1(\epsilon)
\end{align}
where $k_1(\epsilon)\rightarrow0$ as $\epsilon\rightarrow0$, and step $(a)$ follows from \eqref{eq:Bprimebound} and \eqref{eq:Fanobound1}. Similarly, for any $\mathcal{B}\subsetneq \mathcal{U}$, we have
\begin{align}
    \frac1m H(\mathbf{M}_{\mathcal B}|\mathbf{M}_{\mathcal{U}-{\mathcal B}})
    & \leq \frac1m H\big( \mathbf{M}_{\mathcal B}, S_{\mathcal{V}: \mathcal{V}\cap\mathcal{B}\ne\emptyset,\mathcal{V}\cap(\mathcal{U}-\mathcal{B})=\emptyset} \,|\, \mathbf{M}_{\mathcal{U}-{\mathcal B}} \big) \nonumber \\
    & \le \frac1m \left[ 
    H\big( S_{\mathcal{V}: \mathcal{V}\cap\mathcal{B}\ne\emptyset,\mathcal{V}\cap(\mathcal{U}-\mathcal{B})=\emptyset} \big)
    + H\big( \mathbf{M}_{\mathcal B}
    \,|\, \mathbf{M}_{\mathcal{U}-{\mathcal B}}, S_{\mathcal{V}: \mathcal{V}\cap\mathcal{B}\ne\emptyset,\mathcal{V}\cap(\mathcal{U}-\mathcal{B})=\emptyset} \big) \right]
    \nonumber \\
    & \overset{(a)}{\leq}  \left(\sum_{\mathcal V:~\mathcal V\cap \mathcal{B}\neq \emptyset, \mathcal V\cap (\mathcal U-\mathcal{B})=\emptyset} R_\mathcal V \right) + k_2(\epsilon)
\end{align}
where $k_2(\epsilon)\rightarrow0$ as $\epsilon\rightarrow0$, and
step $(a)$ uses $|S_{\mathcal{V}}|=2^{mR_\mathcal{V}}$ and \eqref{eq:Fanobound2}. We thus have
\begin{align}
    \frac1m I_{\omega, \lambda^{\cdot}}(\mathbf{M}_1;\mathbf{M}_2;\cdots;\mathbf{M}_k)
    & = \frac1m \sum\nolimits_\mathcal{U} \omega_\mathcal{U} \left( H(\mathbf{M}_{\mathcal{U}})
    - \sum\nolimits_{\mathcal{B}\subsetneq \mathcal{U}} \lambda^\mathcal{U}_{\mathcal B} H(\mathbf{M}_{\mathcal B}|\mathbf{M}_{\mathcal{U}-{\mathcal B}})
\right) \nonumber
\\
&\geq -k_3(\epsilon)+\sum_\mathcal{U}\omega_\mathcal{U}\left(
\sum_{\mathcal V:\mathcal{V}\cap\mathcal{U}\neq\emptyset}R_\mathcal V-\sum_{\mathcal B\subsetneq\mathcal{U}}\lambda^\mathcal{U}_{\mathcal B}\sum_{\mathcal V:~\mathcal V\cap \mathcal{B}\neq \emptyset, \mathcal V\cap (\mathcal U-\mathcal{B})=\emptyset}R_\mathcal V \right)
\label{eq:thm1-2}
\end{align}
where $k_3(\epsilon)\rightarrow0$ as $\epsilon\rightarrow0$.
We reformulate \eqref{eq:thm1-2} as
\begin{align}
    & \sum_{\mathcal V}R_\mathcal V\left(\sum_{\mathcal{U}: \mathcal{V}\cap\mathcal{U}\neq\emptyset}\omega_\mathcal{U}\left(1 -\sum_{\mathcal B\subsetneq\mathcal{U}:~ \mathcal{V}\cap(\mathcal U-\mathcal B)=\emptyset}\lambda^\mathcal{U}_{\mathcal B}\right)\right)
    \leq \frac{1}{m}I_{\omega, \lambda^{\cdot}}(\mathbf{M}_1;\mathbf{M}_2;\cdots;\mathbf{M}_k)+k_3(\epsilon).
\end{align}

Next, using the conditioning inequality for $I_\lambda$ of Proposition \ref{propos1} in Appendix~\ref{sec:appendixA}, we have
\begin{align}
    I_{\omega, \lambda^{\cdot}}(\mathbf{M}_1;\mathbf{M}_2;\cdots;\mathbf{M}_k)
    & \leq I_{\omega, \lambda^{\cdot}}(\mathbf{M}_1;\mathbf{M}_2;\cdots;\mathbf{M}_k|T^n) + I(\mathbf{M}_{[k]};T^n) \nonumber \\
    & \overset{(a)}{\leq} I_{\omega, \lambda^{\cdot}}(\mathbf{M}_1;\mathbf{M}_2;\cdots;\mathbf{M}_k|T^n) + I(\mathbf{M}_{[k]};T^n)
   \nonumber \\
   &\quad - I(\mathbf{M}_{[k]};Z^n)+mk_4(\epsilon)
\end{align}
for some $k_4(\epsilon)\rightarrow0$ as $\epsilon\rightarrow0$, where step $(a)$ uses \eqref{eq:security2}. Observe that
\begin{align}
   I(\mathbf{M}_{[k]};T^n)-I(\mathbf{M}_{[k]};Z^n)
   &=\sum_{j \in [n]} I(\mathbf{M}_{[k]};T_j|Z_{j+1}^n, T^{j-1})-I(\mathbf{M}_{[k]};Z_j|Z_{j+1}^n, T^{j-1}) \nonumber \\
   &=\sum_{j \in [n]} I(V_j;T_j|U_jA_j)-I(V_j;Z_j|U_jA_j)
\end{align}
where $V_j=\mathbf{M}_{[k]}$, $U_j=Z_{j+1}^n$ and $A_j=T^{j-1}$. Note that
\begin{align}
   A_j\mkv X_{[k]j}\mkv Y_{[k]j}T_jZ_j \\
   U_jV_jA_j\mkv X_{[k]j}Y_{[k]j}\mkv T_jZ_j
\end{align}
form Markov chains.  Next, we have
\begin{align}
   I_{\omega, \lambda^{\cdot}}(\mathbf{M}_1
   & ;\mathbf{M}_2;\cdots;\mathbf{M}_k|T^n)
   \overset{(a)}{\leq}
   I_{\omega, \lambda^{\cdot}}(W_{1}Y_{1}^n; W_{2}Y_{2}^n; \cdots; W_{k}Y_{k}^n | T^n)
   \nonumber \\
   & =I_{\omega, \lambda^{\cdot}}(W_{1}Y_{1}^n; W_{2}Y_{2}^n; \cdots; W_{k}Y_{k}^n | T^n)
   - I_{\omega, \lambda^{\cdot}}(W_{1}; W_{2}; \cdots; W_{k})
   \nonumber \\
   & \overset{(b)}{\leq} \sum_{j \in [n]} I_{\omega, \lambda^{\cdot}}(X_{1j}Y_{1j}; X_{2j}Y_{2j}; \cdots; X_{kj}Y_{kj} | T_j,T^{j-1})
   - \sum_{j \in [n]} I_{\omega, \lambda^{\cdot}}(X_{1j}; X_{2j}; \cdots; X_{kj} | T^{j-1})
   \nonumber \\
   & \qquad -\sum\nolimits_{j \in [n]} \sum\nolimits_{\mathcal{U}} \omega_\mathcal{U} \left(1-\sum\nolimits_{\mathcal B\subsetneq \mathcal{U}}\lambda^{\mathcal{U}}_{\mathcal B}\right) I(X_{[k]j}; Y_{\mathcal{U}j},T_j | X_{\mathcal{U}j}, T^{j-1})
   \label{eqnFL}
\end{align}
where \(W_i\) is the local (private) random variable available at the \(i\)-th party (see \eqref{eqgenS}), step $(a)$ follows from the data processing inequality for $I_\lambda$, see Proposition \ref{propos1} in Appendix~\ref{sec:appendixA}, step $(b)$ follows from the DB constraint of Lemma \ref{lemma1} in Section \ref{subsec:DB-bound}, and $k_3(\epsilon)\rightarrow0$ as $\epsilon\rightarrow0$.

Collecting the above results, we obtain
\begin{align}
   &\sum_{\mathcal V}R_\mathcal V\left(\sum_{\mathcal{U}: \mathcal{V}\cap\mathcal{U}\neq\emptyset}\omega_\mathcal{U}\left(1
   -\sum_{\mathcal B\subsetneq\mathcal{U}:~ \mathcal{V}\cap(\mathcal U-\mathcal B)=\emptyset}\lambda^\mathcal{U}_{\mathcal B}\right)\right)\nonumber
   \\ &\leq \frac1m
   \sum_{j \in [n]}\bigg[I_{\omega, \lambda^{\cdot}}(X_{1j}Y_{1j};X_{2j}Y_{2j};\cdots;X_{kj}Y_{kj}|T_j,A_j)-I_{\omega, \lambda^{\cdot}}(X_{1j};X_{2j};\cdots;X_{kj}|A_j)
   \nonumber \\
   & \qquad\qquad -\sum\nolimits_{\mathcal{U}} \omega_\mathcal{U} \bigg(1-\sum\nolimits_{\mathcal B\subsetneq \mathcal{U}}\lambda^{\mathcal{U}}_{\mathcal B}\bigg)I(X_{[k]j};Y_{\mathcal{U}j},T_j|X_{\mathcal{U}j}A_j)
   \nonumber\\
   & \qquad\qquad + I(V_j;T_j|U_jA_j) -I(V_j;Z_j|U_jA_j)\bigg] + k_3(\epsilon) + k_4(\epsilon).
\label{eqnAAB}
\end{align}
Consider the set of $m$ indices $j_1, j_2, \cdots, j_m\in[n]$ where the main channel is used. We have
\begin{align}
   & \sum_{b=1}^m \bigg[I_{\omega, \lambda^{\cdot}}(X_{1j_b}Y_{1j_b}; X_{2j_b}Y_{2j_b}; \cdots; X_{kj_b}Y_{kj_b} | T_{j_b},A_{j_b})
   - I_{\omega, \lambda^{\cdot}}(X_{1j_b}; X_{2j_b}; \cdots; X_{kj_b}|A_{j_b})
   \nonumber \\
   & \qquad\qquad -\sum\nolimits_{\mathcal{U}} \omega_\mathcal{U} \bigg(1-\sum\nolimits_{\mathcal B\subsetneq \mathcal{U}}\lambda^{\mathcal{U}}_{\mathcal B}\bigg) I(X_{[k]j_b};Y_{\mathcal{U}j_b},T_{j_b} | X_{\mathcal{U}j_b}A_{j_b})
   \nonumber \\
   & \qquad\qquad + I(V_{j_b};T_{j_b} | U_{j_b}A_{j_b}) - I(V_{j_b};Z_{j_b} | U_{j_b}A_{j_b})\bigg]
   \nonumber \\
   & \leq m\cdot V_{\omega, \lambda^{\cdot}} \big(p(y_{[k]},z|x_{[k]}) \cdot p(t|x_{[k]},y_{[k]},z) \big).
   \label{eqnDD}
\end{align}
A similar argument shows that the sum of the terms in \eqref{eqnAAB} where the parallel channel $q_\ell(y_{[k]},z|x_{[k]})$ is used, is bounded from above by 
\begin{align}
  m\cdot\alpha_\ell \cdot V_{\omega, \lambda^{\cdot}}\big(q_\ell(y_{[k]},z|x_{[k]})q_\ell(t|x_{[k]},y_{[k]},z)\big).
\end{align}

\subsection{Relation with Existing Results}\label{subsec:previous-results}
Introducing the auxiliary variable $T$ allows one to recover existing bounds for the two-terminal source model discussed below.

\subsubsection{Two-Terminal Source Model Problem}

Corollary \ref{corr9} recovers the current best upper bound for the source model \cite{gohari2010information1}. Suppose $k=2$ and $X_1$ and $X_2$ are constants. Choosing $\lambda_{\{1\}}=\lambda_{\{2\}}=1$, the $\lambda$-multivariate information reduces to the ordinary conditional mutual information. For any $p(t|y_{1},y_2,z)$, we obtain
\begin{align}
   V_{\omega, \lambda^{\cdot}}&(p(t,y_{1},y_2,z|x_1,x_2))
   = \max [
   I(X_{1}Y_{1};X_{2}Y_{2}|T)- I(X_{1};X_{2})
   +I(V;T|U)-I(V;Z|U)]
\end{align}
where the maximum is over all $p(x_{[k]})$ and auxiliary random variables $U, V$ for which the joint distribution of the random variables factors as
\begin{equation}
   p_{X_1,X_2} \cdot p_{Y_1,Y_2,T,Z|X_1,X_2} \cdot p_{U,V|X_1,X_2,Y_1,Y_2}.
\end{equation}
Since $X_1$ and $X_2$ are constants, we have
\begin{align}
   I(X_{1}Y_{1};X_{2}Y_{2}|T) -  I(X_{1};X_{2})=I(Y_{1};Y_{2}|T)
\end{align}
and
\begin{align}
   & V_{\omega, \lambda^{\cdot}}(p(t,y_{1},y_2,z|x_1,x_2)) 
   = I(Y_{1};Y_{2}|T) + \max_{(V,U)\mkv (Y_1,Y_2)\mkv (T,Z)}
   \left[I(V;T|U)-I(V;Z|U)\right].
\end{align}

Next, consider one parallel channel of the form $Y_1=Y_2=Z=(X_1,X_2)$ where $X_1$ and $X_2$ are binary, i.e., each use of the parallel channel is equivalent to broadcasting one bit. We now utilize the auxiliary receiver $T=Z$. Since $H(X_{[k]},Y_{[k]}|Z)=0$ in the parallel channel, we have 
\begin{equation}
    V_{\omega, \lambda^{\cdot}}\big( q_\ell(y_{[k]},z|x_{[k]})q_\ell(t|x_{[k]},y_{[k]},z)\big) \le 0
\end{equation}
and
\begin{align}
   R_{[k]} & \le V_{\omega, \lambda^{\cdot}} \big( p(y_{[k]},z|x_{[k]})p(t|x_{[k]},y_{[k]},z) \big)
   \nonumber \\
   & = I(Y_{1};Y_{2}|T)
   + \max_{(V,U)\mkv (Y_1,Y_2)\mkv (T,Z)} \left[I(V;T|U)-I(V;Z|U)\right].
\end{align}
Note that the channel-use rate $\alpha_1$ does not appear in the upper bound and can be set to infinity, allowing free public discussion. This recovers the current best upper bound for the source model for two users \cite{7976410}. A similar argument shows that Corollary \ref{corr9} recovers \eqref{eqnSM1}.

\subsubsection{Two-Terminal Channel Model Problem}

Suppose $X_2$ and $Y_1$ are constants in the main channel. This case is similar to the one discussed above. Take some arbitrary $p(t|x_{1},y_2,z)$ for which we obtain
\begin{align}
   & V_{\omega, \lambda^{\cdot}}(p(t,y_{1},y_2,z|x_1,x_2))=\max \left[
   I(X_{1};Y_{2}|T)+I(V;T|U)-I(V;Z|U)\right]
\end{align}
where the maximum is over $p(x_1)$ and all auxiliary random variables $U,V$ for which the joint distribution of the random variables factors as
\begin{equation}
    p_{X_1} \cdot p_{Y_2,T,Z|X_1} \cdot p_{U,V|X_1,Y_2}.
\end{equation}
As above, the corresponding term for the parallel (public) channel vanishes.
This recovers the current best upper bound for the channel model problem for two users \cite{gohari2010information2}. A similar argument shows that Corollary \ref{corr9} recovers  \eqref{eqnSM1Add}.

\subsubsection{Source Model Problem}
\label{subsubsec:kterminalssec}

Next, consider a $k$ terminal network $p(y_{[k]},z|x_{[k]})$ where $|\mathcal{X}_i|=1$ in the main network, i.e., the inputs are constant and the main network is described by $p(y_{[k]},z)$. Moreover, assume that $R_\mathcal{V}=0$ when $\mathcal{V}\neq [k]$. In other words, the terminals aim to create a shared secret key. Only the first $u$ terminals can participate in public discussion while terminals $u+1, u+2, \cdots, k$ remain silent. This public discussion can be modeled by the parallel channel $Y_1=Y_2=\cdots=Y_k=Z=X_{[u]}$ with $X_{u+1},\cdots, X_k$ being constants. 

Consider the assumption $H(Z|Y_i)=0$ for $i=1,2,\cdots, k$.
In this case, deriving the capacity requires using the general upper bound with suitable weights $\omega_\mathcal{U}$. This is done in Appendix \ref{appendixC}. Here, we consider $u=k$, so all terminals can speak, and model the public discussion by the parallel channel $Y_1=Y_2=\cdots=Y_k=Z=X_{[k]}$. Using the private key capacity result of \cite{1362897}, we obtain the maximum value for 
$R_\mathcal{V}$ as 
\begin{align}
\min_{\lambda}
I_\lambda(Y_1;Y_2;\cdots;Y_k|Z).
\end{align}

To recover this value from Corollary \ref{corr9}, choose the auxiliary receiver $T=Z$ for the main channel. Since $X_i$'s are constants, after some simplification, we obtain
\begin{align}
V_{\omega, \lambda^{\cdot}}(p(t,y_{[k]},z|x_{[k]}))
&=
I_{\lambda}(Y_{1};Y_{2};\cdots;Y_{k}|Z).
\end{align}
Next, consider the parallel channel $Y_1=Y_2=\cdots=Y_k=Z=X_{[k]}$ with density $q_1(y_{[k]},z|x_{[k]})$ and use the auxiliary receiver $T=Z$ for the parallel channel. Since
$I_{\lambda}(X_{1}Y_{1};X_{2}Y_{2};\cdots;X_{k}Y_{k}|Z,A)=0$ it is immediate that $V_{\omega, \lambda^{\cdot}}(q_1(t,y_{[k]},z|x_{[k]}))\leq 0$. As before, $\alpha_1$ does not appear in the upper bound and can be set to infinity (free public discussion). Since $\lambda$ was arbitrary, we obtain the upper bound
$\min_{\lambda} I_\lambda(Y_1;Y_2;\cdots;Y_k|Z)$.

\subsubsection{Wiretap Channel with Rate-Limited Secure Feedback}
\label{Wiretapratelimited}

We next discuss wiretap channels with rate-limited secure feedback. Consider $k=2$ and suppose $X_2$ and $Y_1$ are constants in the main channel, so we obtain a wiretap channel $p(y_2,z|x_1)$. For the parallel channel, consider a secure rate-limited feedback link as in \cite{Ardestanizadeh}. We model this by a parallel channel where $Y_2$ and $Z$ are constant while $Y_1 = X_{2}$ with the desired feedback rate $R_f$. We also set the parallel channel-use rate to $\alpha_1=1$. The main result of \cite{Ardestanizadeh} is the following upper bound on the rate of secure and reliable communication from the first terminal to the second terminal:
\begin{align}
    R\leq \max_{p(x_1)}
    \min\big(I(X_1; Y_2), R_f + I(X_1; Y_2|Z) \big).
\end{align}
The authors in \cite{Ardestanizadeh} do not consider the secret key rate that can be shared between the two terminals; instead, they focus on the rate of private communication from the first terminal to the second. Only the term $R_f + I(X_1; Y_2|Z)$ constitutes an upper bound on the
secret key rate that can be shared between the two terminals. To obtain the latter bound from our bound in Corollary \ref{corr9}, choose  $\lambda_{\{1\}}=\lambda_{\{2\}}=1$ and the auxiliary receiver $T=Z$. For the main channel, we can simplify $V_{\omega, \lambda^{\cdot}}(p(t,y_{1},y_2,z|x_1,x_2))$ because $Y_1$ and $X_2$ are constants:
\begin{align}
   V_{\omega, \lambda^{\cdot}}&(p(t,y_{1},y_2,z|x_1,x_2))
   = I(X_{1};Y_{2}|Z).
\end{align}
For the parallel channel, set $Y_1=X_2$, choose $X_1$ and $Z$ as constants, and use the auxiliary receiver $T=Z$ to obtain
\begin{align}
   V_{\omega, \lambda^{\cdot}} &(q_1(t,y_{1},y_2,z|x_1,x_2))
   = \max_{p(x_2)} I(Y_{1};X_{2})\le R_f.
\end{align}
These results yield the upper bound $R_f + I(X_1; Y_2|Z)$.

\subsection{New Bound for Randomness Generation}
\label{subsec:new-bound-randomness-generation}

Suppose $Z=\emptyset$ and $L=0$, so there are no parallel channels. This removes the secrecy aspect, and the problem reduces to generating common randomness among different subsets of terminals at given rates. We have the following result.

\begin{corollary}
The common randomness rates $R_\mathcal{V}$ for $\mathcal{V}\subseteq[k]$ are achievable only if for any $(\omega, \lambda^{\cdot})$ (see Definition \ref{defnneqw2}) we have
\begin{align}
    & \sum_{\mathcal V}R_\mathcal V\left(\sum_{\mathcal{U}: \mathcal{V}\cap\mathcal{U}\neq\emptyset}\omega_\mathcal{U} \left( 1-\sum_{\mathcal B\subsetneq\mathcal{U}:\, \mathcal{V}\cap(\mathcal U-\mathcal B)=\emptyset}\lambda^\mathcal{U}_{\mathcal B}\right)\right) \nonumber \\
    & \leq I_{\omega, \lambda^{\cdot}}(Y_1;Y_2;\cdots;Y_k|X_{[k]}) + \sum_\mathcal{U}\omega_\mathcal{U}\sum_{\mathcal B\subsetneq \mathcal{U}}\lambda_{\mathcal B}^{{\mathcal U}} I(X_{[k]};Y_{\mathcal{U}-\mathcal{B}}|X_{\mathcal{U}-\mathcal{B}})
\end{align}
for some $p(x_{[k]})$.
\end{corollary}
\begin{proof}
Consider \eqref{eqntheB} for $L=0$, $Z=\emptyset$, and $T=\emptyset$ for which we have
\begin{align}
    V_{\omega, \lambda^{\cdot}}(p(t,y_{[k]},z|x_{[k]}))
    = & \max\bigg[ I_{\omega, \lambda^{\cdot}} (X_{1}Y_{1};X_{2}Y_{2};\cdots;X_{k}Y_{k})- I_{\omega, \lambda^{\cdot}}(X_{1};X_{2};\cdots;X_{k}) \nonumber \\
    & \qquad\qquad - \sum\nolimits_{\mathcal{U}} \omega_\mathcal{U} \left(1-\sum\nolimits_{\mathcal B\subsetneq \mathcal{U}} \lambda^{\mathcal{U}}_{\mathcal B}\right) I(X_{[k]};Y_{\mathcal{U}}|X_{\mathcal{U}}) \bigg].
\end{align}
Now, observe the identity
\begin{align}
   & I_{\omega, \lambda^{\cdot}}
   (X_{1}Y_{1};X_{2}Y_{2};\cdots;X_{k}Y_{k})- I_{\omega, \lambda^{\cdot}}(X_{1};X_{2};\cdots;X_{k}) - \sum_{\mathcal{U}}\omega_\mathcal{U} \left(1-\sum_{\mathcal B\subsetneq \mathcal{U}}\lambda^{\mathcal{U}}_{\mathcal B}\right) I(X_{[k]};Y_{\mathcal{U}}|X_{\mathcal{U}}) \nonumber \\
   & = I_{\omega, \lambda^{\cdot}}(Y_1;Y_2;\cdots;Y_k|X_{[k]}) + \sum\nolimits_\mathcal{U}\omega_\mathcal{U}\sum\nolimits_{\mathcal B\subsetneq \mathcal{U}} \lambda_{\mathcal B}^{{\mathcal U}} I(X_{[k]};Y_{\mathcal{U}-\mathcal{B}}|X_{\mathcal{U} - \mathcal{B}}).
\end{align}
This completes the proof.
\end{proof}

Thus, setting $\omega_{[k]}=1$ and $\omega_\mathcal{U}=0$ when $\mathcal{U}\neq [k]$, common randomness generation at rate $R_\mathcal{V}$ for subset $\mathcal{V}$ is possible only if
\begin{align}
    & \sum_{\mathcal V} R_\mathcal V \left(1-\sum_{\mathcal B:\, \mathcal V \subseteq \mathcal{B}\subsetneq[k]} \lambda_{\mathcal B}\right) \leq I_{\lambda}(Y_1;Y_2;\cdots;Y_k|X_{[k]}) + \sum_{\mathcal B}\lambda_{\mathcal B} I(X_{\mathcal B};Y_{\mathcal{B}^c}|X_{{\mathcal B}^c})
\end{align}
for some $p(x_{[k]})$. 
For example, consider $k=2$ and a two-way channel $p(y_1,y_2|x_1,x_2)$. The rate of the shared randomness that can be produced between the two terminals is at most
\begin{align}
I(X_1;Y_2|X_2)+I(X_2;Y_1|X_1)+I(Y_1;Y_2|X_1,X_2)
\end{align}
for some $p(x_1,x_2)$. The terms $I(X_1;Y_2|X_2)$ and $I(X_2;Y_1|X_1)$ correspond to cut-set terms for generating common randomness by communicating bits from one terminal to the other, and $I(Y_1;Y_2|X_1,X_2)$ can be interpreted as an upper bound on the randomness generated through the channel noise. A similar interpretation holds for a general network $p(y_{[k]}|x_{[k]})$. The expression  $I(X_{\mathcal B};Y_{\mathcal{B}^c}|X_{{\mathcal B}^c})$ can be interpreted as a cut-set upper bound on the information flow, and $I_{\lambda}(Y_1;Y_2;\cdots;Y_k|X_{[k]})$ can be interpreted as an upper bound on the randomness generated through the channel noise.

\section{Multiterminal Communication}
\label{sec:mt-networks}

\subsection{System Model}
\label{subsec:mt-system-model}

Consider a memoryless network with the channel $p(y_{[k]}|x_{[k]})$ where $X_i$ and $Y_i$ are the respective channel inputs and outputs of the $i$-th transceiver, $i\in[k]$. Terminal $i$ wishes 
to reliably send a message $M_{i\mathcal S}$ with alphabet $[2^{nR_{i\mathcal S}}]$ of rate $R_{i\mathcal S}$ to terminals in $\mathcal S\subseteq [k]-\{i\}$ by using the channel $n$ times. The  messages $M_{i\mathcal S}$ are mutually independent and the channel input of user $i$ at time $j$ has the form $X_{ij}=f_{ij}(W_i,Y_{i[j-1]})$ where $W_i=(M_{i\mathcal{S}}, \mathcal S\subseteq [k]-\{i\})$; see \eqref{eqnconditionx}. Terminal $i$ outputs the estimates $\hat{M}^{(i)}_{j\mathcal{S}}=g_i(W_i,Y_{i[n]})$ for every $j\neq i$ and $\mathcal{S}$ that contains $i$. The uniformity and reliability requirements are
\begin{subequations}
\begin{align}
    \frac 1n H(M_{i\mathcal{S}}) & \ge R_{i\mathcal{S}}-\epsilon, \quad i\in[k],\; \mathcal S\subseteq [k]-\{i\} 
    \label{eqnCond2-1} \\
    \mathbb{P}\left[\medcap_{i\neq j, i\in\mathcal{S}} \{\hat{M}^{(i)}_{j\mathcal{S}} = M_{j\mathcal{S}}\}\right] & \ge 1-\epsilon .
    \label{eqnCond2-2}
\end{align}
\end{subequations}

We remark that relay networks are included in the setting described above. For example, even if the first terminal has no messages to transmit, i.e., \( R_{1\mathcal{S}} = 0 \) for all $\mathcal{S}$, it can act as a relay to assist communication. Various cooperative strategies can be employed, such as \emph{decode-and-forward}, \emph{compress-and-forward}, or \emph{amplify-and-forward}.

A general outer bound on the capacity region is the cut-set bound that we state explicitly. 

\begin{proposition}[Cut-set bound] \label{prop1}
    The achievable rate tuples $\{R_{i\mathcal{S}}\}$ satisfy
\begin{align}
    \sum\nolimits_{i\in \mathcal{S},\mathcal{L}\cap \mathcal{S}^c\neq \emptyset} R_{i\mathcal{L}} \leq I(X_{\mathcal S};Y_{\mathcal S^c}|X_{\mathcal S^c}),\quad \forall\, \mathcal S\subseteq[k], \label{eqn-cutset}
\end{align}
    for some joint distribution $p(x_{[k]})$.
\end{proposition}

The cut-set bound appeared in \cite{aref80,ElGamal-NTC81} (cf. \cite{kramer03} for general multicast) and coincides with the capacity region in some interesting cases: (i) point-to-point channels; (ii) two-user Gaussian MACs with output feedback \cite{ozarow-IT84}; (iii) symmetric $k$-user Gaussian MACs with output feedback and high signal-to-noise ratio \cite{Kramer-IT02}; (iv) relay channels with feedback from the receiver to the relay and the transmitter \cite{Cover79},\cite[Theorem 17.3]{elk11}, and (v) Gaussian relay channels with phase uncertainty when the relay is near the source \cite{Kramer05}.
However, the cut-set bound is loose even in basic cases such as MACs without feedback (where it can easily be modified to give the capacity region by adding a time-sharing variable) and three-terminal relay channels with one message \cite{el2022strengthened}.

We next develop a new and general capacity outer bound that improves the cut-set bound. We apply the bound to Gaussian MACs with generalized feedback, including Gaussian relay channels. One attractive feature that our bound shares with the cut-set bound is that Gaussian distributions are optimal.

\subsection{General Outer Bound}
\label{subsec:macgf-outer-bound}

In this section, we use auxiliary receivers similar to the \emph{parallel channel} extension of the DB constraint in \cite[Section V]{hekstra1989dependence}. We extend the idea to several auxiliary receivers with channel outputs $Z_m$, $m\in[a]$.

Lemma \ref{lemma1} yields the following outer bound on the capacity region.

\begin{theorem} \label{thm2}
    Consider an auxiliary channel $p(z_{[a]}|x_{[k]},y_{[k]})$. Any achievable rate tuples $\{R_{i\mathcal{S}}\}$ satisfy
\begin{align}
    \sum\nolimits_{i\in \mathcal{S},\mathcal{L}\cap \mathcal{S}^c\neq \emptyset} R_{i\mathcal{L}} \leq I(X_{\mathcal S};Z_m,Y_{\mathcal S^c}|X_{\mathcal S^c},T_m),\qquad \forall\, \mathcal S\subseteq[k],\; m \in [a]\label{eqn50-2}
\end{align}
    for some joint distribution that factorizes as
\begin{align} \label{eq:p-factorization-2}
    p(x_{[k]}) \cdot \left( \prod\nolimits_{m\in[a]} p(t_{m}|x_{[k]}) \right) \cdot p(y_{[k]}|x_{[k]}) \cdot p(z_{[a]}|x_{[k]},y_{[k]})
\end{align}
such that, for any $\mathcal{U}\subseteq[k]$ where $|\mathcal{U}|\geq 2$, any fractional partition $\lambda$ for indices in $\mathcal{U}$, and all $m\in[a]$, we have the DB constraints
\begin{align} 
    & I_\lambda(X_{i_1}Y_{i_1};X_{i_2}Y_{i_2};\cdots;X_{i_u}Y_{i_u}|Z_m,T_m)
    \nonumber \\
    & \ge I_\lambda(X_{i_1};X_{i_2};\cdots;X_{i_u}|T_m)
    + \left(1-\sum\nolimits_{\mathcal B\subsetneq \mathcal{U}}\lambda_{\mathcal B}\right) I(X_{[k]};Z_m,Y_{\mathcal{U}}|X_{\mathcal{U}},T_m).
    \label{eqnDBs1}
\end{align}
Moreover, one may assume
\begin{align} \label{eq:card-bound}
    |\mathcal{T}_m| \leq \prod\nolimits_{i\in[k]} |\mathcal{X}_i| + (2^k-1) + \binom{2^{k}-1+k}{2^{k}-1}, \quad \forall\, m\in[a] .
\end{align}
\end{theorem}

\begin{proof}
For $i\in [k]$, let $W_i=(M_{i\mathcal{S}}, \mathcal{S}\subsetneq[k]-\{i\}) $ be the collection of messages of user $i$ intended for other receivers. Consider any $\mathcal{U}\subseteq[k]$ and fractional partition $\lambda$ of the entries in $\mathcal{U}$.  Using Proposition~\ref{propos1} and Lemma \ref{lemma1}, we have
\begin{align}
    0 & \leq I_{\lambda}( W_{i_1}Y_{i_1}^n ; W_{i_2}Y_{i_2}^n ;\cdots; W_{i_u}Y_{i_u}^n \,|\, Z_m^n)
    - \underbrace{I_{\lambda}(W_{i_1};W_{i_2};\cdots;W_{i_u})}_{ \displaystyle = 0} \nonumber \\
    & \leq \sum_{j \in [n]}I_{\lambda}(X_{i_1j}Y_{{i_1}j}; X_{i_2j}Y_{{i_2}j} ;\cdots; X_{i_uj}Y_{{i_u}j} \,|\, Z_m^{j-1}, Z_{mj}) - \sum_{j \in [n]}I_{\lambda}(X_{i_1j} ; X_{i_2j} ;\cdots; X_{i_uj} \,|\, Z_m^{j-1}) \nonumber \\
    & \qquad -\sum_{j \in [n]} \left(1-\sum\nolimits_{\mathcal B\subsetneq \mathcal{U}} \lambda_{\mathcal B}\right) I(X_{[k]j} ; Z_{mj}, Y_{\mathcal {U}j} \,|\, Z_m^{j-1}, X_{\mathcal{U}j}).
\end{align}
Let $M_{\mathcal{S},\mathcal{S}^c}=(M_{i\mathcal{L}}: i\in\mathcal{S}, \mathcal{L}\cap \mathcal{S}^c\neq \emptyset)$. 
Then, for any $\mathcal{S}\subseteq [k]$, Fano's inequality gives 
\begin{align}
    n\sum\nolimits_{i\in \mathcal{S},\mathcal{L}\cap \mathcal{S}^c\neq \emptyset} R_{i\mathcal{L}}
    = H(M_{\mathcal{S},\mathcal{S}^c}|W_{S^c})
    \le I(M_{\mathcal{S},\mathcal{S}^c} ; Z_m^n,Y_{{\mathcal S^c}}^n|W_{S^c})+nk(\epsilon)
\end{align}
where $k(\epsilon)\rightarrow0$ as $\epsilon\rightarrow0$. We further have
\begin{align}
    I(M_{\mathcal{S},\mathcal{S}^c}; Z_m^n, Y_{{\mathcal S^c}}^n|W_{S^c})
    & = \sum\nolimits_{j \in [n]} I(M_{\mathcal{S},\mathcal{S}^c} ; Z_{mj}, Y_{{\mathcal S^c}j} | Z_m^{j-1}, Y_{{\mathcal S^c}}^{j-1}, W_{S^c},X_{\mathcal{S}^cj}) \nonumber \\
    & \leq \sum\nolimits_{j \in [n]} I(M_{\mathcal{S},\mathcal{S}^c},X_{\mathcal{S}j}; Z_{mj}, Y_{{\mathcal S^c}j} | Z_m^{j-1}, Y_{{\mathcal S^c}}^{j-1}, W_{S^c},X_{\mathcal{S}^cj}) \nonumber \\
    & \leq \sum\nolimits_{j \in [n]} I(X_{\mathcal{S}j}; Z_{mj}, Y_{{\mathcal S^c}j} | Z_m^{j-1},X_{S^c j}).
\end{align}
Defining $T_m=(Q,Z_m^{Q-1})$ for a time-sharing variable $Q$ gives the desired inequalities for some $p(x_{[k]},t_{[a]})$. Moreover, one may replace $p(x_{[k]},t_{m})$ with \eqref{eq:p-factorization-2} because all mutual information terms depend only on the marginals $p(x_{[k]},t_{m})$ for $m\in[a]$.

The cardinality bound \eqref{eq:card-bound} follows by standard arguments; we sketch the proof in Appendix~\ref{appendixD}.
\end{proof}

\begin{remark} One can interpret  $Z_m$ as being provided by a genie to all terminals, i.e., $Y_i$ is replaced with $Y_{i}'=(Y_i,Z_m)$ for all $i\in[k]$. The bounds in \eqref{eqn50-2} and \eqref{eqnDBs1} apply to this enhanced channel.
\end{remark}

\begin{remark} \label{remark-cut-set}
One recovers the cut-set bound with $Z_m$ a constant. To see this, note that the constraints \eqref{eqnDBs1} are redundant by the chain rule in Appendix~\ref{sec:appendixA} and the non-negativity of $\lambda$-multivariate and mutual information. We further have $I(X_{\mathcal S};Y_{{\mathcal S^c}}|X_{\mathcal S^c},T_m)\leq I(X_{\mathcal S};Y_{{\mathcal S^c}}|X_{\mathcal S^c})$ so it is optimal to choose $T_m$ independent of $X_{[k]}$. Of course, the interpretation that a constant $Z_m$ represents an ``auxiliary receiver'' is a formal one.
\end{remark}

\begin{remark} \label{remark-active-terminals}
Let $\mathcal{U}$ be the set of potentially active terminals, i.e., $|\mathcal{X}_i|>1$ for $i\in\mathcal{U}$ and $H(X_i)=0$ otherwise.
Using the chain rule in Appendix~\ref{sec:appendixA}, the DB constraints \eqref{eqnDBs1} are
\begin{align} 
    & I_\lambda(X_{i_1};X_{i_2};\cdots;X_{i_u} | T_m)
    \le I_\lambda(X_{i_1};X_{i_2};\cdots;X_{i_u} | Z_m,T_m) \nonumber \\
    & \qquad + I_\lambda(Y_{i_1};Y_{i_2};\cdots;Y_{i_u} | X_{\mathcal{U}},Z_m,T_m)
    + \sum\nolimits_{\mathcal{B} \subsetneq \mathcal{U}} \lambda_{\mathcal B} I(X_{\mathcal{B}};Y_{\mathcal{B}^c} | X_{\mathcal{B}^c}, Z_m, T_m) 
    \label{eqnDBs1-2a}
\end{align}
where $\mathcal{B}^c$ is here the complement of $\mathcal{B}$ in $\mathcal{U}$. The sum over $\mathcal B$ in \eqref{eqnDBs1-2a} vanishes by choosing $Z_m=X_{\mathcal U}$ or $Z_m=Y_{\mathcal U}$, for example. Also, for additive-noise channels with $Y_i=g_i(X_{\mathcal{U}})+N_i$ for some functions $g_i(\cdot)$ and all $i\in[k]$, and where the $N_1,N_2,\cdots,N_k$ are mutually independent of each other and $X_{\mathcal{U}}$, we have
\begin{align}
    I_\lambda(Y_{i_1};Y_{i_2};\cdots;Y_{i_u} | X_{\mathcal{U}},Z_m,T_m) = I_\lambda(N_{i_1};N_{i_2};\cdots;N_{i_u} | X_{\mathcal{U}},Z_m,T_m)
\end{align}
which is zero if one chooses $Z_m$ that are combinations of the $X_i$ and $Y_i$.
\end{remark}

\begin{remark}
    Suppose terminal $i$ is a relay, i.e., $R_{i\mathcal{S}} = 0$ for all $\mathcal{S} \subseteq [k]\setminus\{i\}$ and $W_i$ is a constant. Assume that $H(Y_i|Z_m) = 0$. Then $H(X_{ij}|Z_m^{j-1}) = 0$ for all $j$.
    Consequently, we have $H(X_i|T_m) = 0$ and can write $T_m = (X_i, T'_m)$ for some auxiliary random variable $T'_m$.
\end{remark}

\begin{remark}
    An extension of Theorem~\ref{thm2} considers adaptive parallel channels in which the $Z_{[a]}$ depend on the conditional distribution $p_{X_{[k]}|T_m = t_m}$; see \cite[Section~VI]{hekstra1989dependence}. Specifically, for each realization $T_m = t_m$, define the auxiliary receivers through a conditional distribution $P_{Z_{[a]}|X_{[k]}}$ that depends on $p_{X_{[k]}|T_m}(\cdot \mid t_m)$. We do not explore this idea here, but emphasize that it appears promising.
\end{remark}

\subsubsection{Refinement}
\label{subsubsec:refinement}

The DB constraint \eqref{eqnDBs1} seems most useful with $\mathcal {U}=[k]$, which means the final mutual information term vanishes. However, this approach treats all messages equally. For example, for $k=3$ the constraints \eqref{eqnDBs1} are
\begin{align} 
    I_\lambda(X_1;X_2;X_3|T_m) \le 
    I_\lambda(X_1Y_1 ; X_2Y_2; X_3Y_3|Z_m,T_m).
\end{align}
Instead, one might wish to focus on a subset $\mathcal{V}\subsetneq[k]$ of terminals whose messages are destined for receivers in $\mathcal{V}^c$. To accomplish this, we provide $W_{\mathcal{V}^c}$ to all terminals. Consider $Z_{mj}=Z_{mj}' W_{\mathcal{V}^c} Y_{\mathcal{V}^cj}$, where $Z_{mj}'$ plays the role of $Z_{mj}$ previously. This $Z_{mj}$ satisfies the DB Markov chain \eqref{eq:DB-Markov-chain2}. One might also wish to consider $Z_j=Z_j' W_{\mathcal{V}^c}$.

Now consider $\mathcal{U}=\mathcal{V}$; similar steps are possible for $\mathcal{U}\ne\mathcal{V}$. 
We identify $T_{m}=(Q,Z_m^{Q-1})$ and follow the steps of the proof of Theorem~\ref{thm2} to obtain
\begin{align} 
    & I_\lambda(X_{i_1}Y_{i_1};X_{i_2}Y_{i_2};\cdots;X_{i_u}Y_{i_u}|Z_m,T_m)
    \nonumber \\
    & \overset{(a)}{=} I_\lambda(X_{i_1}Y_{i_1};X_{i_2}Y_{i_2};\cdots;X_{i_u}Y_{i_u}|Z_m',X_{\mathcal{U}^c},Y_{\mathcal{V}^c},T_m)  \nonumber \\
    & \overset{(b)}{\ge} I_\lambda(X_{i_1};X_{i_2};\cdots;X_{i_u}|X_{\mathcal{U}^c},T_m)
    + \left(1-\sum\nolimits_{\mathcal B\subsetneq \mathcal{U}}\lambda_{\mathcal B}\right) \underbrace{I(X_{[k]};Z_m',Y_{\mathcal{U}}|X_{\mathcal{U}},X_{\mathcal{U}^c},T_m)}_{\displaystyle =0}
    \label{eqnDBs1new}
\end{align}
where steps $(a)$ and $(b)$ follow because $W_{\mathcal{V}^c}$ is part of $T_m$. We also obtain the rate bounds
\begin{align}
    \sum\nolimits_{i\in \mathcal{S},\mathcal{L}\cap \mathcal{S}^c\neq \emptyset} R_{i\mathcal{L}}
    \leq I(X_{\mathcal S};Z_m',Y_{\mathcal S^c}|X_{\mathcal S^c},T_m),
    \quad \forall\, \mathcal S\subseteq \mathcal{U} . \label{eqn50-2new}
\end{align}

For example, consider $k=3$ and $\mathcal{U}=\{1,2\}$ so $\mathcal{U}^c=\{3\}$. We then have
\begin{subequations}
\begin{align}
    R_{12}+R_{13}+R_{1\{2,3\}} & \leq I(X_1;Z_m',Y_2,Y_3|X_2,X_3,T_m) \\
    R_{21}+R_{23}+R_{2\{1,3\}} & \leq I(X_2;Z_m',Y_1,Y_3|X_1,X_3,T_m) \\
    R_{13}+R_{1\{2,3\}} + R_{23}+R_{2\{1,3\}} & \le I(X_1,X_2;Z_m',Y_3|X_3,T_m) \\
    I(X_1;X_2|X_3,T_m) 
    & \le I(X_1Y_1;X_2Y_2|Z_m',X_3,Y_3,T_m) .
\end{align}
\end{subequations}
Note that choosing $Z_m'$ as a constant gives the same bounds as $Z_m'=Y_3$.

\subsubsection{Gaussian Networks}
\label{subsubsec:Gaussian-Networks}

Consider real-valued $k$-user channels and auxiliary receivers $Z_{[a]}$ of the form 
\begin{align}
    (Y_{[k]},Z_{[a]}) = X_{[k]} A + N_{[k+a]}
    \label{eq:Gaussian-channel}
\end{align}
for some  $k\times (k+a)$ matrix $A$ and a Gaussian noise vector $N_{[k+a]}$ that is independent of $X_{[k]}$. Consider the average block power constraints
\begin{align}
    \frac1n \sum_{j\in[n]} \mathbb{E}[X_{ij}^2] \leq P_i, \quad \forall\, i\in[k].
    \label{eq:power-constraints}
\end{align}
The outer bound in Theorem \ref{thm2} is valid for {\color{black} continuous channels and the power constraints \eqref{eq:power-constraints}, see Sec. \ref{subsubsec:continuous-rvs}.}

\begin{theorem}\label{proof-of-Gaussian-Optimality}
To evaluate the outer bound in Theorem \ref{thm2} for Gaussian channels and auxiliary receivers, it suffices to consider jointly Gaussian $X_{[k]}, T_{[a]}$ satisfying $\mathrm{E}[X_i^2]\leq P_i$, $i\in[k]$. Moreover, $T_m$ has dimension at most $k$ for all $m\in[a]$.
\end{theorem}
\begin{proof}
    See Appendix \ref{appendixE}.
\end{proof}

\begin{remark}
For $a=1$, one can assume that $T_1$ is a constant random variable. The complexity of evaluating the outer bound is then equivalent to that of evaluating the cut-set bound. To see this, consider jointly Gaussian $X_{[k]}$ and $T_1$, define $T'_1$ as a constant, and let
\[
    K_{X'_{[k]}} = K_{X_{[k]}|T_1}.
\]
Now replace $(X_{[k]}, T_1)$ with $(X'_{[k]}, T'_1)$. The new random variables satisfy the power constraints and yield the same outer bound as $(X_{[k]}, T_1)$.
\end{remark}

\begin{remark}
For $a>1$, evaluating the outer bound is more difficult because the unconditional covariance matrix $K_{X_{[k]}}$ links the $T_m$. For example, the conditional covariance matrices must satisfy
\begin{align} \label{eq:remark-a1}
    K_{X_{[k]}|T_m} \preceq K_{X_{[k]}}, \quad \forall m. 
\end{align}
To illustrate the restrictions, consider $k = 2$ and $P_1=P_2=1$, and suppose we would like to use  
\[
    K_{X_{[k]}|T_1} = \begin{pmatrix} 1 & 1 \\ 1 & 1 \end{pmatrix}, \quad  
    K_{X_{[k]}|T_2} = \begin{pmatrix} \phantom{-}1 & -1 \\ -1 & \phantom{-}1 \end{pmatrix}.  
\]  
However, this choice is invalid because there is no $K_{X_{[k]}}$ satisfying \eqref{eq:remark-a1} and the power constraints.
\end{remark}

\begin{remark}
A natural choice for $Z_m$ is to select subsets of channel inputs and/or outputs, possibly their noisy versions. For example, we may define:
$Z_1 = Y_{\mathcal{S}_1}$ and 
$Z_2 = Y_{\mathcal{S}_2}$
for some subsets $\mathcal{S}_1, \mathcal{S}_2 \subseteq [k]$. 
When $\mathcal{S}_2 \subset \mathcal{S}_1$, this induces the Markov chain $X_{[k]}Y_{[k]} \mkv Z_1 \mkv Z_2$.\footnote{Another example is when $Z_1 = Y_{\mathcal{S}}$ and 
$Z_2 = \tilde Y_{\mathcal{S}}$ where $\tilde{Y}_i$ is $Y_i$ plus noise.} Moreover, for all $j \in [n]$, we have the Markov chains
\[
X_{[k]j}\mkv Z_1^{j-1} \mkv Z_2^{j-1}.
\]
Consequently, we obtain $K_{X_{[k]}|T_1} \preceq K_{X_{[k]}|T_2}$, since $T_1$ and $T_2$ represent the past of $Z_1$ and $Z_2$ respectively.

To formalize this claim, we can adapt the proof of Theorem~\ref{proof-of-Gaussian-Optimality} to account for the Markov condition while maintaining the joint Gaussianity of the random variables. We omit the detailed proof; the key modifications are as follows.
\begin{enumerate} \itemsep 0pt
    \item Replace each $T_1$ with $(T_1, T_2)$ in the outer bound of Theorem~\ref{thm2} and show that the bound remains valid under the Markov chain $X_{[k]}Y_{[k]} \mkv Z_1 \mkv Z_2$.
    \item Modify the factorization in \eqref{eq:p-factorization-2} to
    \begin{align} 
        p(x_{[k]}) \cdot p(t_{1},t_2|x_{[k]}) \cdot \left( \prod\nolimits_{m\geq 3} p(t_{m}|x_{[k]}) \right) \cdot p(y_{[k]}|x_{[k]}) \cdot p(z_{[a]}|x_{[k]},y_{[k]}).
    \end{align}
\end{enumerate}
The arguments in Appendix~\ref{appendixE} can be extended to this modified outer bound structure.
\end{remark}

\subsection{MAC with Generalized Feedback}
\label{subsec:MAC-GF}

A $k$-user MAC with generalized feedback is a memoryless network with $k+1$ terminals and the channel
\begin{align}
p(y,y_{1},y_{2},\cdots, y_{k}|x_1,x_2, \cdots, x_k)
\end{align}
where we write $Y:=Y_{k+1}$. Terminal $i$, $i \in [k]$, sends a message with rate $R_i$ to the destination.

The MAC with $k=2$ users has been the subject of many studies; see \cite{Gaarder-IT75,king78,
Cover-Leung-IT81,
carleial1982multiple,willems82,hekstra1989dependence,
Kramer-IT02,kramer03,
Sendonaris-COMM03,Sendonaris-COMM03b,
Laneman-ISIT04,
kramer2006dependence,gastpar06,gastpar06b,
wigger08,
tandon2011dependence,Sula-IT20,kramer-ITW21, kosut2023perfect,kosut2025switched}. However, even characterizing the rate pairs $(R_1, R_2)$ with $R_2=0$ remains an open problem. This case is the relay channel where the second user has no message but supports communication, e.g., by enabling range extension or higher rates. The MAC with $k>2$ users has been studied in
\cite{Kramer-IT02,kramer03,kramer2006dependence,Sula-IT20,kramer-ITW21}. 

Theorem \ref{thm2} with $\mathcal{U}=[k]$ yields the following result; see Remark~\ref{remark-active-terminals}.

\begin{corollary} \label{cor-3}
    Consider an auxiliary channel $p(z_{[a]}|x_{[k]},y,y_{[k]})$. Any achievable rate tuple $(R_1,\cdots, R_k)$ for a $k$-user MAC with generalized feedback satisfies
\begin{align}
    \sum\nolimits_{i\in \mathcal{S}} R_i \leq I(X_{\mathcal S}; Z_m,Y,Y_{{\mathcal S^c}} | X_{\mathcal S^c},T_m),\qquad \forall\, \mathcal S\subseteq[k],\; m \in [a]\label{eqn50}
\end{align}
    for some joint distribution that factorizes as
\begin{align} \label{eq:p-factorization}
    p(x_{[k]}) \cdot \left( \prod\nolimits_{m\in[a]} p(t_{m}|x_{[k]}) \right) \cdot p(y,y_{{[k]}}|x_{[k]}) \cdot p(z_{[a]}|x_{[k]},y,y_{{[k]}})
\end{align}
such that, for any $\mathcal{V}=\{i_1,i_2,\cdots,i_v\}\subseteq[k]$ where $|\mathcal{V}|=v\geq 2$, any fractional partition $\lambda$ of $\mathcal{V}$, and all $m\in[a]$, we have the DB constraints
\begin{align} 
    & I_\lambda(X_{i_1}Y_{{i_1}};X_{i_2}Y_{{i_2}};\cdots;X_{i_v}Y_{{i_v}}|Z_m,T_m)
    \nonumber \\
    & \ge I_\lambda(X_{i_1};X_{i_2};\cdots;X_{i_v}|T_m)
    + \left(1-\sum\nolimits_{\mathcal B\subsetneq \mathcal{V}}\lambda_{\mathcal B}\right) I(X_{[k]}; Z_m, Y_{{\mathcal{V}}} | X_{\mathcal{V}},T_m).
    \label{eqnDBs1-2}
\end{align}
Moreover, one may assume the cardinality bounds \eqref{eq:card-bound}.
\end{corollary}

\subsubsection{One Auxiliary Receiver}
\label{subsec:one-auxiliary-receiver}

Corollary~\ref{cor-3} improves the cut-set bound for $k$-user MACs with generalized feedback. For example, one can generalize the bounds in \cite{gastpar06,gastpar06b} by using $\mathcal{V}=[k]$ and $a=1$ with $Z_1=Y$.

\begin{corollary}
\label{cor-4}
Consider a $k$-user MAC with generalized feedback. Any achievable $(R_1, \cdots, R_k)$ satisfies
\begin{align}
    \sum\nolimits_{i\in \mathcal{S}} R_i \leq I(X_{\mathcal S};Y,Y_{{\mathcal S^c}}|X_{\mathcal S^c},T),\qquad \forall\, \mathcal S\subseteq[k]
\end{align}
for some $p(t,x_{[k]})\cdot p(y,y_{1},y_{2},\cdots, y_{k}|x_{[k]})$ such that for any $\mathcal{V}\subseteq[k]$ where $|\mathcal{V}|\geq 2$ and any fractional partition $\lambda$ for indices in $\mathcal{V}$ we have the DB constraint
\begin{align} \label{eqnDBs2}
    & I_\lambda(X_{i_1}Y_{{i_1}};X_{i_2}Y_{{i_2}};\cdots;X_{i_v}Y_{{i_v}}|Y,T)
    \nonumber \\
    & \ge I_\lambda(X_{i_1};X_{i_2};\cdots;X_{i_v}|T)
    + \left(1-\sum\nolimits_{\mathcal B\subsetneq \mathcal{V}}\lambda_{\mathcal B}\right) I(X_{[k]};Y,Y_{{\mathcal{V}}}|X_{\mathcal{V}},T)
\end{align}
and the cardinality of $T$ can be limited as in \eqref{eq:card-bound}.
\end{corollary}

\begin{remark}
One recovers the cut-set bound by discarding the dependence balance constraint \eqref{eqnDBs2}; the best $T$ is then a constant.
\end{remark}

The following example illustrates the benefit of using Corollary~\ref{cor-4} with $\mathcal{V}\neq [k]$ in \eqref{eqnDBs2}. Suppose $X_k$ does not significantly affect the channel outputs; assume $X_k$ is constant for simplicity. However, suppose the feedback is the informative
\begin{align} \label{eq:informativeY}
    Y_{k} = X_{[k-1]} Y_{[k-1]}.
\end{align}
The choice $\mathcal{V}=[k]$ can yield weak bounds, since $X_kY_{k}$ is informative even when $X_k$ is a constant. On the other hand, the choice $\mathcal{V}=[k-1]$ makes the term
$I(X_{[k]} ; Y Y_{[k-1]} |X_{\mathcal{V}},T)$
vanish since $X_k$ is a constant. Moreover,
$I_\lambda(X_{1}Y_{{1}};X_{2}Y_{{2}};\cdots;X_{k-1}Y_{{k-1}}|T,Y)$ does not include $Y_{k}$. 

As another example, let  $Y_{i}=Y$ for all $i$, i.e., the terminals have a common output. The DB constraint \eqref{eqnDBs2} can be written as 
\begin{align}
    & \left( 1-\sum\nolimits_{\mathcal B\subsetneq \mathcal{V}} \lambda_{\mathcal B} \right) H(X_\mathcal{V}|Y,T) + \sum\nolimits_{\mathcal B\subsetneq \mathcal{V}} \lambda_{\mathcal B} H(X_{\mathcal{V}-\mathcal B}|Y,T) \nonumber \\
    & \geq \left( 1-\sum\nolimits_{\mathcal B\subsetneq \mathcal{V}} \lambda_{\mathcal B} \right) H(X_\mathcal{V}|T) + \sum\nolimits_{\mathcal B\subsetneq \mathcal{V}} \lambda_{\mathcal B} H(X_{\mathcal{V}-\mathcal B}|T) +
    \left( 1-\sum\nolimits_{\mathcal B\subsetneq \mathcal{V}} \lambda_{\mathcal B} \right) I(X_{[k]};Y|X_{\mathcal{V}},T)
\end{align}
which simplifies as
\begin{align}
    I(X_{[k]};Y|T) \leq \sum\nolimits_{\mathcal B\subsetneq \mathcal{V}} \lambda_{\mathcal B}\, I(X_{[k]} ; Y | X_{\mathcal{V}-\mathcal{B}}, T) \label{eqnlamb3}
\end{align}
where the sum is over a fractional partition of $\mathcal V$. We argue that $\mathcal{V}=[k]$ gives the strongest bound because one can convert the fractional partition of $\mathcal{V} \subseteq [k]$ to a fractional partition of $[k]$. Let $i_1\in\mathcal{V}$. For any $\mathcal{B}\subsetneq\mathcal{V}$, let
$
\lambda'_{\mathcal{B}}=\lambda_{\mathcal{B}}
$
if $i_1\notin \mathcal{B}$. For $i_1\in \mathcal{B}$, let
$
\lambda'_{\mathcal{B}~\cup ([k]-\mathcal{V})} = \lambda_{\mathcal{B}}
$. 
Finally, assign  $\lambda'_{\mathcal{B}'}=0$ for all the other sets $\mathcal{B}'\subsetneq[k]$ that are not of these two forms. Observe that $\mathcal{V}=[k]$ recovers the refined DB equations of \cite{kramer2006dependence} if one optimizes over $\lambda$; see Appendix~\ref{subsec:appendixA1}.

\begin{remark}
Choosing $\lambda_{[k]-\{i\}}=1/(k-1)$ for $i\in[k]$ in \eqref{eqnlamb3} gives (cf. Appendix~\ref{subsec:appendixA1} and \eqref{eq:frac-example})
\begin{align} \label{eq:MACS-GF-DB}
 I(X_{[k]};Y|T) \leq \frac{1}{k-1} \sum\nolimits_{|\mathcal B|=k-1} 
 I(X_{\mathcal{B}} ; Y | X_{\mathcal{B}^c}, T).
\end{align}
This bound gives the sum-rate capacity for $k$-user Gaussian MACs with symmetric channel coefficients and power constraints; see \cite{Kramer-IT02,Sula-IT20} and Section \ref{subsubsec:Gaussian} below. It is interesting to consider whether other partitions $\lambda$ give capacity points, including for asymmetric channel coefficients and power constraints.
\end{remark}

\subsubsection{Two Auxiliary Receivers}
\label{subsec:two-auxiliary-receivers}

We next consider $a=2$ auxiliary receivers.
One can generalize the bounds in  \cite{gastpar06,gastpar06b,tandon2011dependence} by using $\mathcal{V}=[k]$, $Z_1=Y_{{[k]}}$, and $Z_2$ a constant to include the cut-set bounds (cf. Remark \ref{remark-cut-set}).

\begin{corollary}[Extension of {\cite[Theorem~3]{gastpar06} and \cite[Theorem~1]{tandon2011dependence}} to $k\ge2$]
\label{cor-5}
Consider a $k$-user MAC with generalized feedback. Any achievable $(R_1, \cdots, R_k)$ satisfies
\begin{align}
    \sum\nolimits_{i\in \mathcal{S}}
    R_i & \leq \min\left(\, I(X_{\mathcal S};Y,Y_{{[k]}}|X_{\mathcal S^c},T), \,  I(X_{\mathcal S};Y,Y_{{\mathcal S^c}}|X_{\mathcal S^c}) \,\right),
    \quad \forall\, \mathcal S\subseteq[k]
\end{align}
for some $p(t,x_{[k]})\cdot p(y,y_{{[k]}}|x_{[k]})$ satisfying
\begin{align}
    I_\lambda(X_{1};X_{2};\cdots;X_k|T)
    & \leq I_\lambda(X_{1};X_{2};\cdots;X_k|{Y}_{{[k]}},T)
\label{DBMAC6ann}
\end{align}
and the cardinality of $T$ can be limited as in \eqref{eq:card-bound}.
\end{corollary}

\begin{remark}
For $k=2$, the DB constraint \eqref{DBMAC6ann} appeared in \cite{gastpar06}. This paper also states that Gaussian variables are optimal for Gaussian channels by using the variance-based DB constraint of \cite[Theorem 2]{kramer2006dependence}. However, the proof in \cite{kramer2006dependence} is incorrect because \cite[Eq. (42)]{kramer2006dependence} is valid only if certain Markov chains transfer from general to Gaussian distributions. This is not always the case, as pointed out in \cite[Chapter 3]{wigger08}. The paper \cite{gastpar06b} instead uses Lagrange optimization and the entropy power inequality.
\end{remark}

\subsubsection{Two Users}

We specialize to $k=2$ users. We begin by stating Willems' achievable region and an outer bound of Tandon-Ulukus that uses $Z_1=Y_{[2]}$ and the sum-rate cut bound.

\begin{proposition}[Willems \cite{willems82}]
\label{prop2}
    An achievable region for the two-user MAC with generalized feedback is the set of rate pairs $(R_1, R_2)$ satisfying
\begin{subequations}
\begin{align}
    R_1 & \leq  I(X_1;Y|X_2,U_1,T) + I(U_1 ; Y_2 | X_2,T) \\
    R_2 & \leq  I(X_2;Y|X_1,U_2,T) + I(U_2 ; Y_1 | X_1,T) \\
    R_1+R_2&\leq I(X_1,X_2;Y|U_1,U_2,T) + I(U_1 ; Y_2 | X_2,T) + I(U_2 ; Y_1 | X_1,T) \\
    R_1+R_2&\leq I(X_1,X_2;Y)
\end{align}
\end{subequations}
    where $U_1X_1\mkv T\mkv U_2X_2$ forms a Markov chain.
\end{proposition}

\begin{proposition}[Tandon-Ulukus {\cite[Theorem 1]{tandon2011dependence}}]
\label{prop3}
    The capacity region of the two-user MAC with generalized feedback is a subset of the rate pairs $(R_1, R_2)$ satisfying
    \begin{subequations}
    \begin{align}
    R_1 & \le  I(X_1;Y,Y_1,Y_2|X_2,T) \\
    R_2 & \le  I(X_2;Y,Y_1,Y_2|X_1,T) \\
    R_1+R_2&\le \min\left( I(X_1,X_2;Y,Y_1,Y_2|T),\,
    I(X_1,X_2;Y) \right)\\
    I(X_1;X_2|T)&\le I(X_1;X_2|Y_1,Y_2,T)\label{tudbconst}
    \end{align}
    \end{subequations}
    where $|\mathcal T|\le |\mathcal X_1|\,|\mathcal X_2|+3$.
\end{proposition}

\begin{remark}\label{rmk16}
The Tandon-Ulukus bound is weaker than the cut-set bound in general. For example, if $R_2=0$ we have a relay channel with feedback to the transmitter, and the outer bound of Proposition \ref{prop2} is
\begin{align}
    R_1 \le \max_{P_{X_1,X_2}} \min \left( I(X_1;Y,Y_1,Y_2|X_2), I(X_1,X_2;Y) \right)
    \label{eq:TU-relay-bound}
\end{align}
where it is best to choose $T=X_2$ to satisfy the DB constraint. The cut-set bound improves \eqref{eq:TU-relay-bound} in general because it does not include $Y_1$.
\end{remark}

We next consider the special case of $a=2$ auxiliary receivers with $Z_1=Y_{[2]}$ and $Z_2=Y$ which improves Proposition~\ref{prop3}.

\begin{corollary}\label{cor-6}
Consider a two-user MAC with generalized feedback. Any achievable $(R_1, R_2)$ satisfies
\begin{subequations}
\begin{align}
    R_{1} & \leq \min\left(\, I(X_{1};Y,Y_{{1}},Y_{{2}}|X_{2},T_1),\,
    I(X_1;Y,Y_{2}|X_2,T_2) \,\right)
    \label{DBMAC-21} \\
    R_{2} & \leq \min\left(\, I(X_{2};Y,Y_{{1}},Y_{{2}}|X_{1},T_1),\, I(X_2;Y,Y_{1}|X_1,T_2) \,\right)
    \label{DBMAC-22} \\
    R_{1}+R_{2} & \leq \min\left(\, I(X_{1},X_{2};Y,Y_{{1}},Y_{{2}}|T_1), \, I(X_{1},X_{2};Y|T_2) \,\right)
    \label{DBMAC-23}
\end{align}
\end{subequations}
for some $p(x_1,x_2)\cdot p(t_1|x_1,x_2)\, p(t_2|x_1,x_2) \cdot p(y,y_{1},y_{2}|x_1,x_2)$ satisfying
\begin{subequations}
\begin{align}
    I(X_{1};X_{2}|T_1) & \leq I(X_{1};X_{2}|Y_{{1}},Y_{{2}},T_1) 
    \label{DBMAC-2a} \\
    I(X_1;X_2|T_2) & \leq I(X_1 Y_{1}; X_2 Y_{2} | Y,T_2) .
    \label{DBMAC-2b}
\end{align}
\end{subequations}
Moreover, one can bound $|\mathcal{T}_1|\leq |\mathcal{X}_1||\mathcal{X}_2|+3$ and $|\mathcal{T}_2|\leq |\mathcal{X}_1||\mathcal{X}_2|+3$.
\end{corollary}

\begin{remark}
By discarding the DB constraint \eqref{DBMAC-2b}, the best $T_2$ is constant. Thus, we recover the cut-set bound if $R_2=0$, which improves Proposition \ref{prop3} in general. 
\end{remark}

\subsubsection{Gaussian Channels}
\label{subsubsec:Gaussian}

Consider a Gaussian MAC with outputs
\begin{subequations}
\begin{align}
        Y & = g_1 X_1 + g_2 X_2 + N \\
    Y_{1} & = g_{21} X_2 + N_1 \\
    Y_{2} & = g_{12} X_1 + N_2
\end{align}
\end{subequations}
where the $g_i,g_{ij}$ are channel coefficients and $N,N_1,N_2$ are Gaussian noise variables, i.e., $(N,N_1,N_2)$ is independent of $(X_1,X_2)$ but the $N,N_1,N_2$ may be correlated.

\begin{remark}\label{rmk20}
For the Gaussian MAC, the bound \eqref{DBMAC-2b} can be written as
\begin{align}
    I(X_{1},X_{2};Y|T_2)
    \le I(X_1;Y,Y_{{2}}|X_2,T_2) + I(X_2;Y,Y_{{1}}|X_1,T_2)
    + I(N_1;N_2|N).
    \label{eq:mac-gf-uc}
\end{align}
To see this, observe that the chain rule gives
\begin{align}
    I(X_{1};X_{2}|T_2) \le I(X_{1}Y_{1};X_{2}Y_{2}|Y,T_2)
    & = I(X_{1};X_{2}|Y,T_2) + I(Y_{1};Y_{2} \,|\, X_{[2]},Y,T_2)  \nonumber \\
    & \quad + I(X_1;Y_{2}|Y,X_2,T_2) + I(X_2;Y_{1}|Y,X_1,T_2).
\end{align}
One obtains \eqref{eq:mac-gf-uc} by rewriting terms.
\end{remark}

The paper \cite{tandon2011dependence} studied two types of feedback:
\begin{itemize} \itemsep 0pt
    \item \emph{Noisy feedback}: $Y_{1}=Y+N_1'$, $Y_{2}=Y+N_2'$ where $N,N_1',N_2'$ are independent;
    \item \emph{User cooperation}: $Y_{1}=g_{21} X_2+N_1$, $Y_{2}=g_{12} X_1+N_2$, and $N,N_1,N_2$ are independent;
\end{itemize}
The two types of feedback are related. For example, under noisy feedback, users 1 and 2 can compute $\tilde{Y}_{1}=g_2 X_2+(N+N_1)$ and $\tilde{Y}_{2}=g_1 X_1+(N+N_2)$, respectively. Thus, noisy feedback is a special case of user cooperation with correlated noise. We discuss the noisy feedback setting in Appendix \ref{noisy-feedback}.

\begin{theorem}
    For user cooperation where $N,N_1,N_2$ are mutually independent, the bound in Corollary \ref{cor-6} collapses to the bound in Proposition~\ref{prop3}.
\end{theorem}
\begin{proof}
The bound in Corollary \ref{cor-6} is always a subset of the bound in Proposition~\ref{prop3}. To show the other direction, it suffices to show that the maximum weighted sum-rate $\lambda R_1+R_2$ of the region in Proposition~\ref{prop3} is less than or equal to the  maximum weighted sum-rate $\lambda R_1+R_2$ of the region in Corollary \ref{cor-6} for any arbitrary $\lambda\geq 1$ (the proof for $R_1+\lambda R_2$ is similar). Assume that $(R^*_1,R^*_2)$ reaches the maximum weighted sum-rate $\lambda R_1+R_2$ of the region in Proposition~\ref{prop3}  via some $p_{X_1,X_2,T}$. It suffices to show that  $(R^*_1,R^*_2)$ also belongs to the region in Corollary \ref{cor-6}.
 
We claim that there is a maximizer $p_{X_1,X_2,T}$ for the $\lambda$ sum-rate of the region  in Proposition~\ref{prop3} satisfying
\begin{align}
    I(X_1,X_2;Y) \leq I(X_{1},X_{2};Y,Y_{{1}},Y_{{2}}|T).\label{eqneqTU}
\end{align}
We first show how to complete the proof assuming  \eqref{eqneqTU}. We show  $(R_1,R_2)$ belongs to the region in Corollary \ref{cor-6} with the choice of $T_1=T$ and $T_2$ being a constant random variable. For user cooperation, the bounds \eqref{DBMAC-21}--\eqref{DBMAC-2b} for the choice of $T_1=T$ and $T_2$ being a constant reduce to
\begin{subequations}
\begin{align}
    R^*_{1} & \leq  \min\left(\, I(X_{1};Y,Y_{{2}}|X_{2},T),\,
    I(X_1;Y,Y_{{2}}|X_2) \,\right)
    \label{DBMAC-21-2} \\
    R^*_{2} & \leq \min\left(\, I(X_{2};Y,Y_{{1}}|X_{1},T),\, I(X_2;Y,Y_{{1}}|X_1) \,\right)
    \label{DBMAC-22-2} \\
    R^*_{1}+R^*_{2} & \leq \min\left(\, I(X_{1},X_{2};Y,Y_{{1}},Y_{{2}}|T), \, I(X_{1},X_{2};Y) \,\right)
    \label{DBMAC-23-2} \\
    I(X_{1};X_{2}|T) & \leq I(X_{1};X_{2}|Y_{1},Y_{2},T)
    \label{DBMAC-2a-2} \\
    I(X_1;X_2) & \leq I(X_1,Y_{1}; X_2,Y_{2} | Y) .
    \label{DBMAC-2b-2}
\end{align}
\end{subequations}

Note that the second bounds in \eqref{DBMAC-21-2}-\eqref{DBMAC-22-2} are redundant by the inequalities
\begin{subequations}
\begin{align}
I(X_{1};Y,Y_{{2}}|X_{2},T)
    &\le I(X_{1};Y,Y_{{2}}|X_{2}) 
    \label{mac-gf-uc-1} \\
I(X_{2};Y,Y_{{1}}|X_{1},T)
    &\le I(X_{2};Y,Y_{{1}}|X_{1}).
    \label{mac-gf-uc-2}
\end{align}
\end{subequations}
Compared to the constraints in Proposition~\ref{prop3}, we need to show \eqref{DBMAC-2b-2}. 
It is shown in \cite[Eq. (151)]{tandon2011dependence} that the constraint \eqref{tudbconst}
implies $I(X_1;X_2|T)=0$.
Since $I(X_1;X_2|T)=0$, we obtain
\begin{align}
I(X_{1},X_{2};Y,Y_{{1}},Y_{{2}}|T)
    & \le I(X_1;Y,Y_{{1}},Y_{{2}}|X_2,T) + I(X_2;Y,Y_{{1}},Y_{{2}}|X_1,T)
    \nonumber \\
    & = I(X_1;Y,Y_{{2}}|X_2,T) + I(X_2;Y,Y_{{1}}|X_1,T)\label{submodTU1}
\end{align}
where the last step uses the independence of $N_1,N_2,N$. Observe that
\begin{align}
    I(X_{1},X_{2};Y)
    & \leq I(X_{1},X_{2};Y,Y_{{1}},Y_{{2}}|T) \quad \text{... by \eqref{eqneqTU}}
    \nonumber \\
    & \le I(X_1;Y,Y_{{2}}|X_2,T) + I(X_2;Y,Y_{{1}}|X_1,T)
    \nonumber \\
    & \le I(X_1;Y,Y_{{2}}|X_2) + I(X_2;Y,Y_{{1}}|X_1).
\end{align}
by \eqref{mac-gf-uc-1}-\eqref{mac-gf-uc-2}. 
This bound is the same as \eqref{DBMAC-2b-2}.

It remains to prove \eqref{eqneqTU}. Due to the submodularity constraint \eqref{submodTU1}, the maximum weighted sum-rate is
\begin{align}
    \max\lambda R_1+R_2
    = \max_{p_{T}p_{X_1|T}p_{X_2|T}} 
    & (\lambda-1) I(X_1;Y,Y_2|X_2,T)
    \nonumber \\
    & + \min\left(\, I(X_{1},X_{2};Y,Y_{{1}},Y_{{2}}|T), \, I(X_{1},X_{2};Y) \,\right).\label{eqnMLL}
\end{align}
The paper \cite{tandon2011dependence} shows there is a maximizer with $T$ a scalar (Gaussian) random variable. Suppose
\begin{equation}
    I(X_1X_2;Y) > I(X_1X_2;Y,Y_1,Y_2|T)
\end{equation}
holds for this maximizer so $I(X_1X_2;Y,Y_1,Y_2|T)$ is the (strictly) minimizing term in \eqref{eqnMLL}. 
Using $I(X_1;X_2|T)=0$, we can write
\begin{subequations}
\begin{align}
   X_1 & = a_1 T+b_1G_1 \\
   X_2 & = a_2 T+b_2G_2
\end{align}
\end{subequations}
for independent standard normal variables $T,G_1,G_2$. First, assume that $a_1>0$. If we decrease $a_1$ and increase $b_1$ such that $a_1^2\mathsf{Var}[T]+b_1^2$ is preserved, the variance of $X_1$ will be preserved while the terms  $I(X_1;Y,Y_2|X_2,T)$ and $I(X_{1},X_{2};Y,Y_{{1}},Y_{{2}}|T)$ would increase, a contradiction.  Thus, we must have $a_1=0$. A similar argument shows that $a_2=0$, since decreasing $a_2$ would increase the expression in \eqref{eqnMLL}. However, if $a_1=a_2=0$, we have
\begin{equation}
    I(X_1X_2;Y,Y_1,Y_2|T)=I(X_1X_2;Y,Y_1,Y_2)\geq I(X_1X_2;Y)
\end{equation}
which contradicts our assumption.
\end{proof}

\begin{remark}
    Choosing $Z_3=(Y,Y_{1},Y_{2})$ gives the same rate bounds as $Z_1=(Y_{1},Y_{2})$ (with a $T_3$ rather than a $T_2$) but with the DB constraint
\begin{align}
    I(X_1 ; X_2 | T_3)
    \le I(X_1 ; X_2 | Y,Y_1,Y_2,T_3).
    \label{eq:mac-gf-uc2}
\end{align}
    By choosing $p_{T_3|X_1,X_2}=p_{T_1|X_1,X_2}$ we have $X_1\mkv T_3\mkv X_2$. Thus, the bound \eqref{eq:mac-gf-uc2} is redundant, and so are the rate bounds. This shows that this choice of $Z_3$ is redundant.
\end{remark}

We show that a more sophisticated choice of $Z_1$ and $Z_2$ strictly improves the bound in Proposition~\ref{prop3} for the user cooperation setup. First, as discussed in Remark \ref{rmk16}, Proposition~\ref{prop3} gives the following bound when $R_2=0$:
\begin{align}
    R_1 \le \max_{P_{X_1,X_2}} \min \left( I(X_1;Y,Y_1,Y_2|X_2), I(X_1,X_2;Y) \right).
\end{align}
For user-cooperation, $I(X_1;Y,Y_1,Y_2|X_2)=I(X_1;Y,Y_2|X_2)$ and the above bound reduces to the cut-set bound. Therefore, we must improve on the cut-set bound. Observe that the scalar Gaussian relay channel is a special case of user cooperation when $g_{21}=0$ and $R_2=0$. Thus, it suffices to improve the cut-set bound for the scalar relay channel. This is done in the next subsection.

\subsubsection{Relay Channel}

Fig.~\ref{setup:relay-1} shows a relay channel $p(y_{\mathrm{r}}, y | x, x_{\mathrm{r}})$ with $k=3$ transceivers. The bound in Theorem \ref{thm2} yields the following for $a=1$ auxiliary receiver ($\mathcal{U}$ is the set with the transmitter and relay indexes):
\begin{subequations}
\begin{align}
    & R \leq \min\big[ I(X;Y,Y_{\mathrm{r}},Z_1|X_{\mathrm{r}},T_1), I(X,X_{\mathrm{r}};Y,Z_1|T_1) \big] \label{bt21}\\
    & R \leq \min\big[ I(X;Y,Y_{\mathrm{r}}|X_{\mathrm{r}}), I(X,X_{\mathrm{r}};Y) \big] \label{bt22}
\end{align}
\end{subequations}
for some $p_{X_1,X_\rmr,T_1}$ satisfying
\begin{align}
    & I(X;X_{\mathrm{r}}|T_1) \leq I(X;X_{\mathrm{r}},Y_{\mathrm{r}}|T_1,Z_1).\label{bt23}
\end{align}

\begin{figure*}[t!]
\centering
	\begin{tikzpicture}
	\node at (-0.95,-0.95) {$M$};
	\draw [->,thick] (-0.6,-1) -- (0.2,-1);
	\draw (0.2,.-1.5) rectangle +(1.6,1); \node at (1,-1) {Encoder};
	\draw [->,thick] (1.8,-1)-- (3.1,-1); \node at (2.45,-0.7) {$X^n$};
	\draw [->,thick] (3.6,-0.5)-- (3.6,0.75); \node at (3.1,0.25) {$Y_{\mathrm{r}}^{i-1}$};
	\draw [<-,thick] (4.9,-0.5)-- (4.9,0.75); \node at (5.3,0.25) {$X_{\mathrm{r}i}$};
	\draw (3.1,-1.5) rectangle +(2.4,1); \node at (4.3,-1) {$p(y,y_{\mathrm{r}}|x,x_{\mathrm{r}})$};
	\draw (3.1,0.75) rectangle +(2.4,1); \node at (4.3,1.25) {Relay Encoder};
	\draw [->,thick] (5.5,-1) --(6.6,-1); \node at (6.05,-.7) {$Y^n$};
	\draw (6.6,-1.5) rectangle +(1.6,1); \node at (7.4,-1) {Decoder};
	\draw [->,thick] (8.2,-1) --(9.0,-1); \node at (9.25,-0.945) {$\hat M$};
	\end{tikzpicture}
	\caption{Relay channel.}
	\label{setup:relay-1}
\end{figure*}

The Gaussian relay channel is characterized by the equations:
\begin{subequations}
\begin{align} 
   Y_\mathrm{r} & = g_{12} X + N_\mathrm{r} \\
   Y & = g_{13} X + g_{23} X_\mathrm{r} + N_e
\end{align}
\end{subequations}
where $g_{12}$, $g_{13}$, and $g_{23}$ are channel gain coefficients, while $N_e \sim \mathcal{N}(0,1)$ and $N_\mathrm{r} \sim \mathcal{N}(0,1)$ are independent Gaussian noise terms. Additionally, both input signals $X$ and $X_\mathrm{r}$ are subject to an average power constraint $P$. Let $C(P)$ be the capacity under the power constraint $P$.
The cut-set bound is
\begin{equation}
    \max \min\{I(X,X_{\mathrm{r}};Y),
I(X;Y,Y_{\mathrm{r}}|X_{\mathrm{r}})\}
\end{equation}
where the maximum is over $P_{X,X_\rmr}$ satisfying the power constraints
\begin{align}
    \mathbb{E}[X^2]\le P,\qquad \mathbb{E}[X_\mathrm{r}^2]\le P. \label{eqnpwc}
\end{align}
The cut-set bound is optimized by Gaussian inputs \cite[Sec 16.2]{elk11}. Define
\begin{equation}
   \text{Cut-set}(P_1,P_2,\rho)
   = \min\{I(X,X_{\mathrm{r}};Y),
   I(X;Y,Y_{\mathrm{r}}|X_{\mathrm{r}})\}
\end{equation}
with $(X,X_\rmr)$ distributed as
\begin{align}
    (X,X_{\mathrm{r}}) \sim  \mathcal{N}\left(0,\begin{bmatrix} P_1 & \rho \sqrt{P_1P_2} \\ \rho \sqrt{P_1P_2} & P_2 \end{bmatrix}\right).\label{eqn190} 
\end{align}
The cut-set bound states that
\begin{align}
    C(P)&\leq \max_{P_1\leq P, P_2\leq P, \rho\in[-1,1]}\text{Cut-set}(P_1,P_2,\rho)
    \label{eqnUB1}.
\end{align}

Next, consider a Gaussian auxiliary channel of the form
\begin{align}
Z_1&=\alpha X+\beta X_\mathrm{r}+\gamma N_e+\eta N_\mathrm{r}+\zeta N
\end{align}
where $N_e, N_\rmr, N$ are mutually independent standard normal variables. The bound in \eqref{bt21}-\eqref{bt22} applies under the constraints \eqref{eqnpwc}, and jointly Gaussian inputs optimize the bound. For $(X,X_\rmr)$ distributed as in \eqref{eqn190}, let $\text{UB}(P_1,P_2,\rho)$ be the maximum of
\begin{align}
    & \min\big[ I(X;Y,Y_{\mathrm{r}},Z_1|X_{\mathrm{r}},T_1), I(X,X_{\mathrm{r}};Y,Z_1|T_1) \big]
\end{align}
over jointly Gaussian $P_{T_1,X,X_\rmr}$ satisfying \eqref{eqn190} and
\begin{align}
    & I(X;X_{\mathrm{r}}|T_1) \leq I(X;X_{\mathrm{r}},Y_{\mathrm{r}}|T_1,Z_1)\label{const-188}.
\end{align}
The upper bound in Theorem \ref{thm2} for auxiliary variable $Z_1$ is 
\begin{align}
    C(P)&\leq \max_{P_1\leq P, P_2\leq P, \rho\in[-1,1]}\min\left\{\text{UB}(P_1,P_2,\rho),\text{Cut-set}(P_1,P_2,\rho)\right\}.\label{eqnUB2}
\end{align}

\begin{lemma}\label{lemma2relay}
Let $\mathcal{S}$  be the set of all $(Q_1,Q_2,\tilde{\rho})$ such that $Q_1\in[0,P]$, $Q_2\in[0,P]$ and $\tilde\rho\in[-1,1]$ satisfy
\begin{equation} 
\begin{bmatrix} Q_1 & \tilde{\rho} \sqrt{Q_1Q_2} \\ \tilde{\rho} \sqrt{Q_1Q_2} & Q_2 
\end{bmatrix}\preceq \begin{bmatrix} P & \rho P \\ \rho P & P
\end{bmatrix}.
\end{equation}
We have
\begin{equation}
\text{UB}(P,P,\rho) = \max_{(Q_1,Q_2,\tilde{\rho})\in\mathcal{S}}\min(F_1,F_2)
\end{equation}
subject to 
\begin{align}
    &\log\left(\gamma^2+\zeta^2+Q_1 (1 - \tilde{\rho}^2) \left[ (\alpha - \eta g_{12})^2 + g_{12}^2 (\gamma^2 + \zeta^2) \right]\right)-\log(\gamma^2+\zeta^2)
    \nonumber \\ & \ge \log(\alpha^2Q_1+\beta^2Q_2+2\alpha\beta\tilde{\rho} \sqrt{Q_1Q_2}+\gamma^2+\eta^2+\zeta^2)-\log(\beta^2Q_2(1-\tilde{\rho}^2)+\gamma^2+\eta^2+\zeta^2).
\end{align}
Here, we have
\begin{align}
   F_1 &= \frac12\log\left(\zeta^2 + Q_1(1-\tilde{\rho}^2)\left[(g_{12}\eta + g_{13}\gamma - \alpha)^2 + \zeta^2(g_{12}^2 + g_{13}^2)\right]\right)-\frac12\log(\zeta^2)
   \\
    F_2 &= \frac12\log
    \bigg\{(g_{13}^2Q_1+g_{23}^2Q_2+2g_{13}g_{23}\tilde{\rho} \sqrt{Q_1Q_2}+1)(\alpha^2Q_1+\beta^2Q_2+2\alpha\beta\tilde{\rho} \sqrt{Q_1Q_2}+\gamma^2+\eta^2+\zeta^2)
    \nonumber \\ & \qquad\qquad- (\alpha g_{13}Q_1+\beta g_{23}Q_2+(\alpha g_{23}+\beta g_{13})\tilde{\rho} \sqrt{Q_1Q_2}+\gamma )^2\bigg\}-\frac12\log(\eta^2+\zeta^2).
\end{align}
\end{lemma}
\begin{proof}
    See Appendix \ref{RelayAppendix}.
\end{proof}

We next show that the upper bound in \eqref{eqnUB2} can improve the cut-set bound \eqref{eqnUB1}. Consider a Gaussian relay channel with
\begin{subequations}
\begin{align} 
   Y_\mathrm{r} &= 0.97 X+N_\mathrm{r} \label{eqnGRC1} \\
   Y &= 0.85 X+0.02 X_\mathrm{r}+N_e
   \label{eqnGRC2}
\end{align}
\end{subequations}
and the power constraints $P=1$ on $X$ and $X_{\rmr}$.
Consider the auxiliary receiver
\begin{align}
   Z_1 & = 1.26 X+0.16 X_\mathrm{r}+ N_e+N_\mathrm{r} .
\end{align}
The maximum in 
\eqref{eqnUB1} is obtained \emph{uniquely} at $P_1=P_2=P$ and some $\rho^* \in (0,1)$ satisfying
$I(X,X_{\mathrm{r}};Y) = I(X;Y,Y_{\mathrm{r}}|X_{\mathrm{r}})$.
Thus, to show that \eqref{eqnUB2} strictly improves \eqref{eqnUB1}, it suffices to restrict to $P_1=P_2=P$. 
The resulting functions $\rho\mapsto UB(1,1,\rho)$ and $\rho\mapsto\text{Cut-set}(1,1,\rho)$ are plotted in Fig.~\ref{fig1}. The curve $\rho\mapsto\text{Cut-set}(1,1,\rho)$ is maximized at $\rho^*_1\approx 0.741...$; it is strictly increasing for $\rho\leq \rho^*_1$ and strictly decreasing for $\rho> \rho^*_1$. 
As the figure shows, we have
\begin{equation}
   \text{UB}(1,1,\rho_1^*)<\text{Cut-set}(1,1,\rho_1^*).
\end{equation}

Note that $\rho\mapsto\text{UB}(1,1,\rho)$ is maximized at $\rho^*_2\approx 0.395$; the curve is strictly increasing for $\rho\leq \rho^*_2$ and strictly decreasing for $\rho> \rho^*_2$. For $\rho\leq \rho^*_2$, the constraint \eqref{const-188} is inactive for the maximizer $p_{T_1|X,X_\rmr}$, while \eqref{const-188} holds with equality for the maximizer when $\rho> \rho^*_2$.

\begin{figure}[t!]
    \begin{center}
    \includegraphics[scale=0.45]{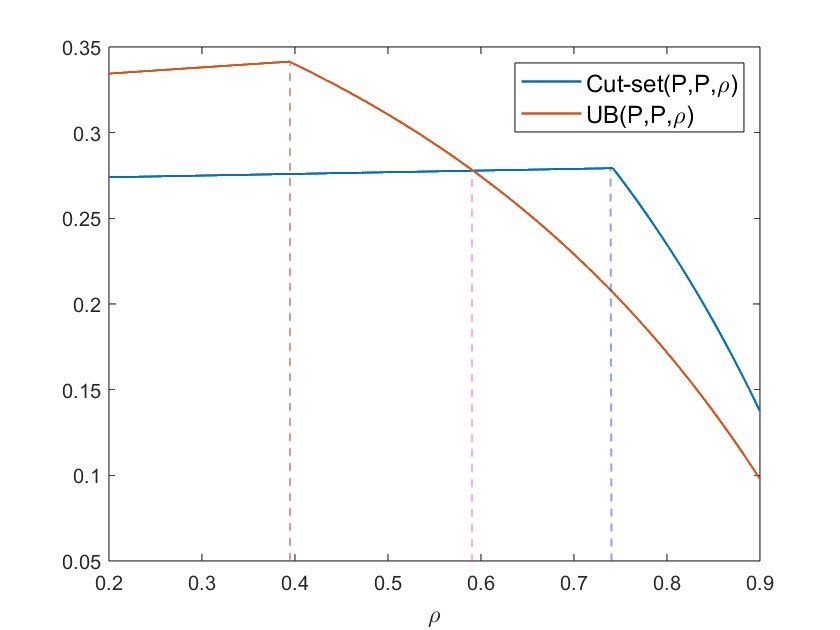} 
    \end{center}
    \caption{The cut-set and dependence balance bounds for a Gaussian relay channel.}
\label{fig1}
\end{figure}

The above result is noteworthy because the cut-set bound for Gaussian relay channels was only recently improved in \cite{gohari2021outer,el2022strengthened}.
The relationship between \eqref{eqnUB2} and the bound in \cite{el2022strengthened} is unclear.
The upper bounds proposed in \cite[Theorem 1]{gohari2021outer} and \cite{el2022strengthened} utilize a different auxiliary random variable identification $(Y^{i-1},J_{i+1}^n)$ (in \cite{el2022strengthened}, $J$ is taken as $Y_\rmr$). 
Our limited numerical simulations did not identify cases where \eqref{eqnUB2} improves upon \cite{el2022strengthened}, but further investigation is warranted. Note that our general DB bound has the distinct advantage of applying to any multiterminal network, whereas the bounds in \cite{el2022strengthened} are limited to the one-relay setting.

\subsubsection{Choice of Auxiliary Receivers}

Hekstra and Willems consider MACs with a single output $Y_{1}=Y_{2}=Y$. Moreover, they show that a judicious choice of the auxiliary receiver may lead to capacity \cite[Section V]{hekstra1989dependence}. Consider $a=2$, $Z_1=(X_1,Y_{1})$, and $Z_2$ is a constant (cut-set bound). This leads to the following bound.

\begin{corollary} \label{cor-7}
Consider a two-user MAC with generalized feedback. Any achievable $(R_1, R_2)$ satisfies
\begin{subequations}
\begin{align}
    R_{1} & \leq  \min\left(H(X_{1}|T_1), I(X_1;Y,Y_{2}|X_2) \right) \\
    R_{2} & \leq  I(X_{2};Y,Y_{{1}}|X_{1},T_1),
\end{align}
\end{subequations}
for some $p(t_1,x_1,x_2)\cdot p(y,y_{1},y_{2}|x_1,x_2)$ satisfying
\begin{align}
    I(X_{1};X_{2}|T_1) & =0.
\end{align}
\end{corollary}

The above bound generalizes the one in \cite{hekstra1989dependence} and reduces to the outer bound in \cite[Theorem 3]{zhang1986new} for $Y=Y_{1}=Y_{2}$. The above bound is tight for some MAC channels with feedback; see Section V and Corollary 2 in \cite{hekstra1989dependence}. 
We provide another example, showing that a careful choice of the auxiliary receiver gives good bounds. First, consider the special case $Y_{1}=Y_{2}=Y$. In this case, Corollary \ref{cor-6} simplifies to
\begin{subequations}
\begin{align}
    R_{1} & \leq I(X_{1};Y|X_{2},T)\\
    R_{2} & \leq I(X_{2};Y|X_{1},T)\\
    R_{1}+R_{2} & \leq I(X_{1},X_{2};Y|T) \\
    I(X_{1};X_{2}|T) & \leq I(X_{1};X_{2}|Y,T)
\end{align}
\end{subequations}
for some $p(t,x_1,x_2)\cdot p(y|x_1,x_2)$. On the other hand, one may alternatively set $Y_{1}=(Y,X_1)$ and $Y_{2}=(Y,X_2)$ because the $i$-th transmitter knows $X_i$. With this choice, Corollary \ref{cor-6} gives the bounds
\begin{subequations}
\begin{align}
    R_1 & \leq H(X_1|T) \label{eq:example-sec434-1} \\
    R_2 & \leq H(X_2|T) \label{eq:example-sec434-2} \\
    R_1+R_2 & \leq I(X_1,X_2;Y) \label{eq:example-sec434-3} \\
    I(X_1;X_2|T) & \leq 0. \label{eq:example-sec434-4}
\end{align}
\end{subequations}
These bounds can be loose. For example, suppose $T,X_1,X_2$ are jointly Gaussian with an invertible covariance matrix satisfying the Markov chain $X_1 \mkv T \mkv X_2$. In this case, $H(X_1 | T)$ and $H(X_2 | T)$ become infinite. This shows that when $Y_{1}=(Y,X_1)$ and $Y_{2}=(Y,X_2)$ choosing the auxiliary receiver $Z_1=(Y_{1},Y_{2})$ may not be a good idea because $Z_1$ will include both $X_1$ and $X_2$. 

\begin{remark}
    The region defined by \eqref{eq:example-sec434-1}-\eqref{eq:example-sec434-4} is the capacity region of MACs where $X_1=f_1(X_2,Y)$ and $X_2=f_2(X_1,Y)$ for some functions $f_1(.)$ and $f_2(.)$; see \cite{Cover-Leung-IT81} and \cite{willems82,willems-IT82}. For example, the binary adder channel with $Y=X_1+X_2$ and $\mathcal X_1=\mathcal X_2 = \{0,1\}$ has this property.
\end{remark}

\subsection{Communication under Privacy Constraints}

One can develop a version of Theorem \ref{thm2} for privacy constraints. For example, we derive an outer bound for a relay broadcast channel with such constraints. Consider a relay channel $p(y, y_\mathrm{r} | x, x_\mathrm{r})$ as above. The transmitter aims to send a private message $M_1$ to the relay (partially hidden from the destination) and a message $M_2$ to the destination; see Fig.~\ref{setup:broadcastrelay}. This setting is referred to as the ``cooperative relay broadcast channel with a single-sided cooperative link'' in \cite{ekrem2010secrecy}.

\begin{figure*}[t!]
\centering
	\begin{tikzpicture}
	\node at (-1.5,-0.95) {$(M_1,M_2)$};
	\draw [->,thick] (-0.6,-1) -- (0.2,-1);
	\draw (0.2,.-1.5) rectangle +(1.6,1); \node at (1,-1) {Encoder};
	\draw [->,thick] (1.8,-1)-- (3.1,-1); \node at (2.45,-0.7) {$X^n$};
	\draw [->,thick] (3.6,-0.5)-- (3.6,0.75); \node at (3.1,0.25) {$Y_{\mathrm{r}}^{i-1}$};
	\draw [<-,thick] (4.9,-0.5)-- (4.9,0.75); \node at (5.3,0.25) {$X_{\mathrm{r}i}$};
	\draw (3.1,-1.5) rectangle +(2.4,1); \node at (4.3,-1) {$p(y,y_{\mathrm{r}}|x,x_{\mathrm{r}})$};
	\draw (3.1,0.75) rectangle +(2.4,1); \node at (4.3,1.25) {Relay };
	\draw [->,thick] (5.5,-1) --(6.6,-1); \node at (6.05,-.7) {$Y^n$};
	\draw (6.6,-1.5) rectangle +(1.6,1); \node at (7.4,-1) {Decoder};
	\draw [->,thick] (8.2,-1) --(9.0,-1); \node at (9.25,-0.945) {$\hat M_2$};
    \draw [->,thick] (5.5,1.25) --(6.2,1.25); \node at (6.45,1.25) {$\hat M_1$};
	\end{tikzpicture}
	\caption{Memoryless relay broadcast channel setup.}
	\label{setup:broadcastrelay}
\end{figure*}

Due to the privacy constraint, the transmitter and the relay may wish to use private randomization. Let $W$ and $W_\mathrm{r}$ be the private randomness available at the transmitter and relay, respectively. We assume $M_1, M_2, W, W_\mathrm{r}$ are mutually independent, and the message pair $(M_1, M_2)$ has the rates $(R_1, R_2)$. Apart from the usual reliability constraints, we impose the privacy constraint 
\begin{align}
    \frac{1}{n} H(M_1| Y^n) \geq R_{e_1}-\epsilon
\end{align}
on the information the destination gains about $M_1$.
One may, as in \cite{ekrem2010secrecy}, also consider a privacy constraint
\begin{align}
\frac{1}{n} H(M_2| Y_\rmr^n,X_\rmr^n,W_\rmr) \geq R_{e_2}-\epsilon.
\end{align}
for $M_2$. However, as pointed out in \cite{ekrem2010secrecy}, the case with $R_{e_2}=R_2=0$ is already challenging. The authors of \cite[Remark 10, Remark 13]{ekrem2010secrecy} claim that deriving an upper bound on $R_{e_1}$ based solely on the channel inputs and outputs is unlikely to be feasible because the relay can leverage its observation \( Y_{\rmr} \) to encode its input \( X_{\rmr} \), introducing temporal correlation between its channel inputs and outputs. Additionally, the relay can enhance its own secrecy rate by transmitting jamming signals. However, we prove the following simple bound:
\begin{align}
    R_{e_1}\leq \max_{p(x,x_\rmr)}I(X; Y_{\mathrm{r}},X_{\mathrm{r}}| Y) - I(X; X_{\mathrm{r}}).\label{eqnE1b}
\end{align}

Consider first the outer bound in \cite{ekrem2010secrecy} for the set of achievable triples rates $(R_1,R_2,R_{e_1})$:

\begin{theorem}[\cite{ekrem2010secrecy}]
    A rate triple $(R_1, R_2, R_{e_1})$ is achievable only if 
\begin{subequations}
\begin{align}
& R_1 \leq I(V_1; Y_{\mathrm{r}} | X_{\mathrm{r}}) \\
& R_2 \leq I(V_2; Y ) \\
& R_{e_1} \leq \min(R_1, I(V_1;Y_{\mathrm{r}} | U)-I(V_1;Y|U),I(V_1;Y_{\mathrm{r}} | V_2)-I(V_1;Y|V_2))
\end{align}
\end{subequations}
for some joint distribution
$p_{X, X_\mathrm{r}} p_{Y,Y_\rmr|X,X_\rmr} p_{V_1,V_2|X,X_\rmr,Y_\rmr} p_{U|V_1,V_2}$.
\end{theorem}
\noindent Observe that the optimal choice for $V_1$ is $Y_\rmr$ since all terms increase when we replace $V_1$ by $V_\rmr$. For instance, we have
\begin{align}
    I(V_1;Y_{\mathrm{r}} | U)-I(V_1;Y|U)&\leq H(Y_{\mathrm{r}} | U)-I(V_1,Y_{\mathrm{r}} ;Y|U)
    \nonumber \\
    & \le H(Y_{\mathrm{r}} | U)-I(Y_{\mathrm{r}} ;Y|U)
\end{align}
Moreover, without loss of generality, we can set $U=V_2$. Thus, the bound reduces to
\begin{subequations}
\begin{align}
   & R_1 \leq H(Y_{\mathrm{r}} | X_{\mathrm{r}}) \\
   & R_2 \leq I(V_2; Y ) \\
   & R_{e_1} \le \min(R_1, H(Y_{\mathrm{r}} | V_2,Y)) .
\end{align}
\end{subequations}
for some
$p_{X, X_\mathrm{r}} p_{Y,Y_\rmr|X,X_\rmr} p_{V_2|X,X_\rmr,Y_\rmr}$. Note that the above bound becomes vacuous for Gaussian channels as $H(Y_\rmr|X_\rmr)=\infty$. 

Next, we develop a version of Theorem \ref{thm2} for the setting in Fig.~\ref{setup:broadcastrelay}. This upper bound is the cut-set bound with a DB constraint appearing as an equivocation rate constraint. This outer bound implies the inequality claimed in \eqref{eqnE1b}.

\begin{theorem}\label{thmbs}
A rate triple $(R_1, R_2, R_{e_1})$ is achievable only if
\begin{subequations}
\begin{align}
   & R_1 + R_2 \leq I(X; Y, Y_{\mathrm{r}} | X_{\mathrm{r}}, T_1) \label{eqnst1} \\
   & R_2 \leq I(X, X_\mathrm{r}; Y | T_1) \label{eqnst2} \\
   & R_{e_1} \leq I(X; Y_{\mathrm{r}},X_{\mathrm{r}}|T_1, Y) - I(X; X_{\mathrm{r}} | T_1) \label{eqnst3}
\end{align}
\end{subequations}
for some joint distribution $p_{X, X_\mathrm{r}, T_1}$.
\end{theorem}
\begin{proof}
Equations \eqref{eqnst1} and \eqref{eqnst2} follow from the constraint \eqref{eqn50-2} for the choice $Z_1 = Y$. The DB constraint in Theorem \ref{thm2} for the set $\mathcal{U}$ consisting of the transmitter and the relay yields
\begin{equation}
    I(X; X_{\mathrm{r}} | T_1) \leq I(X; X_{\mathrm{r}}, Y_{\mathrm{r}} | T_1, Y)
\end{equation}
which is weaker than \eqref{eqnst3}. However, Lemma \ref{lemma1} yields
\begin{equation}
   \frac{1}{n} I(W, X^n; W_\mathrm{r}, Y_\mathrm{r}^n | Y^n) \leq I(X; Y_{\mathrm{r}},X_{\mathrm{r}}| T_1, Y) - I(X; X_{\mathrm{r}} | T_1)
\end{equation}
and instead of bounding
$\frac{1}{n} I(W, X^n; W_\mathrm{r}, Y_\mathrm{r}^n | Y^n)$
by zero as in the proof of Theorem \ref{thm2}, it can be bounded from below by $R_{e_1}$, yielding \eqref{eqnst3}.
\end{proof}

\section{Variants of the DB bound in Lemma~\ref{lemma1}}

{\color{black}
We expect there will be many variations of the DB bound in Lemma~\ref{lemma1}. For example, one variant for diamond channels is as follows.
\begin{lemma}\label{lemmar3}
    Suppose there is set \(\mathcal{V}\subset [k]-\mathcal{U}\) for which, under the setup of Lemma~\ref{lemma1}, we have
    \begin{align}
        P_{Y_{\mathcal{V}}|X_{[k]}} = P_{Y_{\mathcal{V}}|X_{\mathcal{V}}}.
        \label{eqn-db-const-1}
    \end{align}
    Then we have the bound
    \begin{align}
        &I_{\lambda}(W_{i_1}Y_{i_1}^n;W_{i_2}Y_{i_2}^n;\cdots;W_{i_u}Y_{i_u}^n|Z^n) - I_{\lambda}(W_{i_1};W_{i_2};\cdots;W_{i_u}) \nonumber \\
        &\le \sum_{j\in[n]} \Bigg[ I_{\lambda}(X_{i_1j}Y_{i_1j};X_{i_2j}Y_{i_2j};\cdots;X_{i_uj}Y_{i_uj}|Z^{j-1},Z_j) - I_{\lambda}(X_{i_1j};X_{i_2j};\cdots;X_{i_uj}|Z^{j-1}) \nonumber \\
        &\qquad \qquad - \left(1-\sum_{\mathcal B\subsetneq \mathcal{U}}\lambda_{\mathcal B}\right) \left\{ I(X_{[k]j} ; Z_j Y_{\mathcal{U}j} | Z^{j-1} X_{\mathcal{U}j}) + I( Y_{\mathcal{V}j}; Z_jY_{\mathcal{U}j} \big| X_{[k]j})\right\} \Bigg] \nonumber \\
        &\qquad + \left(1-\sum_{\mathcal B\subsetneq \mathcal{U}}\lambda_{\mathcal B}\right) I( X_{\mathcal{V}}^n; Z^nY_{\mathcal{U}}^n \big| W_{\mathcal{U}} ).
    \end{align}
\end{lemma}

\begin{proof}
The gap in step $(d)$ in \eqref{eqn16justn} is
\begin{align}
    - \left(1-\sum_{\mathcal B\subsetneq [u]}\lambda_{\mathcal B}\right) I( X_{[k]j}; Z_jY_{[u]j} \big| Z^{j-1} X_{[u]j} W_{[u]} Y_{[u]}^{j-1}).
\end{align}
We have
\begin{align}
    &\sum_{j\in[n]} I( X_{[k]j}; Z_jY_{[u]j} \big| Z^{j-1} X_{[u]j} W_{[u]} Y_{[u]}^{j-1}) \nonumber \\
    &= \sum_{j\in[n]} I( X_{[k]j}; Z_jY_{[u]j} \big| Z^{j-1} W_{[u]} Y_{[u]}^{j-1}) \nonumber \\
    &= \sum_{j\in[n]} I( X_{[k]j}Y_{\mathcal{V}j}; Z_jY_{[u]j} \big| Z^{j-1} W_{[u]} Y_{[u]}^{j-1}) - \sum_{j\in[n]} I( Y_{\mathcal{V}j}; Z_jY_{[u]j} \big| X_{[k]j}) \nonumber \\
    &= \sum_{j\in[n]} I( X_{\mathcal{V}}^nX_{[k]j}Y_{\mathcal{V}j}; Z_jY_{[u]j} \big| Z^{j-1} W_{[u]} Y_{[u]}^{j-1}) - \sum_{j\in[n]} I( Y_{\mathcal{V}j}; Z_jY_{[u]j} \big| X_{[k]j}) \label{eqnj1} \\
    &\geq \sum_{j\in[n]} I( X_{\mathcal{V}}^n; Z_jY_{[u]j} \big| Z^{j-1} W_{[u]} Y_{[u]}^{j-1}) - \sum_{j\in[n]} I( Y_{\mathcal{V}j}; Z_jY_{[u]j} \big| X_{[k]j}) \nonumber \\
    &= I( X_{\mathcal{V}}^n; Z^nY_{[u]}^n \big| W_{[u]} ) - \sum_{j\in[n]} I( Y_{\mathcal{V}j}; Z_jY_{[u]j} \big| X_{[k]j})
\end{align}
where in \eqref{eqnj1} we used the Markov chain
\begin{align}
    (Z^{j-1},W_{[u]},Y_{[u]}^{j-1},X_{\mathcal{V}}^n) \mkv (X_{[k],j},Y_{\mathcal{V},j}) \mkv (Z_j,Y_{[u],j})
\end{align}
which follows from \eqref{eqn-db-const-1}.
\end{proof}
}

\begin{figure}[h]
\centering
\begin{tikzpicture}[
    >=stealth,
    node distance=1.5cm and 1.5cm,
    block/.style={draw, rectangle, minimum height=1cm, minimum width=1.5cm, align=center},
    mac/.style={draw, rectangle, minimum height=4cm, minimum width=1.5cm, align=center}
]

\node[block] (source) {Source};
\node[block, right=1cm of source] (encoder) {Encoder};

\node[block, above right=0.5cm and 1.5cm of encoder] (relay1) {Relay 1};
\node[block, below right=0.5cm and 1.5cm of encoder] (relay2) {Relay 2};

\node[mac, right=4.5cm of encoder] (mac) {MAC \\ \(p_{Y|X_1,X_2}\)};

\node[block, right=1.5cm of mac] (decoder) {Decoder};
\node[block, right=1cm of decoder] (sink) {Sink};

\draw[->] (source) -- node[above] {\(M\)} (encoder);

\draw[->] (encoder.east) -- node[above left] {\(Y_{R_1}^n\)} (relay1.west);
\draw[->] (encoder.east) -- node[below left] {\(Y_{R_2}^n\)} (relay2.west);

\draw[->] (relay1.east) -- node[above] {\(X^n_1\)} (mac.west |- relay1.east);
\draw[->] (relay2.east) -- node[above] {\(X^n_2\)} (mac.west |- relay2.east);

\draw[->] (mac.east) -- node[above] {\(Y^n\)} (decoder.west);
\draw[->] (decoder.east) -- node[above] {\(\hat{M}\)} (sink.west);

\end{tikzpicture}
\caption{The diamond channel.}
\label{fig:diamond-channel}
\end{figure}

\color{black}
We apply this bound to the diamond channel depicted in Fig.~\ref{fig:diamond-channel}. The channel is a two-hop relay network with one source node, two parallel relay nodes, and one destination node, formed by cascading a broadcast channel (BC) and a multiple access channel (MAC). The BC connects the source with the relays via two noiseless bit pipes of capacities \(C_1\) and \(C_2\), respectively. Let \(X_S\) be the source input, and \(Y_{R_1}\) and \(Y_{R_2}\) the relay outputs. Having two bit pipes means that we may write
\[
    X_S = (Y_{R_1}, Y_{R_2}).
\]
where the alphabet of \(Y_{R_1}\) and \(Y_{R_2}\) have cardinalities $2^{C_1}$ and $2^{C_2}$, respectively.
The MAC is characterized by a conditional distribution \(p_{Y|X_1,X_2}\), where \(X_1\) and \(X_2\) are the relay inputs and \(Y\) is the destination output. We are interested in maximizing the rate \(R\) of message $M$.

We may as well assume that \(X_{Si}\) is independent of \((X_{1}^{i},X_{2}^i)\). Now fix \(P_{Z|X_1,X_2,Y}\) on the MAC side, and let \(T_i = Z^{i-1}\). Observe that \(X_{Si}\) is independent of \((T_i, X_{1i},X_{2i},Y_i,Z_i)\). Take \(\mathcal{U}\) to be the set of the two relays, and \(\mathcal{V}\) to be the source node. Then \eqref{eqn-db-const-1} is satisfied because the source node has a constant output.  Thus, Lemma \ref{lemmar3} yields
\begin{align}
    0 &\leq \sum_i I(X_{1i}, Y_{R_1i}; X_{2i}, Y_{R_2i} | T_i, Z_i) - I(X_{1i}; X_{2i} | T_i) \nonumber \\
      &\quad + \sum_i I(X_{Si}; Y_{R_1i}, Y_{R_2i}, Z_i | X_{1i}, X_{2i}, T_i) \nonumber \\
      &\quad - I(X_S^n; Y_{R_1}^n, Y_{R_2}^n, Z^n).
\end{align}
By Fano's inequality, we have
\begin{align}
    \frac{1}{n} I(X_S^n; Y_{R_1}^n, Y_{R_2}^n, Z^n) \geq R - \epsilon_n
\end{align}
where \(\epsilon_n \to 0\) as \(n \to \infty\). 

Moreover, using the independence of \(X_{Si}=(Y_{R_1i},Y_{R_2i})\) and \((T_i, X_{1i},X_{2i},Z_i)\), we obtain
\begin{align}
    I(X_{1i}, Y_{R_1i}; X_{2i}, Y_{R_2i} | T_i, Z_i) 
    &= I(X_{1i}; X_{2i} | T_i, Z_i) + I(Y_{R_1i}; Y_{R_2i})
\end{align}
and
\begin{align}
    I(X_{Si}; Y_{R_1i}, Y_{R_2i}, Z_i | X_{1i}, X_{2i}, T_i) 
    &= H(Y_{R_1i}, Y_{R_2i}) \nonumber \\
    &= H(Y_{R_1i}) + H(Y_{R_2i}) - I(Y_{R_1i}; Y_{R_2i}) \nonumber \\
    &\le C_1 + C_2 - I(Y_{R_1i}; Y_{R_2i}).
\end{align}
Summarizing, we obtain
\begin{align}
    0 &\le \sum_i \left[ I(X_{1i}; X_{2i} | T_i, Z_i) - I(X_{1i}; X_{2i} | T_i) \right] + n(C_1 + C_2 - R).
\end{align}
This recovers a bound in \cite[Theorem 3]{bidokhti2016capacity}.

\color{black}
\section{Conclusion and Future Work}
\label{sec:conclusions}
 
We developed a unified framework that leverages $\lambda$-multivariate information and auxiliary receivers to derive general dependence-balance (DB) constraints for multiterminal networks. The DB bounds strengthen outer bounds for (i) secret key and common randomness generation, including wiretap models with public or secure feedback, and (ii) reliable communication, yielding improvements over classic cut-set bounds for several models.

The following open problems are of interest for future study.
\begin{itemize}
  \item New auxiliary designs: are there methods beyond those discussed in Section~\ref{subsec:auxiliary-discussion} (modifying inactive terminals and output enhancement) to obtain systematically stronger bounds?
  \item Better bounds for Gaussian networks and relays: Can our DB bounds be combined with the upper bounds in~\cite{gohari2021outer,el2022strengthened} to yield better converses for Gaussian relay channels?
\item Adaptive auxiliary receivers: Hekstra and Willems showed that adaptive parallel channels can yield stronger bounds \cite[Section~VI]{hekstra1989dependence}. Can one similarly strengthen the bounds in this paper?
\end{itemize}

\color{black}

\newpage
\appendix

\section{Properties of Fractional Partition Multivariate Information}
\label{sec:appendixA}

The following proposition follows from the arguments in \cite{csiszar2008secrecy}.

\begin{proposition}\label{propos1}
$\lambda$-multivariate information satisfies the following properties.
\begin{itemize}
\item (Non-negativity):
    $I_{\lambda}(X_1;X_2;\cdots;X_k)\geq 0$ with equality if the $X_1,\dots,X_k$ are mutually independent.
\item (Conditioning): We have
\begin{align}
    I_{\lambda}(X_1;X_2;\cdots;X_k) - I_{\lambda}(X_1;X_2;\cdots;X_k|T)\leq I(X_{[k]};T) .
\end{align}
\item (Data processing): If $p(x'_{[k]},x_{[k]})=p(x_{[k]})\prod\nolimits_{i=1}^k p(x'_i|x_i)$ then we have
\begin{align}
    I_{\lambda}(X_{1};X_{2};\cdots;X_{k})\geq I_{\lambda}(X'_{1};X'_{2};\cdots;X'_{k}).
\end{align}
\item (Chain rule): We have
\begin{align}
    & I_{\lambda}(X_{1}Y_{1};X_{2}Y_{2};\cdots;X_{k}Y_{k}) \nonumber \\
    & = I_{\lambda}(X_{1};X_{2};\cdots;X_{k}) + I_{\lambda}(Y_{1};Y_{2};\cdots;Y_{k} \big| X_{[k]})  + \sum\nolimits_{\mathcal B\subsetneq [k]}\lambda_{\mathcal B} I(X_{{\mathcal B}};Y_{{\mathcal B}^c} \big| X_{{\mathcal B}^c}).
\end{align}
\item (Concavity): $I_{\lambda}(X_1;X_2;\cdots;X_k)$ is concave in $p(x_k)$ for a fixed $p(x_{[k-1]}|x_k)$; see \cite[Lemma A.1]{csiszar2008secrecy} for a proof.
\end{itemize}
\end{proposition}

\begin{proof}
For non-negativity, we have
\begin{align}
    H(X_{[k]})
    & =\sum\nolimits_{i} \left( \sum\nolimits_{{\mathcal B}: i\in {\mathcal B}} \lambda_{\mathcal B} \right) H(X_i|X^{i-1}) \nonumber \\
    & \overset{(a)}{\geq} \sum\nolimits_{{\mathcal B}} \sum\nolimits_{i\in {\mathcal B}} \lambda_{\mathcal B}\, H(X_i|X_{[i-1]\cap \mathcal B}, X_{\mathcal B^c}) \nonumber \\
    & =\sum\nolimits_{{\mathcal B}} \lambda_{\mathcal B} \, H(X_{\mathcal B}|X_{{\mathcal B}^c})
\end{align}
with equality in step $(a)$ if the $X_1,\dots,X_k$ are mutually independent. We remark that one can have $I_{\lambda}(X_1;X_2;\cdots;X_k)=0$ without mutual independence; an example is $k=3$ with $\lambda_{\{1,2\}}=\lambda_{\{3\}}=1$ and where $X_1=X_2$ is independent of $X_3$.

The conditioning inequality follows from the identity
\begin{align}
    I_{\lambda}(X_1;X_2;\cdots;X_k)-I_{\lambda}(X_1;X_2;\cdots;X_k|T)
    = I(X_{[k]};T) -\sum\nolimits_{\mathcal B} \lambda_{\mathcal B} I(X_{\mathcal B} ; T | X_{{\mathcal B}^c}).
\end{align}

The data processing inequality follows from functional representation: one can find variables $Y_{[k]}$, mutually independent of each other and $X_{[k]}$, such that $H(X'_i|X_i,Y_i)=0$. Since adding private noise $Y_i$ to $X_i$ does not change the $\lambda$-multivariate information, we have
\begin{align}
    I_{\lambda}(X_{1};X_{2};\cdots;X_{k})=
    I_{\lambda}(X_{1}Y_{1};X_{2}Y_{2};\cdots;X_{k}Y_{k})
\end{align}
and it suffices to show
\begin{align}
    I_{\lambda}(X'_{1}X_1Y_{1};X'_{2}X_2Y_{2};\cdots;X'_{k}X_kY_{k})\geq I_{\lambda}(X'_{1};X'_{2};\cdots;X'_{k}).
\end{align}
This inequality follows from
\begin{align}
    & I_{\lambda}(X'_{1}X_1Y_{1};X'_{2}X_2Y_{2}; \cdots; X'_{k}X_kY_{k}|T) - I_{\lambda}(X'_{1};X'_{2};\cdots;X'_{k}) \nonumber \\
    & = I_{\lambda}(X_1Y_1;X_2Y_2;\cdots;X_kY_k|X'_{[k]}) + \sum\nolimits_{\mathcal B} \lambda_{\mathcal B} I(X'_{\mathcal B} ; X_{\mathcal{B}^c} Y_{\mathcal{B}^c} | X'_{{\mathcal B}^c}).
\end{align}

The chain rule follows by
\begin{align}
    & I_{\lambda}(X_{1}Y_{1};X_{2}Y_{2};\cdots;X_{k}Y_{k})
    \nonumber \\
    & = \left(1-\sum\nolimits_{\mathcal B\subsetneq [k]}\lambda_{\mathcal B}\right) H(X_{[k]} Y_{[k]}) + \sum\nolimits_{\mathcal B\subsetneq [k]}\lambda_{\mathcal B} H(X_{{\mathcal B}^c} Y_{{\mathcal B}^c}) \nonumber \\
    & = I_{\lambda}(X_{1};X_{2};\cdots;X_{k}) + \left(1-\sum\nolimits_{\mathcal B\subsetneq [k]}\lambda_{\mathcal B}\right) H(Y_{[k]} \big| X_{[k]}) + \sum\nolimits_{\mathcal B\subsetneq [k]}\lambda_{\mathcal B} H(Y_{{\mathcal B}^c} \big| X_{{\mathcal B}^c})
\end{align}
and by writing $H(Y_{{\mathcal B}^c} \big| X_{{\mathcal B}^c}) = H(Y_{{\mathcal B}^c} \big| X_{[k]}) + I(X_{{\mathcal B}};Y_{{\mathcal B}^c} \big| X_{{\mathcal B}^c})$.
\end{proof}

\subsection{Relation to Another Definition of Multivariate Information}
\label{subsec:appendixA1}

Several other types of multivariate information have been studied. For instance, the \emph{$K$-information} is defined in \cite{McGill-P54} as
\begin{align} \label{eq:K-information}
    K(X_1;X_2;\cdots;X_k) = \sum\nolimits_{i\in[k]} (-1)^{i-1} \sum\nolimits_{|\mathcal{B}|=i} H(X_{\mathcal{B}}).
\end{align}
This information measure is motivated by Venn diagrams and appears in \cite{fano61,han80,hekstra1989dependence}, for example.

Another multivariate information more closely related to $\lambda$-multivariate information is
\begin{align}
    J(X_1;X_2;\cdots;X_k) = -H(X_{[k]}) + \sum\nolimits_{i} H(X_i).
\end{align}
We can relate this \emph{$J$-information} to $\lambda$-multivariate information. Let $\lambda_{\mathcal B}=1/(k-1)$ if $|{\mathcal B}|=k-1$, and $\lambda_{\mathcal B}=0$ otherwise; see \eqref{eq:frac-example} and \eqref{eq:MACS-GF-DB}. We then have
\begin{align}
    I_{\lambda}(X_1;X_2;\cdots;X_k)
    & = H(X_{[k]}) - \frac{1}{k-1} \sum\nolimits_{i}H(X_{[k]-i}|X_i) \nonumber \\
    & = \frac{1}{k-1}\left( -H(X_{[k]}) + \sum\nolimits_{i}H(X_i) \right) \nonumber \\
    & = \frac{1}{k-1}J(X_1;X_2;\cdots;X_k).
\end{align}
Another interesting relation is as follows. Let $\Pi=(\mathcal{P}_1, \mathcal{P}_2, \cdots, \mathcal{P}_r)$ be a partition of $[k]$ into $r\geq 2$ sets. Let $\lambda_{\mathcal B}=\frac{1}{r-1}$ if $\mathcal B=[k]-\mathcal{P}_i$ for some $i\in[r]$, and $\lambda_{\mathcal B}=0$ otherwise. We have
\begin{align}
I_{\lambda}(X_1;X_2;\cdots;X_k)
= \frac{1}{r-1} J(X_{\mathcal{P}_1}; X_{\mathcal{P}_2}; \cdots; X_{\mathcal{P}_r}).
\end{align}
Consequently, we have
\begin{align}
   \min_{\lambda} I_{\lambda}(X_1;X_2;\cdots;X_k)
   & \leq \min_{\Pi} \frac{1}{r-1} J(X_{\mathcal{P}_1}; X_{\mathcal{P}_2}; \cdots; X_{\mathcal{P}_r})
   \label{eqnnfn2}
\end{align}
where the minimum is over all $r\geq 2$ and over all partitions $\Pi=(\mathcal{P}_1, \mathcal{P}_2, \cdots, \mathcal{P}_r)$  of $[k]$ into $r$ sets. 

The following theorem complements the above example by showing that \eqref{eqnnfn2} holds with equality.
\begin{theorem}\cite[Theorem 4.1]{chan2015multivariate} For any $X_1, X_2, \cdots, X_k$, we have
\begin{align}
     \min_{\lambda}I_{\lambda}(X_1;X_2;\cdots;X_k)= \min_{\Pi}\frac{1}{r-1}J(X_{\mathcal{P}_1};X_{\mathcal{P}_2};\cdots;X_{\mathcal{P}_r})
\end{align}
where the minimum is over all $r\geq 2$ and over all partitions $\Pi=(\mathcal{P}_1, \mathcal{P}_2, \cdots, \mathcal{P}_r)$  of $[k]$ into $r$ sets. 
\end{theorem}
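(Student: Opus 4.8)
The plan is to establish the chain
$$I_\lambda(X_1;\dots;X_k) \ge \min_\Pi \frac{1}{r-1} J(X_{\mathcal P_1};\dots;X_{\mathcal P_r})$$
by exhibiting, for any fractional partition $\lambda$, a single partition $\Pi$ that lower-bounds the left-hand side. The cleanest route is to unfold the definition
$$I_\lambda(X_1;\dots;X_k) = H(X_{[k]}) - \sum_{\mathcal B}\lambda_{\mathcal B} H(X_{\mathcal B}\mid X_{\mathcal B^c})$$
and compare it against the partition expression. Using $\frac{1}{r-1}J(X_{\mathcal P_1};\dots;X_{\mathcal P_r}) = \frac{1}{r-1}\big(\sum_{i=1}^r H(X_{\mathcal P_i}) - H(X_{[k]})\big)$, and noting from Example \ref{example3} that this equals $I_{\mu^\Pi}(X_1;\dots;X_k)$ for the specific fractional partition $\mu^\Pi$ supported on the sets $[k]\setminus\mathcal P_i$ with weight $\frac{1}{r-1}$, the theorem reduces to showing that every fractional partition dominates the minimum over these ``partition-type'' fractional partitions.

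First I would reformulate the claim as a statement purely about fractional partitions: since both sides are of the form $H(X_{[k]}) - \sum_{\mathcal B}\nu_{\mathcal B}H(X_{\mathcal B}\mid X_{\mathcal B^c})$, and the function $\nu\mapsto I_\nu$ is linear in $\nu$, I expect the proof to go through a decomposition of the given $\lambda$ as a convex combination (or suitable nonnegative combination with the right normalization) of partition-type fractional partitions $\mu^\Pi$. Concretely, I would try to write $\lambda$ so that its induced submodular-type expression $\sum_{\mathcal B}\lambda_{\mathcal B}H(X_{\mathcal B}\mid X_{\mathcal B^c})$ is at most $\max_\Pi \sum_{\mathcal B}\mu^\Pi_{\mathcal B}H(X_{\mathcal B}\mid X_{\mathcal B^c})$, which after subtracting from $H(X_{[k]})$ flips the $\max$ into the desired $\min$. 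The natural tool here is the supermodularity of the map $\mathcal B\mapsto H(X_{\mathcal B}\mid X_{\mathcal B^c}) = H(X_{[k]}) - H(X_{\mathcal B^c})$, equivalently the submodularity of $\mathcal B\mapsto H(X_{\mathcal B^c})$, together with the constraint $\sum_{\mathcal B\ni i}\lambda_{\mathcal B}=1$ defining the fractional-partition polytope.

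The main obstacle, as I see it, is the combinatorial-geometric heart: showing that the vertices of the fractional-partition polytope are exactly (or are dominated by) the partition-type points $\mu^\Pi$, so that an arbitrary $\lambda$ in the polytope can be handled by the extreme points. This is essentially an LP-duality or vertex-characterization argument on the polytope defined by \eqref{eqnEE2}, and the entropy inequalities must be applied in the direction consistent with submodularity. I would first verify the vertex claim for small $k$ to fix the right normalization of the weights $\frac{1}{r-1}$, then argue the general case either by linear programming duality (writing the minimization over $\Pi$ as the dual of maximizing a linear functional over the polytope) or by a direct averaging/uncrossing argument that merges overlapping sets in the support of $\lambda$ into a genuine partition while only decreasing the conditional-entropy sum. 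Since the theorem is attributed to \cite[Theorem 4.1]{chan2015multivariate}, I would expect the published proof to use precisely such a polyhedral characterization, and I would cross-check that the submodularity step and the $\frac{1}{r-1}$ scaling are compatible at the extreme points.
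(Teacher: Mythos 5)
First, note that the paper does not prove this theorem at all: it is quoted with attribution to \cite[Theorem 4.1]{chan2015multivariate}, so your proposal can only be judged on its own merits, and on those merits it has a genuine gap. The step you yourself identify as the ``combinatorial-geometric heart'' --- that the vertices of the fractional-partition polytope defined by \eqref{eqnEE2} are exactly, or are dominated by, the partition-type points $\mu^{\Pi}$ --- is false, and the convex-decomposition route built on it cannot work. Concretely, for $k=3$ the point $\lambda_{\{1\}}=\lambda_{\{2\}}=\lambda_{\{3\}}=1$ is a vertex of the polytope (the constraint $\lambda_{\{i\}}+\sum_{\mathcal B\ni i,\,|\mathcal B|\ge 2}\lambda_{\mathcal B}=1$ forces any representation of it as a midpoint to degenerate), yet it is not a convex combination of partition-type points: the total mass $\sum_{\mathcal B}\lambda_{\mathcal B}$ is a linear functional, it equals $3$ at this vertex, while every $\mu^{\Pi}$ has total mass $\frac{r}{r-1}\le 2$. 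Since $I_\nu$ is linear in $\nu$, linearity plus extreme points therefore cannot deliver the inequality; indeed for this $\lambda$ the quantity $I_\lambda$ is the dual total correlation $H(X_{[3]})-\sum_i H(X_i|X_{[3]\setminus\{i\}})$, and its domination of $\min_\Pi\frac{1}{r-1}J$ is a genuine entropy inequality (it uses submodularity of $\mathcal B\mapsto H(X_{\mathcal B})$), not a fact about the constraint polytope. So the theorem is not ``purely about fractional partitions'' in the sense your reformulation suggests.

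Your fallback --- uncrossing the support of $\lambda$ using supermodularity of $\mathcal B\mapsto H(X_{\mathcal B}|X_{\mathcal B^c})$ --- is the right kind of tool, but as sketched it has two problems. First, a direction slip: to conclude $I_\lambda\ge\min_\Pi I_{\mu^\Pi}$ you need the merging moves to \emph{increase} (or preserve) the weighted sum $\sum_{\mathcal B}\lambda_{\mathcal B}H(X_{\mathcal B}|X_{\mathcal B^c})$, consistent with your earlier ``at most $\max_\Pi$'' formulation, whereas you later ask the merges to ``only decrease the conditional-entropy sum.'' Second, and more substantively, uncrossing a pair $\mathcal A,\mathcal B$ with $\mathcal A\cup\mathcal B=[k]$ produces mass on $[k]$ itself, which is excluded from the support of a fractional partition; if you allow mass $c$ on $[k]$ and rescale, you only obtain $I_\lambda\ge(1-c)\min_\Pi\frac{1}{r-1}J$, which is weaker than claimed since the minimum is nonnegative. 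This boundary case must be handled separately, and your sketch does not do so; it instead defers to the published proof, which in \cite{chan2015multivariate} proceeds via the duality between $I_\lambda$-type bounds and the communication-for-omniscience linear program (Dilworth truncation / principal partitions), not via a vertex characterization of the fractional-partition polytope. To repair your argument you would either reproduce that duality argument, or carry out the supermodular uncrossing carefully with an explicit treatment of crossing pairs whose union is $[k]$; as written, the proposal is a plan with its key lemma both unproven and, in the stated form, false.
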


\subsection{Proof of \eqref{neqtp2} using the definition~\eqref{eq:defMI-2}}
\label{appendixNew2}
{\color{black}
In this section, we complete the argument in Section \ref{subsubsec:continuous-rvs} using Definition~\eqref{eq:defMI-2} of multivariate information to establish \eqref{neqtp2}. Definition~\eqref{eq:defMI-2} extends to continuous or mixed random variables.

Using the chain rule for $\lambda$-information, we have (see Lemma \ref{lemmaad2} below)
\begin{align}
    & I_{\lambda}(X_{1}Y_{1};X_{2}Y_{2};\cdots;X_{k}Y_{k}|T) \nonumber \\
    & = I_{\lambda}(X_{1};X_{2};\cdots;X_{k}|T) + I_{\lambda}(Y_{1};Y_{2};\cdots;Y_{k} \big| X_{[k]}T)  + \sum\nolimits_{\mathcal B\subsetneq [k]}\lambda_{\mathcal B} I(X_{{\mathcal B}};Y_{{\mathcal B}^c} \big| X_{{\mathcal B}^c}T).
    \label{eq:chain-rule}
\end{align}
We thus have
\begin{align}
    &\sum\nolimits_{j \in [n]} \bigg[ I_{\lambda}(W_{1}Y_{1}^jX_{1j} ; W_{2}Y_{2}^jX_{2j} ; \cdots ; W_{u}Y_{u}^jX_{uj} \big| Z^j )\nonumber \\
    & \qquad\qquad \qquad\color{black}- I_{\lambda}(W_{1}Y_{1}^{j-1}X_{1j} ; W_{2}Y_{2}^{j-1}X_{2j} ; \cdots; W_{u}Y_{u}^{j-1}X_{uj}\big| Z^{j-1}) \bigg] \nonumber
    \\
    &=\sum\nolimits_{j \in [n]} \bigg[ I_{\lambda}(Y_{1}^jX_{1j} ; Y_{2}^jX_{2j} ; \cdots ;Y_{u}^jX_{uj} \big| Z^j )\nonumber \\
    & \qquad\qquad \qquad\color{black}- I_{\lambda}(Y_{1}^{j-1}X_{1j} ; Y_{2}^{j-1}X_{2j} ; \cdots; Y_{u}^{j-1}X_{uj}\big| Z^{j-1}) \bigg] \nonumber
    \\
    &+\sum\nolimits_{j \in [n]} \bigg[ I_{\lambda}(W_1 ; W_2; \cdots ;W_u\big| Z^j Y_{[u]}^jX_{[u]j} )\nonumber \\
    & \qquad\qquad \qquad\color{black}- I_{\lambda}(W_1 ;W_2 ; \cdots; W_u\big| Z^{j-1}Y_{[u]}^{j-1}X_{[u]j}) \bigg] \nonumber
     \\
    &+\sum\nolimits_{j \in [n]} \bigg[ 
    \sum\nolimits_{\mathcal B\subsetneq [u]}\lambda_{\mathcal B} \left\{I(
    Y_{{\mathcal B}}^{j}X_{{\mathcal B}j}
    ;W_{{\mathcal B}^c} \big| Y_{{\mathcal B^c}}^{j}X_{{\mathcal B^c}j}Z^{j})-I(
    Y_{{\mathcal B}}^{j-1}X_{{\mathcal B}j}
    ;W_{{\mathcal B}^c} \big| Y_{{\mathcal B^c}}^{j-1}X_{{\mathcal B^c}j}Z^{j-1})\right\}
    \bigg].
\end{align}
It suffices to prove the following two equalities:
\begin{align}
    &\sum\nolimits_{j \in [n]} \bigg[ I_{\lambda}(Y_{1}^jX_{1j} ; Y_{2}^jX_{2j} ; \cdots ;Y_{u}^jX_{uj} \big| Z^j )\nonumber \\
    & \qquad\qquad \qquad\color{black}- I_{\lambda}(Y_{1}^{j-1}X_{1j} ; Y_{2}^{j-1}X_{2j} ; \cdots; Y_{u}^{j-1}X_{uj}\big| Z^{j-1}) \bigg] \nonumber
     \\&=\sum_{j\in[n]} \bigg[I_{\lambda}(X_{1j}Y_{1j};X_{2j}Y_{2j};\cdots;X_{uj}Y_{uj}|Z^{j-1},Z_j) - I_{\lambda}(X_{1j};X_{2j};\cdots;X_{uj}|Z^{j-1}) \nonumber \\
    &\qquad \qquad\color{black} - \left(1-\sum\nolimits_{\mathcal B\subsetneq [u]}\lambda_{\mathcal B}\right)  I(Y_{[u]}^{j-1} ; Z_jY_{[u]j} \big| Z^{j-1} X_{[u]j} )
    \nonumber \\
    &\qquad \qquad\color{black}-\sum\nolimits_{\mathcal B} \lambda_{\mathcal B}I(Y_{{\mathcal B}^c}^{j-1} ; Z_j Y_{{\mathcal B}^cj} \big| Z^{j-1} X_{\mathcal{B}^cj})\bigg] 
    \label{eqnttt1}
\end{align}
and
\begin{align}
    &\sum\nolimits_{j \in [n]} \bigg[ I_{\lambda}(W_1 ; W_2; \cdots ;W_u\big| Z^j Y_{[u]}^jX_{[u]j} )\nonumber \\
    & \qquad\qquad \qquad\color{black}- I_{\lambda}(W_1 ;W_2 ; \cdots; W_u\big| Z^{j-1}Y_{[u]}^{j-1}X_{[u]j}) \bigg] \nonumber
     \\
    &+\sum\nolimits_{j \in [n]} \bigg[ 
    \sum\nolimits_{\mathcal B\subsetneq [u]}\lambda_{\mathcal B} \left\{I(
    Y_{{\mathcal B}}^{j}X_{{\mathcal B}j}
    ;W_{{\mathcal B}^c} \big| Y_{{\mathcal B^c}}^{j}X_{{\mathcal B^c}j}Z^{j})-I(
    Y_{{\mathcal B}}^{j-1}X_{{\mathcal B}j}
    ;W_{{\mathcal B}^c} \big| Y_{{\mathcal B^c}}^{j-1}X_{{\mathcal B^c}j}Z^{j-1})\right\}
    \bigg] \nonumber
    \\&=\sum_{j\in[n]} \bigg[- \left(1-\sum\nolimits_{\mathcal B\subsetneq [u]}\lambda_{\mathcal B}\right)  I(W_{[u]}  ; Z_jY_{[u]j} \big| Z^{j-1} X_{[u]j}Y_{[u]}^{j-1} )
    \nonumber \\
    &\qquad \qquad\color{black}-\sum\nolimits_{\mathcal B} \lambda_{\mathcal B}I(W_{{\mathcal B}^c} ; Z_j Y_{{\mathcal B}^cj} \big| Z^{j-1} X_{\mathcal{B}^cj}Y_{{\mathcal B}^c}^{j-1})\bigg].\label{eqnttt2}
\end{align}

To prove \eqref{eqnttt1}, we use the chain rule \eqref{eq:chain-rule} to obtain
\begin{align}
    &I_{\lambda}(Y_{1}^jX_{1j} ; Y_{2}^jX_{2j} ; \cdots ;Y_{u}^jX_{uj} \big| Z^j )- I_{\lambda}(Y_{1}^{j-1}X_{1j} ; Y_{2}^{j-1}X_{2j} ; \cdots; Y_{u}^{j-1}X_{uj}\big| Z^{j-1}) 
    \nonumber \\
    &=I_{\lambda}(X_{1j}Y_{1j} ; X_{2j}Y_{2j} ; \cdots ;X_{uj}Y_{uj} \big| Z^j )- I_{\lambda}(X_{1j} ; X_{2j} ; \cdots; X_{uj}\big| Z^{j-1})
    \nonumber \\
    &\qquad+
    I_{\lambda}(Y_{1}^{j-1} ; Y_{2}^{j-1} ; \cdots ;Y_{u}^{j-1}\big|X_{[u]j}Y_{[u]j} Z^j )- I_{\lambda}(Y_{1}^{j-1} ; Y_{2}^{j-1} ; \cdots; Y_{u}^{j-1}\big|X_{[u]j} Z^{j-1}) 
    \nonumber \\
    &\qquad+\sum\nolimits_{\mathcal B\subsetneq [u]}\lambda_{\mathcal B} \left\{I(X_{{\mathcal B}j}Y_{{\mathcal B}j};Y_{{\mathcal B}^c}^{j-1} \big| X_{{\mathcal B}^cj}Y_{{\mathcal B}^cj}Z^j)-I(X_{{\mathcal B}j};Y_{{\mathcal B}^c}^{j-1} \big| X_{{\mathcal B}^cj}Z^{j-1})\right\}
        \nonumber \\
    &=I_{\lambda}(X_{1j}Y_{1j} ; X_{2j}Y_{2j} ; \cdots ;X_{uj}Y_{uj} \big| Z^j )- I_{\lambda}(X_{1j} ; X_{2j} ; \cdots; X_{uj}\big| Z^{j-1}) \nonumber \\
    &\qquad+
    I_{\lambda}(Y_{1}^{j-1} ; Y_{2}^{j-1} ; \cdots ;Y_{u}^{j-1}\big|X_{[u]j}Y_{[u]j} Z^j )- I_{\lambda}(Y_{1}^{j-1} ; Y_{2}^{j-1} ; \cdots; Y_{u}^{j-1}\big|X_{[u]j} Z^{j-1}) 
    \nonumber \\
    &\qquad+\sum\nolimits_{\mathcal B\subsetneq [u]}\lambda_{\mathcal B} \left\{I(X_{{\mathcal B}j}Y_{[u]j}Z_j;Y_{{\mathcal B}^c}^{j-1} \big| X_{{\mathcal B}^cj}Z^{j-1})-I(X_{{\mathcal B}j};Y_{{\mathcal B}^c}^{j-1} \big| X_{{\mathcal B}^cj}Z^{j-1})\right\}
    \nonumber \\
    &\qquad-\sum\nolimits_{\mathcal B\subsetneq [u]}\lambda_{\mathcal B}I(Y_{{\mathcal B}^c}^{j-1} ; Z_j Y_{{\mathcal B}^cj} \big| Z^{j-1} X_{\mathcal{B}^cj})
    \nonumber \\
    &=I_{\lambda}(X_{1j}Y_{1j} ; X_{2j}Y_{2j} ; \cdots ;X_{uj}Y_{uj} \big| Z^j )- I_{\lambda}(X_{1j} ; X_{2j} ; \cdots; X_{uj}\big| Z^{j-1})
    \nonumber \\
    &\qquad+
    I_{\lambda}(Y_{1}^{j-1} ; Y_{2}^{j-1} ; \cdots ;Y_{u}^{j-1}\big|X_{[u]j}Y_{[u]j} Z^j )- I_{\lambda}(Y_{1}^{j-1} ; Y_{2}^{j-1} ; \cdots; Y_{u}^{j-1}\big|X_{[u]j} Z^{j-1}) 
    \nonumber \\
    &\qquad+\sum\nolimits_{\mathcal B\subsetneq [u]}\lambda_{\mathcal B} \left\{I(Y_{[u]j}Z_j;Y_{{\mathcal B}^c}^{j-1} \big| X_{[u]j}Z^{j-1})-I(Y_{{\mathcal B}^c}^{j-1} ; Z_j Y_{{\mathcal B}^cj} \big| Z^{j-1} X_{\mathcal{B}^cj})\right\}.
\end{align}
Thus, to prove \eqref{eqnttt1}, it suffices to verify that
\begin{align}
    & I_{\lambda}(Y_{1}^{j-1} ; Y_{2}^{j-1} ; \cdots ;Y_{u}^{j-1}\big|X_{[u]j}Y_{[u]j} Z^j )- I_{\lambda}(Y_{1}^{j-1} ; Y_{2}^{j-1} ; \cdots; Y_{u}^{j-1}\big|X_{[u]j} Z^{j-1})  \nonumber \\
    &\qquad+\sum\nolimits_{\mathcal B\subsetneq [u]}\lambda_{\mathcal B} I(Y_{[u]j}Z_j;Y_{{\mathcal B}^c}^{j-1} \big| X_{[u]j}Z^{j-1})
     \nonumber \\
     &=- \left(1-\sum\nolimits_{\mathcal B\subsetneq [u]}\lambda_{\mathcal B}\right)  I(Y_{[u]}^{j-1} ; Z_jY_{[u]j} \big| Z^{j-1} X_{[u]j} ).
\end{align}
Let $S=Z_jY_{[u]j}$ and $T=Z^{j-1} X_{[u]j}$, and $G_i=Y_{i}^{j-1}$. We would like to show
\begin{align}
    & I_{\lambda}(G_1; G_2 ; \cdots ;G_u\big|ST )- I_{\lambda}(G_1 ; G_2 ; \cdots; G_u\big|T)  +\sum\nolimits_{\mathcal B\subsetneq [u]}\lambda_{\mathcal B} I(S;G_{{\mathcal B}^c} \big| T)
    \nonumber \\
    &=- \left(1-\sum\nolimits_{\mathcal B\subsetneq [u]}\lambda_{\mathcal B}\right)  I(G_{[u]}; S \big| T ).
\end{align}
This follows from part two of Lemma \ref{lemmaad2}.

To prove \eqref{eqnttt2}, observe that
\begin{align}
    &I(
    Y_{{\mathcal B}}^{j}X_{{\mathcal B}j}
    ;W_{{\mathcal B}^c} \big| Y_{{\mathcal B^c}}^{j}X_{{\mathcal B^c}j}Z^{j})-I(
    Y_{{\mathcal B}}^{j-1}X_{{\mathcal B}j}
    ;W_{{\mathcal B}^c} \big| Y_{{\mathcal B^c}}^{j-1}X_{{\mathcal B^c}j}Z^{j-1}) \nonumber \\
    &= I(Z_j Y_{{\mathcal B}^cj}
    Y_{{\mathcal B}}^{j}X_{{\mathcal B}j}
    ;W_{{\mathcal B}^c} \big| Y_{{\mathcal B^c}}^{j-1}X_{{\mathcal B^c}j}Z^{j-1})-I(
    Y_{{\mathcal B}}^{j-1}X_{{\mathcal B}j}
    ;W_{{\mathcal B}^c} \big| Y_{{\mathcal B^c}}^{j-1}X_{{\mathcal B^c}j}Z^{j-1}) 
    \nonumber \\ &
    \qquad-I(W_{{\mathcal B}^c} ; Z_j Y_{{\mathcal B}^cj} \big| Z^{j-1} X_{\mathcal{B}^cj}Y_{{\mathcal B}^c}^{j-1})
    \nonumber \\
    &= I(Z_j Y_{[u]j}
    ;W_{{\mathcal B}^c} \big| Y_{{[u]}}^{j-1}Z^{j-1}X_{[u]j})
    -I(W_{{\mathcal B}^c} ; Z_j Y_{{\mathcal B}^cj} \big| Z^{j-1} X_{\mathcal{B}^cj}Y_{{\mathcal B}^c}^{j-1}).
\end{align}
For the final step, it suffices to show that
\begin{align}
    &I_{\lambda}(W_1 ; W_2; \cdots ;W_u\big| Z^j Y_{[u]}^jX_{[u]j} )- I_{\lambda}(W_1 ;W_2 ; \cdots; W_u\big| Z^{j-1}Y_{[u]}^{j-1}X_{[u]j}) \nonumber \\
    & + \sum\nolimits_{\mathcal B\subsetneq [u]}\lambda_{\mathcal B} I(Z_j Y_{[u]j}
    ;W_{{\mathcal B}^c} \big| Y_{{[u]}}^{j-1}Z^{j-1}X_{[u]j})
    \nonumber \\
    &=- \left(1-\sum\nolimits_{\mathcal B\subsetneq [u]}\lambda_{\mathcal B}\right)  I(W_{[u]}  ; Z_jY_{[u]j} \big| Z^{j-1} X_{[u]j}Y_{[u]}^{j-1} ).
\end{align}
Let $S=Z_jY_{[u]j}$ and $T=Z^{j-1} X_{[u]j}Y_{[u]}^{j-1}$. We would like to show
\begin{align}
    &I_{\lambda}(W_1 ; W_2; \cdots ;W_u\big| ST)- I_{\lambda}(W_1 ;W_2 ; \cdots; W_u\big| T) 
     +
    \sum\nolimits_{\mathcal B\subsetneq [u]}\lambda_{\mathcal B} I(S
    ;W_{{\mathcal B}^c} \big| T)
    \nonumber \\
    &=- \left(1-\sum\nolimits_{\mathcal B\subsetneq [u]}\lambda_{\mathcal B}\right)  I(W_{[u]}  ; S \big|T ).
\end{align}
This follows from part two of Lemma \ref{lemmaad2}.

\begin{lemma}\label{lemmaad2}
    The following two identities hold for the definition of $I_\lambda$ given in~\eqref{eq:defMI-2}:
    \begin{itemize}
        \item The chain rule:
        \begin{align}
    & I_{\lambda}(X_{1}Y_{1};X_{2}Y_{2};\cdots;X_{k}Y_{k}|T) \nonumber \\
    & = I_{\lambda}(X_{1};X_{2};\cdots;X_{k}|T) + I_{\lambda}(Y_{1};Y_{2};\cdots;Y_{k} \big| X_{[k]}T)  + \sum\nolimits_{\mathcal B\subsetneq [k]}\lambda_{\mathcal B} I(X_{{\mathcal B}};Y_{{\mathcal B}^c} \big| X_{{\mathcal B}^c}T).
\end{align}
        \item The identity:
 \begin{align}
    &I_{\lambda}(X_1 ; X_2; \cdots ;X_k\big| ST)- I_{\lambda}(X_1 ;X_2 ; \cdots; X_k\big| T) 
    \nonumber \\
    &=-
    I(X_{[k]}  ; S \big|T )+\sum\nolimits_{\mathcal B\subsetneq [k]}\lambda_{\mathcal B} I(X_{[k]}  ; S \big|X_{{\mathcal B}^c}T )
    \\&=- \left(1-\sum\nolimits_{\mathcal B\subsetneq [k]}\lambda_{\mathcal B}\right)  I(X_{[k]}  ; S \big|T )-
    \sum\nolimits_{\mathcal B\subsetneq [k]}\lambda_{\mathcal B} I(S
    ;X_{{\mathcal B}^c} \big| T).
\end{align}
\end{itemize}

\end{lemma}
\begin{proof}
    To prove the chain rule, we need to show
\begin{align}
    &\sum\nolimits_{{\mathcal B\subsetneq [k]}} \sum\nolimits_{i\in {\mathcal B}} \lambda_{\mathcal B}\,
    I(X_iY_i;X_{\mathcal B^c}Y_{\mathcal B^c} | TX^{i-1}Y^{i-1}) \\
    & = \sum\nolimits_{{\mathcal B\subsetneq [k]}} \sum\nolimits_{i\in {\mathcal B}} \lambda_{\mathcal B}\,
    \left\{I(X_i;X_{\mathcal B^c} | TX^{i-1}) + 
    I(Y_i;Y_{\mathcal B^c} | X_{[k]}T Y^{i-1})\right\}
    \nonumber \\
    & \qquad + \sum\nolimits_{\mathcal B\subsetneq [k]}\lambda_{\mathcal B} I(X_{{\mathcal B}};Y_{{\mathcal B}^c} \big| X_{{\mathcal B}^c}T).
\end{align}
We have
\begin{align}
     & I_{\lambda}(X_{1}Y_{1};X_{2}Y_{2};\cdots;X_{k}Y_{k}|T)
     \nonumber \\
     &=\sum\nolimits_{{\mathcal B\subsetneq [k]}} \sum\nolimits_{i\in {\mathcal B}} \lambda_{\mathcal B}\,
    I(X_iY_i;X_{\mathcal B^c}Y_{\mathcal B^c} | TX^{i-1}Y^{i-1}) \nonumber \\
    & = \sum\nolimits_{{\mathcal B\subsetneq [k]}} \sum\nolimits_{i\in {\mathcal B}} \lambda_{\mathcal B}\,
    \left\{I(X_iY_i;X_{\mathcal B^c}Y_{\mathcal B^c} Y^{i-1}| TX^{i-1})-I(X_iY_i; Y^{i-1}| TX^{i-1})\right\}
    \nonumber \\
    & = \sum\nolimits_{{\mathcal B\subsetneq [k]}} \sum\nolimits_{i\in {\mathcal B}} \lambda_{\mathcal B}\,
    \left\{I(X_i;X_{\mathcal B^c}| TX^{i-1})+I(X_i;Y_{\mathcal B^c} Y^{i-1}| TX_{\mathcal B^c}X^{i-1})+I(Y_i;X_{\mathcal B^c}Y_{\mathcal B^c} Y^{i-1}| TX^{i})\right\}
    \nonumber \\
    &\qquad-\sum\nolimits_{{\mathcal B\subsetneq [k]}} \sum\nolimits_{i\in {\mathcal B}} \lambda_{\mathcal B}I(X_iY_i; Y^{i-1}| TX^{i-1})
    \nonumber \\
    & = \sum\nolimits_{{\mathcal B\subsetneq [k]}} \sum\nolimits_{i\in {\mathcal B}} \lambda_{\mathcal B}\,
    \left\{I(X_i;X_{\mathcal B^c}| TX^{i-1})+I(X_i;Y_{\mathcal B^c} Y^{i-1}| TX_{\mathcal B^c}X^{i-1})+
    I(Y_i;Y_{\mathcal B^c} Y^{i-1}| TX_{[k]})\right\}
    \nonumber \\
    &\qquad
    -\sum\nolimits_{{\mathcal B\subsetneq [k]}} \sum\nolimits_{i\in {\mathcal B}} \lambda_{\mathcal B}\,
    \left\{I(Y_i;X_{[k]}| TX^{i}X_{\mathcal B^c}Y_{\mathcal B^c} Y^{i-1})\right\}
    \nonumber \\
    &\qquad+\sum\nolimits_{{\mathcal B\subsetneq [k]}} \sum\nolimits_{i\in {\mathcal B}} \lambda_{\mathcal B}\left\{
    I(Y_i;X_{[k]}| TX^{i})-I(X_iY_i; Y^{i-1}| TX^{i-1})\right\}
    \nonumber \\
    & = I_{\lambda}(X_{1};X_{2};\cdots;X_{k}|T) + I_{\lambda}(Y_{1};Y_{2};\cdots;Y_{k} \big| X_{[k]}T)  \nonumber \\
    & \qquad+ \sum\nolimits_{{\mathcal B\subsetneq [k]}} \sum\nolimits_{i\in {\mathcal B}} \lambda_{\mathcal B}\,
    \left\{I(Y_i;Y^{i-1}|X_{[k]}T)+I(X_i;Y_{\mathcal B^c} Y^{i-1}| TX_{\mathcal B^c}X^{i-1})\right\}
    \nonumber \\
    &\qquad
    -\sum\nolimits_{{\mathcal B\subsetneq [k]}} \sum\nolimits_{i\in {\mathcal B}} \lambda_{\mathcal B}\,
    \left\{I(Y_i;X_{[k]}| TX^{i}X_{\mathcal B^c}Y_{\mathcal B^c} Y^{i-1})\right\}
    \nonumber \\
    &\qquad+\sum\nolimits_{{\mathcal B\subsetneq [k]}} \sum\nolimits_{i\in {\mathcal B}} \lambda_{\mathcal B}\left\{
    I(Y_i;X_{[k]}| TX^{i})-I(X_iY_i; Y^{i-1}| TX^{i-1})\right\}.
\end{align}
Observe that
\begin{align}
    & \sum\nolimits_{{\mathcal B\subsetneq [k]}} \sum\nolimits_{i\in\mathcal{B}}\lambda_{\mathcal B}\,
    \left\{I(X_i;Y_{\mathcal B^c} Y^{i-1}| TX_{\mathcal B^c}X^{i-1})\right\}
    \nonumber \\
    &= \sum\nolimits_{{\mathcal B\subsetneq [k]}} \sum\nolimits_{i\in\mathcal{B}}\lambda_{\mathcal B}\,
    \left\{I(X_i;Y_{\mathcal B^c}| TX_{\mathcal B^c}X^{i-1})+I(X_i;Y^{i-1}| TX_{\mathcal B^c}X^{i-1}Y_{\mathcal B^c} )\right\}
    \nonumber \\
    &= \sum\nolimits_{{\mathcal B\subsetneq [k]}} \sum\nolimits_{i=1}^k \lambda_{\mathcal B}\,
    I(X_i;Y_{\mathcal B^c}| TX_{\mathcal B^c}X^{i-1})+\sum\nolimits_{{\mathcal B\subsetneq [k]}} \sum\nolimits_{i\in\mathcal{B}} \lambda_{\mathcal B}\,I(X_i;Y^{i-1}| TX_{\mathcal B^c}X^{i-1}Y_{\mathcal B^c} )
    \nonumber \\
    &= \sum\nolimits_{{\mathcal B\subsetneq [k]}}  \lambda_{\mathcal B}\,
    I(X_{[k]};Y_{\mathcal B^c}| TX_{\mathcal B^c})+\sum\nolimits_{{\mathcal B\subsetneq [k]}} \sum\nolimits_{i\in\mathcal{B}} \lambda_{\mathcal B}\,
    I(X_i;Y^{i-1}| TX_{\mathcal B^c}X^{i-1}Y_{\mathcal B^c} )
    \nonumber \\
    &= \sum\nolimits_{{\mathcal B\subsetneq [k]}}  \lambda_{\mathcal B}\,
    I(X_{\mathcal{B}};Y_{\mathcal B^c}| TX_{\mathcal B^c})+\sum\nolimits_{{\mathcal B\subsetneq [k]}} \sum\nolimits_{i\in\mathcal{B}}\lambda_{\mathcal B}\,
    I(X_i;Y^{i-1}| TX_{\mathcal B^c}X^{i-1}Y_{\mathcal B^c} ).
\end{align}
Thus, to complete the proof, it suffices to show that
\begin{align}
    & \sum\nolimits_{i\in\mathcal{B}} \Big\{ I(X_i;Y^{i-1}| TX_{\mathcal B^c}X^{i-1}Y_{\mathcal B^c} ) - I(Y_i;X_{[k]}| TX^{i}X_{\mathcal B^c}Y_{\mathcal B^c} Y^{i-1}) \Big\}=0, \qquad \forall \mathcal B\subsetneq [k]
    \label {eq:lemma2-step} \\
    &\sum\nolimits_{{\mathcal B\subsetneq [k]}} \sum\nolimits_{i\in\mathcal{B}} \lambda_{\mathcal B}\,\left\{
    I(Y_i;Y^{i-1}|X_{[k]}T)+I(Y_i;X_{[k]}| TX^{i})-I(X_iY_i; Y^{i-1}| TX^{i-1})\right\}=0.
\end{align}
Consider the latter equation.
Using the chain rule of mutual information, we can expand:
\begin{align}
    I(X_iY_i; Y^{i-1}| TX^{i-1}) &= I(X_i; Y^{i-1}| TX^{i-1}) + I(Y_i; Y^{i-1}| TX^{i}). \label{eq:chain1}
\end{align}
Next, we have the following identity:
\begin{align}
    I(Y_i; X_{[k]} | TX^i)  &= I(Y_i; Y^{i-1} | TX^i)+I(Y_i; X_{[k]} | TX^i Y^{i-1}) - I(Y_i; Y^{i-1} | TX_{[k]}). \label{eq:chain2}
\end{align}
Substituting \eqref{eq:chain1} and \eqref{eq:chain2} into the last sum of the original expression, we obtain:
\begin{align}
    &\sum\nolimits_{{\mathcal B\subsetneq [k]}} \sum\nolimits_{i\in\mathcal{B}}\lambda_{\mathcal B}\, \Big\{ I(Y_i;Y^{i-1}|X_{[k]}T)+I(Y_i;X_{[k]}| TX^{i}) - I(X_iY_i; Y^{i-1}| TX^{i-1}) \Big\}
    \nonumber \\
    &= \sum\nolimits_{{\mathcal B\subsetneq [k]}} \sum\nolimits_{i\in\mathcal{B}}\lambda_{\mathcal B}\, \Big\{ I(Y_i; X_{[k]} | TX^i Y^{i-1}) - I(X_i; Y^{i-1}| TX^{i-1}) \Big\}
    \nonumber \\
    &= \sum_{i=1}^k\Big\{ I(Y_i; X_{[k]} | TX^i Y^{i-1}) - I(X_i; Y^{i-1}| TX^{i-1}) \Big\}
    \nonumber \\
    &=0
\end{align}
where the last step follows because
\begin{align}
    \sum_{i=1}^kI(X_i; Y^{i-1}| TX^{i-1})=\sum_{(i,j):i>j}I(X_i; Y_j | TX^{i-1} Y^{j-1}) \\
    \sum_{i=1}^kI(Y_i; X_{[k]} | TX^i Y^{i-1})=\sum_{(i,j):i<j}I(Y_i; X_j | TX^{j-1} Y^{i-1}).
\end{align}
The proof of \eqref{eq:lemma2-step} is similar because
\begin{align*}
\sum\nolimits_{i\in\mathcal{B}} I(X_i;Y^{i-1}| TX_{\mathcal B^c}X^{i-1}Y_{\mathcal B^c} ) &=\sum_{(i\in\mathcal{B},j\in\mathcal{B}):i>j}I(X_i; Y_j | TX^{i-1} Y^{j-1}X_{\mathcal B^c}Y_{\mathcal B^c})
    \\&=\sum\nolimits_{i\in\mathcal{B}}I(Y_i;X_{[k]}| TX^{i}X_{\mathcal B^c}Y_{\mathcal B^c} Y^{i-1}).
\end{align*}

\textbf{Proof of the second property:} To show the second property, we need to show that
\begin{align}
    &\sum\nolimits_{{\mathcal B\subsetneq [k]}} \sum\nolimits_{i\in {\mathcal B}} \lambda_{\mathcal B}\,
    \left\{I(X_i;X_{\mathcal B^c}| STX^{i-1})-I(X_i;X_{\mathcal B^c}| TX^{i-1})\right\}
    \nonumber \\
     &=-
    I(X_{[k]}  ; S \big|T )+\sum\nolimits_{\mathcal B\subsetneq [k]}\lambda_{\mathcal B} I(X_{[k]}  ; S \big|X_{{\mathcal B}^c}T ).
\end{align}
Observe that
\begin{align}
    &\sum\nolimits_{{\mathcal B\subsetneq [k]}} \sum\nolimits_{i\in {\mathcal B}} \lambda_{\mathcal B}\,
    \left\{I(X_i;X_{\mathcal B^c}| STX^{i-1})-I(X_i;X_{\mathcal B^c}| TX^{i-1})\right\}
    \nonumber \\
    &=\sum\nolimits_{{\mathcal B\subsetneq [k]}} \sum\nolimits_{i\in {\mathcal B}} \lambda_{\mathcal B}\,
    \left\{I(X_i;S| X_{\mathcal B^c}TX^{i-1})-I(X_i;S| TX^{i-1})\right\} 
    \nonumber \\
    &=\sum\nolimits_{{\mathcal B\subsetneq [k]}} \sum\nolimits_{i\in {\mathcal B}} \lambda_{\mathcal B}\,
    I(X_i;S| X_{\mathcal B^c}TX^{i-1})
    -\sum_{i=1}^k\left(\sum\nolimits_{{\mathcal B\subsetneq [k]}: i\in {\mathcal B}} \lambda_{\mathcal B}\right)I(X_i;S| TX^{i-1}) 
    \nonumber \\
    &=\sum\nolimits_{{\mathcal B\subsetneq [k]}} \sum\nolimits_{i\in [k]} \lambda_{\mathcal B}\,
    I(X_i;S| X_{\mathcal B^c}TX^{i-1})-\sum_{i=1}^kI(X_i;S| TX^{i-1}) 
    \nonumber \\
    &=\left[\sum\nolimits_{{\mathcal B\subsetneq [k]}}  \lambda_{\mathcal B}\,
    I(X_{[k]};S| X_{\mathcal B^c}T)\right]-I(X_{[k]};S| T).
\end{align}
\end{proof}
}

\section{Source Model with Silent Nodes}
\label{appendixC}

Consider the $k$-terminal source model with silent nodes when $H(Z|Y_i)=0$ for $i=1,2,\cdots, k$ and where the first $u$ terminals use the public channel.
The paper \cite[Theorem 6]{gohari2010information1} showed the maximum value for 
$R_{[k]}$ is
\begin{align}
    H(Y_{[u]}|Z) - \min_{(r_1,r_2,\cdots, r_u) \in \mathscr{R}}\sum\nolimits_i r_i
\end{align}
where $\mathscr{R}$ is the set of tuples
$(r_1,r_2,\cdots, r_u)$ such that for any proper set $\mathcal B$ satisfying $\mathcal B\cap [u]\neq \emptyset$ we have
\begin{align}
    \sum\nolimits_{j\in \mathcal B\cap[u]}r_j
    \geq H(Y_{\mathcal B\cap [u]}|Y_{\mathcal B^c}Z).
\end{align}
If $\mathcal B\cap[u]\neq [u]$, it is best to include $[k]-[u]$ in $\mathcal B$.  Thus, in this case, for any 
$\mathcal{B}\subsetneq [u]$ we have
\begin{align}
    \sum\nolimits_{j\in \mathcal B}r_j
    \geq H(Y_{\mathcal B}|Y_{[u]-\mathcal B}Z).
\end{align}
For the case $\mathcal B\cap[u]= [u]$, we obtain the following bound
\begin{align}
    \sum\nolimits_{i\in [u]}r_i\geq H(Y_{[u]}|Y_{j}Z),
    \qquad\forall\, j\in[k]-[u].
\end{align}
By writing the dual of the above linear program, we obtain the expression:
\begin{align}
    R_{[k]} = \min\bigg(
    & H(Y_{[u]}|Z)-\sum_{\mathcal{B}\subsetneq[u]} \zeta_\mathcal{B} H(Y_{B\cap [u]}|Y_{[u]-\mathcal B}Z)
    - \sum_{j\in[k]-[u]} \zeta_{\{j\}} H(Y_{[u]}|Y_{j}Z)
    \bigg)
\end{align}
where the minimum is over non-negative $\zeta_\mathcal{B}:~\mathcal{B}\subsetneq[u]$ and $\zeta_{\{j\}}$ for $j>u$ satisfying
\begin{align}
    \sum_{\mathcal{B}:~i\in\mathcal{B}} \zeta_{\mathcal{B}} + \sum_{j>u}\zeta_{\{j\}}
    = 1, \qquad \forall\, i\in[u].
\end{align}

To obtain this bound from our general upper bound, choose 
\begin{align}
    \begin{array}{ll}
    \omega_{[u]} = \sum_{\mathcal{B}:\,i\in\mathcal{B}} \zeta_{\mathcal{B}} & \\[4pt]
    \omega_{[u]\cup\{j\}} = \zeta_{\{j\}}, & \forall\, j\in[k]-[u] \\[4pt]
    \omega_\mathcal{U} = 0, & \text{ otherwise.}
    \end{array}
\end{align}
For the set $[u]$, define
\begin{align}
    \lambda_{\mathcal{B}}^{[u]}
    = \frac{\zeta_\mathcal{B}}{1-\sum_{j>u}\zeta_{\{j\}}}, \qquad\forall\, \mathcal{B}\subsetneq[u].
\end{align}
For the set $[u]\cup\{j\}$ for $j>u$, define $\lambda_{\mathcal{B}}^{[u]\cup\{j\}}=1$
if $\mathcal{B}=[u]$ or $\mathcal{B}=\{j\}$ and $\lambda_{\mathcal{B}}^{[u]\cup\{j\}}=0$ for all the other sets $\mathcal{B}$.
This choice of $\omega_\mathcal{U}$ and $\lambda^{\mathcal{U}}_\mathcal{B}$ yields the desired bound if the auxiliary receiver is $T=Z$ for the main and parallel channels. Note that the parallel channel is $Y_1=Y_2=\cdots=Y_k=Z=X_{[u]}$ with $X_{u+1},\cdots, X_k$ being constants. The proof of $V_{\omega, \lambda^{\cdot}}(q_1(t,y_{[k]},z|x_{[k]}))\leq 0$ for the parallel channel is similar to the one discussed in Section \ref{subsubsec:kterminalssec}; the only extra step is to show that
\begin{align}
-\sum_{\mathcal{U}}\omega_\mathcal{U}\left(1-\sum_{\mathcal B\subsetneq \mathcal{U}}\lambda^{\mathcal{U}}_{\mathcal B}\right)I(X_{[k]};Y_{\mathcal{U}},Z|X_{\mathcal{U}})=0.\label{eqnNeweqnns}
\end{align}
Note that we have $[u]\subseteq\mathcal{U}$ for the sets $\mathcal{U}$ where $\omega_\mathcal{U}>0$. The terms $I(X_{[k]};Y_{\mathcal{U}},Z|X_{\mathcal{U}})$ vanish because $X_j$ is a constant for $j\notin[u]$.

\section{Cardinality Bounds for Theorem~\ref{thm2}}
\label{appendixD}

Consider the statement of Theorem~\ref{thm2}. Fix the distribution $p(x_{[k]},t_{[a]}|t_m)$ and vary 
$p(t_m)$. For a marginal distribution $q(t_m)$, we require
\begin{align}
    \sum\nolimits_{t_m} q(t_m)\, p(x_{[k]},t_{[a]} \big| t_m)
    = \sum\nolimits_{t_m} p(t_m)\, p(x_{[k]},t_{[a]} \big| t_m), \quad \forall\, x_{[k]},t_{[a]}.
\end{align}
The factorization \eqref{eq:p-factorization-2} ensures it suffices to impose the following condition for every $x_{[k]}$:
\begin{align}
    \sum\nolimits_{t_m} q(t_m)\, p(x_{[k]} \big | t_m)
    = \sum\nolimits_{t_m} p(t_m)\, p(x_{[k]} \big| t_m).
\end{align}
This yields $\prod_{i}|\mathcal{X}_i|$ equations.
The number of equations involving $T_m$ in \eqref{eqn50-2} is $2^{k}-1$. To preserve the values of these expressions under $q(t_m)$ and $p(t_m)$, one must impose $2^{k}-1$ linear equations. Finally, instead of imposing \eqref{eqnDBs1} for every fractional partition $\lambda$, it suffices (by the linearity of the equation in $\lambda$) to impose the constraints only for the vertices of the fractional partition polytope, i.e., vertices formed by $2^{k}-1$ tuples $\{\lambda_{\mathcal B}\}$ for $\mathcal B\in \mathsf{B}$, defined by the $2^{k}-1$ non-negativity constraints $\lambda_{\mathcal B}\geq 0$ and the $k$ equality constraints in \eqref{eqnEE2}. Every vertex corresponds to the intersection of $2^{k}-1$ hyperplanes, so the number of vertices is at most
\begin{equation}
    \binom{2^{k}-1+k}{2^{k}-1}.
\end{equation}
Thus, by imposing $
    \binom{2^{k}-1+k}{2^{k}-1}
$ linear equations on $q(t_m)$, we can ensure that the DB inequalities are satisfied under $q(t_m)$. The total number of linear equations imposed on $q(t_m)$ is 
\begin{equation}
   \prod\nolimits_{i\in[k]} |\mathcal{X}_i| + (2^k-1) + \binom{2^{k}-1+k}{2^{k}-1}.
\end{equation}
Next, we have the inequality constraints $q(t_m)\geq 0$ for all $t_m$. Consider the polytope formed by the equality and inequality constraints, and let $q(t_m)$ be a vertex of this polytope. Since every vertex must lie on $|\mathcal{T}_m|$ hyperplanes (defining the polytope), the vertex must satisfy at least
\begin{equation}
   |\mathcal{T}_m|-\left(\prod\nolimits_{i\in[k]} |\mathcal{X}_i| + (2^k-1) + \binom{2^{k}-1+k}{2^{k}-1}\right)
\end{equation}
inequalities of the form $q(t_m)\geq 0$ with equality. Thus, the number of non-zero entries of $q(t_m)$ will be at most the desired cardinality bound on $T_m$ given in the theorem statement.

\section{Optimality of Gaussian Inputs}
\label{appendixE}

Consider the channel \eqref{eq:Gaussian-channel} and the power constraints \eqref{eq:power-constraints}. The following lemma bounds the maximum weighted sum rate.

\begin{definition}\label{def6}
     Let $\mathcal{P}$ be the set of $p(x_{[k]},t_{[a]})$ factorizing as in \eqref{eq:p-factorization-2} and satisfying the DB constraints \eqref{eqnDBs1} and power constraints \eqref{eq:power-constraints}. Let $\mathcal{P}'$ be the set of $p(x_{[k]},t_{[a]})$ satisfying \eqref{eqnDBs1} and \eqref{eq:power-constraints}, but not necessarily factorizing as in \eqref{eq:p-factorization-2}.
\end{definition}
 
\begin{lemma}
    Let $\beta_{i\mathcal S}$ (for $i, \mathcal{S}\subseteq [k]-\{i\}$) be non-negative real numbers. The outer bound in Theorem \ref{thm2} can be equivalently expressed as follows. 
    Any achievable rate tuple $\{R_{i\mathcal{S}}\}$ satisfies

\begin{align}
    \sum\nolimits_{i,\mathcal{L}} \beta_{i\mathcal{L}} R_{i\mathcal{L}}  \le \min_{\gamma\in\mathcal{G}}\sup_{p(x_{[k]},t_{[a]})\in \mathcal{P}} \sum\nolimits_{m,\mathcal{S}} \gamma_{\mathcal{S},m} I(X_{\mathcal S};Z_m,Y_{{\mathcal S^c}}|X_{\mathcal S^c},T_m)
\end{align}
    for all $\{\beta_{i\mathcal S}\}$, where $\mathcal{P}$ is given by Definition \ref{def6} and
    $\mathcal{G}$ is the set of non-negative weights $\gamma_{\mathcal{S},m}$ for non-empty $\mathcal{S}\subsetneq [k]$, $m\in[a]$ satisfying
\begin{align}
    \beta_{i\mathcal{L}} = \sum_{m,\mathcal{S}:i\in \mathcal{S},\mathcal{L}\cap \mathcal{S}^c\neq \emptyset}\gamma_{\mathcal{S},m}.
\end{align}
\end{lemma}
\begin{proof}
 The proof of Theorem \ref{thm2} shows that taking union over $p(x_{[k]},t_{[a]})$ in $\mathcal{P}'$ yields the same region as taking union over $p(x_{[k]},t_{[a]})$ in $\mathcal{P}$ because all mutual information terms depend only on the marginals $p(x_{[k]},t_{m})$ for $m\in[a]$. 
 From \eqref{eqn50-2}, for any $\gamma_{\mathcal{S},m}\geq 0$ we have
\begin{align}
    \sum_{m,\mathcal{S}} \gamma_{\mathcal{S},m} \sum_{i\in \mathcal{S},\mathcal{L}\cap \mathcal{S}^c\neq \emptyset} R_{i\mathcal{L}} \leq \sum_{m,\mathcal{S}} \gamma_{\mathcal{S},m} I(X_{\mathcal S};Z_m,Y_{{\mathcal S^c}}|X_{\mathcal S^c},T_m).
\end{align}
For any $\gamma\in\mathcal{G}$, we have
\begin{align}
    \beta_{i\mathcal{L}} = \sum_{m,\mathcal{S}:i\in \mathcal{S},\mathcal{L}\cap \mathcal{S}^c\neq \emptyset}\gamma_{\mathcal{S},m}
\end{align}
so we obtain 
\begin{align}
   \sum_{i,\mathcal{L}} \beta_{i\mathcal{L}} R_{i\mathcal{L}} & \leq \sup_{p(x_{[k]},t_{[a]}) \in \mathcal{P}} \min_{\gamma\in\mathcal{G}} \sum_{m,\mathcal{S}}\gamma_{\mathcal{S},m} I(X_{\mathcal S};Z_m,Y_{{\mathcal S^c}}|X_{\mathcal S^c},T_m) 
   \nonumber \\
    & =\sup_{p(x_{[k]},t_{[a]})\in \mathcal{P}'} \min_{\gamma\in\mathcal{G}} \sum_{m,\mathcal{S}}\gamma_{\mathcal{S},m} I(X_{\mathcal S};Z_m,Y_{{\mathcal S^c}}|X_{\mathcal S^c},T_m)
    \nonumber \\
    & = \min_{\gamma\in\mathcal{G}} \sup_{p(x_{[k]},t_{[a]})\in \mathcal{P}'} \sum_{m,\mathcal{S}} \gamma_{\mathcal{S},m} I(X_{\mathcal S};Z_m,Y_{{\mathcal S^c}}|X_{\mathcal S^c},T_m)
    \nonumber \\
    & = \min_{\gamma\in\mathcal{G}} \sup_{p(x_{[k]},t_{[a]}) \in \mathcal{P}} \sum_{m,\mathcal{S}} \gamma_{\mathcal{S},m} I(X_{\mathcal S};Z_m,Y_{{\mathcal S^c}}|X_{\mathcal S^c},T_m)
\end{align}
where the minimax exchange follows from Corollary 2 in \cite{ggny14} and because the set of all tuples $(\tilde{R}_{m,\mathcal{S}})$ satisfying
\begin{align}
\tilde{R}_{m,\mathcal{S}}\leq I(X_{\mathcal S};Z_m,Y_{{\mathcal S^c}}|X_{\mathcal S^c},T_m)\label{tpregion}
\end{align}
over all $p(x_{[k]},t_{[a]})\in\mathcal{P}'$ is a convex region. The latter holds by including a time-sharing variable in the $T_m$'s as follows: take two tuples $(\tilde{R}^{(1)}_{m,\mathcal{S}})$ and $(\tilde{R}^{(2)}_{m,\mathcal{S}})$, and corresponding distributions 
$p_1(x_{[k]},t^{(1)}_{[a]})\in\mathcal{P}'$ and $p_2(x_{[k]},t^{(2)}_{[a]})\in\mathcal{P}'$. Let $Q\in\{1,2\}$ be a uniform random variable, independent of all previously defined random variables, and set $T'_m=(T^{(Q)}_m,Q)$ for all $t\in[a]$. Since all mutual information terms (including those in DB constraints) are conditioned on $T'_m$ for some $m$, every mutual information term will be conditioned on $Q$, and its value will be the average of those under $p_1(x_{[k]},t^{(1)}_{[a]})$ and $p_2(x_{[k]},t^{(2)}_{[a]})$. This will convexify the region based on \eqref{tpregion}.
\end{proof}

\begin{theorem}For any weights $\gamma_{\mathcal{S},m}\geq 0$, the supremum
\begin{align}
    \sup_{p(x_{[k]},t_{[a]})\in \mathcal{P}}\sum_{m,\mathcal{S}}\gamma_{\mathcal{S},m}I(X_{\mathcal S};Z_m,Y_{{\mathcal S^c}}|X_{\mathcal S^c},T_m)
\end{align}
is obtained by a jointly Gaussian distribution where $T_m$ is a $k$-dimensional random vector. Here, the set  $\mathcal{P}$ is defined in Definition \ref{def6}.
\end{theorem}
\begin{proof}
We perturb the objective function\footnote{This idea was first introduced in \cite{gohari2021outer}. For a non-trivial application of this idea, please see \cite{lau2024entropic}.} by adding a small term $\epsilon I(X_{[k]};\tilde Y_{[k]},Z_m|T_m)$. By continuity, it suffices to show the optimality of the Gaussian input distribution for
\begin{align}
    \sup_{p(x_{[k]},t_{[a]})\in \mathcal{P}}\epsilon I(X_{[k]};\tilde Y_{[k]},Z_m|T_m)+\sum_{m,\mathcal{S}}\gamma_{\mathcal{S},m}I(X_{\mathcal S};Z_m,Y_{{\mathcal S^c}}|X_{\mathcal S^c},T_m)\label{Vsup}
\end{align}
for every $\epsilon>0$ where
\begin{equation}
    \tilde{Y}_{i} = X_i + G_i
\end{equation}
for standard Gaussian noise $G_i$ (which are mutually independent of each other, and independent of all previously defined variables). 
Let $p^*(x_{[k]},t_{[a]})$ be a maximizer in \eqref{Vsup}, which exists based on arguments in \cite[Appendix II]{gen14}. The power constraints yield tightness, and the additive Gaussian noise yields the continuity of the various terms with respect to weak convergence.  Alternatively, one can use the approach in \cite{mahvari2023stability}, which does not require the existence of a maximizer.

Take two i.i.d. copies of the maximizer and denote them as $X_{[k]},T_{[a]}$ and $X'_{[k]},T'_{[a]}$ respectively. Thus,
$X_{[k]},T_{[a]},Z_{[a]},Y_{{[k]}},\tilde Y_{{[k]}}$ and $X'_{[k]},T'_{[a]},Z'_{[a]},Y'_{{[k]}},\tilde Y'_{{[k]}}$ are i.i.d.\ copies.
Denote the rotated versions by $(\cdot)_+ = \frac{(\cdot) + (\cdot)'}{\sqrt{2}}$ and let $(\cdot)_- = \frac{(\cdot) - (\cdot)'}{\sqrt{2}}$. The rotation results in the $+$ and $-$ variables
\begin{equation}
   (T_{[a]+},X_{[k]+},Z_{[a]+},Y_{{[k]}+},\tilde Y_{{[k]}+}), \quad
   (T_{[a]-},X_{[k]-},Z_{[a]-},Y_{{[k]}-},\tilde Y_{{[k]}-})
\end{equation}
respectively. Since $p(z_{[a]},y_{{[k]}},\tilde y_{{[k]}}|x_{[k]})$ is an additive Gaussian noise channel, the following Markov chains hold after rotation:
\begin{align}
   (T_{[a]+},T_{[a]-}, X_{[k]-},Z_{[a]-}, Y_{{[k]}-},\tilde Y_{{[k]}-})
   & \mkv X_{[k]+}
   \mkv (Z_{[a]+},Y_{{[k]}+},\tilde Y_{{[k]}+})
   \label{MCc1} \\
   (T_{[a]+},T_{[a]-}, X_{[k]+},Z_{[a]+}, Y_{{[k]}+},\tilde Y_{{[k]}+})
   & \mkv X_{[k]-}
   \mkv (Z_{[a]-},Y_{{[k]}-},\tilde Y_{{[k]}-}).
\label{MCc2}
\end{align}
Guided by the proof of Theorem \ref{thm2}, which uses the past of $Z^{j-1}$ for single-letterization, the idea is to consider the two-letter form of the expressions with the $+$ and $-$ variables, and single-letterize it using the identification
$T_{m+}, T_{m-}$ for the $-$ variables, and $T_{m+}, T_{m-}, Z_{m-}$ for the $+$ variables (interpreting the $-$ variables as the past, and the $+$ variables as the future).

We start from the DB constraints. First, observe that the DB constraint
\begin{align} 
    & I_\lambda(X_{i_1}Y_{{i_1}};X_{i_2}Y_{{i_2}};\cdots;X_{i_u}Y_{{i_u}}|T_m,Z_m)
    \nonumber \\
    & \ge I_\lambda(X_{i_1};X_{i_2};\cdots;X_{i_u}|T_m)
    + \left(1-\sum\nolimits_{\mathcal B\subsetneq \mathcal{U}}\lambda_{\mathcal B}\right) I(X_{[k]};Z_m,Y_{{\mathcal{U}}}|X_{\mathcal{U}},T_m)
\end{align}
can be written as
\begin{align} 
    & \left(1-\sum\nolimits_{\mathcal B\subsetneq \mathcal{U}}\lambda_{\mathcal B}\right) H(X_{[k]}Y_{\mathcal{U}}|T_m,Z_m) - \left(1-\sum\nolimits_{\mathcal B\subsetneq \mathcal U}\lambda_{\mathcal B}\right) H(X_{[k]}|T_m)
    \nonumber \\
    &  + \sum\nolimits_{\mathcal B\subsetneq \mathcal U}\lambda_{\mathcal B} H(X_{{\mathcal B}^c}Y_{{\mathcal B^c}}|T_m,Z_m)- \sum\nolimits_{\mathcal B\subsetneq \mathcal U}\lambda_{\mathcal B} H(X_{{\mathcal B}^c}|T_m)\geq 0
    .
\end{align}
Since $X_{[k]},T_{[a]},Y,Z_{[a]},Y_{{[k]}}$ and $X'_{[k]},T'_{[a]},Y',Z'_{[a]},Y'_{{[k]}}$ are i.i.d.\ copies of the maximizer and satisfy the DB constraints, we obtain the following chain of inequalities:
\begin{align} 
    0&\leq \left(1-\sum\nolimits_{\mathcal B\subsetneq \mathcal{U}}\lambda_{\mathcal B}\right) H(X_{[k]}X'_{[k]}Y_{\mathcal{U}}Y'_{\mathcal{U}}|T_m,T'_m,Z_m,Z'_m) - \left(1-\sum\nolimits_{\mathcal B\subsetneq \mathcal U}\lambda_{\mathcal B}\right) H(X_{[k]}X'_{[k]}|T_m,T'_m)
    \nonumber \\
    &  + \sum\nolimits_{\mathcal B\subsetneq \mathcal U}\lambda_{\mathcal B} H(X_{{\mathcal B}^c}X'_{{\mathcal B}^c}Y_{{\mathcal B^c}}Y'_{{\mathcal B^c}}|T_m,T'_m,Z_m,Z'_m)- \sum\nolimits_{\mathcal B\subsetneq \mathcal U}\lambda_{\mathcal B} H(X_{{\mathcal B}^c}X'_{{\mathcal B}^c}|T_m,T'_m)
    \nonumber \\
    &= \left(1-\sum\nolimits_{\mathcal B\subsetneq \mathcal{U}}\lambda_{\mathcal B}\right) H(X_{[k]+}X_{[k]-}Y_{\mathcal{U}+}Y_{\mathcal{U}-}|T_{m+},T_{m-},Z_{m+},Z_{m-})
    \nonumber \\
    & \quad- \left(1-\sum\nolimits_{\mathcal B\subsetneq \mathcal U}\lambda_{\mathcal B}\right) H(X_{[k]+}X_{[k]-}|T_{m+},T_{m-})
    \nonumber \\
    &  \quad+ \sum\nolimits_{\mathcal B\subsetneq \mathcal U}\lambda_{\mathcal B} H(X_{{\mathcal B}^c+}X_{{\mathcal B}^c-}Y_{{\mathcal B^c}+}Y_{{\mathcal B^c}-}|T_{m+},T_{m-},Z_{m+},Z_{m-})- \sum\nolimits_{\mathcal B\subsetneq \mathcal U}\lambda_{\mathcal B} H(X_{{\mathcal B}^c+}X_{{\mathcal B}^c-}|T_{m+},T_{m-})
    \nonumber \\
    &\overset{(a)}{\leq} \color{blue}\left(1-\sum\nolimits_{\mathcal B\subsetneq \mathcal{U}}\lambda_{\mathcal B}\right) H(X_{[k]-}Y_{\mathcal{U}-}|T_{m+},T_{m-},Z_{m-})- \left(1-\sum\nolimits_{\mathcal B\subsetneq \mathcal U}\lambda_{\mathcal B}\right) H(X_{[k]-}|T_{m+},T_{m-})
    \nonumber \\
    &  \quad\color{blue}+ \sum\nolimits_{\mathcal B\subsetneq \mathcal U}\lambda_{\mathcal B} H(X_{{\mathcal B}^c-}Y_{{\mathcal B^c}-}|T_{m+},T_{m-},Z_{m-})- \sum\nolimits_{\mathcal B\subsetneq \mathcal U}\lambda_{\mathcal B} H(X_{{\mathcal B}^c-}|T_{m+},T_{m-})
    \nonumber \\
    &+ \color{purple}\left(1-\sum\nolimits_{\mathcal B\subsetneq \mathcal{U}}\lambda_{\mathcal B}\right) H(X_{[k]+}Y_{\mathcal{U}+}|T_{m+},T_{m-},Z_{m-},Z_{m+}) - \left(1-\sum\nolimits_{\mathcal B\subsetneq \mathcal U}\lambda_{\mathcal B}\right) H(X_{[k]+}|T_{m+},T_{m-},Z_{m-})
    \nonumber \\
    &  \quad\color{purple}+ \sum\nolimits_{\mathcal B\subsetneq \mathcal U}\lambda_{\mathcal B} H(X_{{\mathcal B}^c+}Y_{{\mathcal B^c}+}|T_{m+},T_{m-},Z_{m-},Z_{m+})- \sum\nolimits_{\mathcal B\subsetneq \mathcal U}\lambda_{\mathcal B} H(X_{{\mathcal B}^c+}|T_{m+},T_{m-},Z_{m-})
\end{align}
where the colored terms single-letterize the DB constraint for the $+$ and $-$ components using the identification
$T_{m+}, T_{m-}$ for the $-$ variables, and $T_{m+}, T_{m-}, Z_{m-}$ for the $+$ variables. Step (a) holds because, after the cancellation of common terms, it is equivalent to
\begin{align}
   & \left(1-\sum\nolimits_{\mathcal B\subsetneq \mathcal{U}} \lambda_{\mathcal B}\right)
   H(X_{[k]-}Y_{\mathcal{U}-} | X_{[k]+},Y_{\mathcal{U}+}, T_{m+},T_{m-},Z_{m+},Z_{m-})
   \nonumber \\
   & \quad- \left(1-\sum\nolimits_{\mathcal B\subsetneq \mathcal U}\lambda_{\mathcal B}\right) H(X_{[k]+}|T_{m+},T_{m-},X_{[k]-})
   \nonumber \\
   &  \quad+ \sum\nolimits_{\mathcal B\subsetneq \mathcal U}\lambda_{\mathcal B} H(X_{{\mathcal B}^c-}Y_{{\mathcal B^c}-} |X_{{\mathcal B}^c+},Y_{{\mathcal B^c}+}, T_{m+},T_{m-}, Z_{m+},Z_{m-})- \sum\nolimits_{\mathcal B \subsetneq \mathcal U} \lambda_{\mathcal B}  H(X_{{\mathcal B}^c+} |X_{{\mathcal B}^c-},T_{m+},T_{m-})
   \nonumber \\
   & \leq \left(1-\sum\nolimits_{\mathcal B\subsetneq \mathcal{U}} \lambda_{\mathcal B}\right) H(X_{[k]-}Y_{\mathcal{U}-} | T_{m+},T_{m-},Z_{m-}) + \sum\nolimits_{\mathcal B\subsetneq \mathcal U} \lambda_{\mathcal B} H(X_{{\mathcal B}^c-}Y_{{\mathcal B^c}-} | T_{m+},T_{m-},Z_{m-})
   \nonumber \\
   & \quad- \left(1-\sum\nolimits_{\mathcal B\subsetneq \mathcal U} \lambda_{\mathcal B}\right) H(X_{[k]+} | T_{m+},T_{m-},Z_{m-})
   - \sum\nolimits_{\mathcal B\subsetneq \mathcal U}\lambda_{\mathcal B} H(X_{{\mathcal B}^c+}|T_{m+},T_{m-},Z_{m-}).
\end{align}
Using \eqref{MCc1} and \eqref{MCc2}, the above is equivalent to
\begin{align}
   & \left(1-\sum\nolimits_{\mathcal B\subsetneq \mathcal{U}} \lambda_{\mathcal B}\right) H(X_{[k]-}Y_{\mathcal{U}-} | X_{[k]+},Y_{\mathcal{U}+},T_{m+},T_{m-},Z_{m-}) \nonumber \\
   & \quad- \left(1-\sum\nolimits_{\mathcal B\subsetneq \mathcal U}\lambda_{\mathcal B}\right) H(X_{[k]+}|T_{m+},T_{m-},X_{[k]-},Z_{m-})
   \nonumber \\
   & \quad + \sum\nolimits_{\mathcal B\subsetneq \mathcal U}\lambda_{\mathcal B} H(X_{{\mathcal B}^c-}Y_{{\mathcal B^c}-}|X_{{\mathcal B}^c+},Y_{{\mathcal B^c}+},T_{m+},T_{m-},Z_{m-})- \sum\nolimits_{\mathcal B\subsetneq \mathcal U}\lambda_{\mathcal B} H(X_{{\mathcal B}^c+}|X_{{\mathcal B}^c-},T_{m+},T_{m-},Z_{m-})
   \nonumber \\
   & \leq \left(1-\sum\nolimits_{\mathcal B\subsetneq \mathcal{U}}\lambda_{\mathcal B}\right) H(X_{[k]-}Y_{\mathcal{U}-} | T_{m+},T_{m-},Z_{m-}) + \sum\nolimits_{\mathcal B\subsetneq \mathcal U} \lambda_{\mathcal B} H(X_{{\mathcal B}^c-}Y_{{\mathcal B^c}-}|T_{m+},T_{m-},Z_{m-})
   \nonumber \\
   & \quad- \left(1-\sum\nolimits_{\mathcal B\subsetneq \mathcal U} \lambda_{\mathcal B}\right) H(X_{[k]+} | T_{m+},T_{m-},Z_{m-})
   - \sum\nolimits_{\mathcal B\subsetneq \mathcal U} \lambda_{\mathcal B} H(X_{{\mathcal B}^c+}|T_{m+},T_{m-},Z_{m-}).
\end{align}
The above can be rewritten as 
\begin{align}
   & \left(1-\sum\nolimits_{\mathcal B\subsetneq \mathcal U} \lambda_{\mathcal B}\right) I(X_{[k]+};X_{[k]-} | T_{m+},T_{m-},Z_{m-}) + \sum\nolimits_{\mathcal B\subsetneq \mathcal U} \lambda_{\mathcal B} I(X_{{\mathcal B}^c+};X_{{\mathcal B}^c-} |T_{m+},T_{m-},Z_{m-})
   \nonumber \\
   & \leq \left(1-\sum\nolimits_{\mathcal B\subsetneq \mathcal{U}} \lambda_{\mathcal B}\right) I(X_{[k]-}Y_{\mathcal{U}-};X_{[k]+}Y_{\mathcal{U}+} | T_{m+},T_{m-},Z_{m-})
   \nonumber \\
   & \quad+ \sum\nolimits_{\mathcal B\subsetneq \mathcal U} \lambda_{\mathcal B} I(X_{{\mathcal B}^c-}Y_{{\mathcal B^c}-} ; X_{{\mathcal B}^c+}Y_{{\mathcal B^c}+} | T_{m+},T_{m-},Z_{m-})
\end{align}
But from \eqref{MCc1} and \eqref{MCc2}, we have
\begin{equation}
    I(X_{[k]-}Y_{\mathcal{U}-} ; X_{[k]+}Y_{\mathcal{U}+} | T_{m+},T_{m-},Z_{m-})
    = I(X_{[k]+};X_{[k]-} | T_{m+},T_{m-},Z_{m-})
\end{equation}
so the inequality follows.

Next, let us consider the objective function. Let $V$ be the supremum in \eqref{Vsup}. We have
\begin{subequations}
\begin{align}
   2V &= \epsilon I(X_{[k]},X'_{[k]} ; \tilde Y_{[k]},\tilde Y'_{[k]},Z_m,Z'_m | T_m,T'_m) + \sum_{m,\mathcal{S}} \gamma_{\mathcal{S},m} I(X_{\mathcal S}X'_{\mathcal S} ; Z_m,Z'_m,Y_{{\mathcal S^c}},Y'_{{\mathcal S^c}} | X_{\mathcal S^c},X'_{\mathcal S^c},T_m,T'_m)
   \nonumber \\
   & = \epsilon I(X_{[k]+},X_{[k]-};\tilde Y_{[k]+},\tilde Y_{[k]-},Z_{m+},Z_{m-}|T_{m+},T_{m-})
   \nonumber \\
   & \quad + \sum_{m,\mathcal{S}} \gamma_{\mathcal{S},m} I(X_{\mathcal S+}X_{\mathcal S-} ; Z_{m+},Z_{m-},Y_{{\mathcal S^c}+},Y_{{\mathcal S^c}-} | X_{\mathcal S^c+},X_{\mathcal S^c-},T_{m+},T_{m-})
   \nonumber \\
   & = \epsilon I(X_{[k]-};\tilde Y_{[k]-},Z_{m-} | T_{m+},T_{m-})
   \nonumber \\
   & \quad + \epsilon I(X_{[k]+};\tilde Y_{[k]+},Z_{m+}|T_{m+},T_{m-},\tilde Y_{[k]-},Z_{m-})
   \nonumber \\
   & \quad + \sum_{m,\mathcal{S}} \gamma_{\mathcal{S},m}  h(Z_{m+},Z_{m-},Y_{{\mathcal S^c}+},Y_{{\mathcal S^c}-} | X_{\mathcal S^c+},X_{\mathcal S^c-},T_{m+},T_{m-})
   \nonumber \\
   & \qquad- \sum_{m,\mathcal{S}} \gamma_{\mathcal{S},m}h(Z_{m+},Z_{m-},Y_{{\mathcal S^c}+},Y_{{\mathcal S^c}-}|X_{[k]+},X_{[k]-},T_{m+},T_{m-})
   \nonumber \\
   & \overset{(a)}{=} \epsilon I(X_{[k]-};\tilde Y_{[k]-},Z_{m-}|T_{m+},T_{m-})
   \nonumber \\
   & \quad + \epsilon I(X_{[k]+};\tilde Y_{[k]+},Z_{m+}|T_{m+},T_{m-},Z_{m-})-\epsilon I(\tilde Y_{[k]-};\tilde Y_{[k]+},Z_{m+}|T_{m+},T_{m-},Z_{m-})
   \nonumber\\
   & \quad \sum_{m,\mathcal{S}} \gamma_{\mathcal{S},m} h(Z_{m+},Z_{m-},Y_{{\mathcal S^c}+},Y_{{\mathcal S^c}-} | X_{\mathcal S^c+},X_{\mathcal S^c-},T_{m+},T_{m-})
   \nonumber \\
   & \qquad - \sum_{m,\mathcal{S}} \gamma_{\mathcal{S},m} h(Z_{m+},Y_{{\mathcal S^c}+}|X_{[k]+},T_{m+},T_{m-},Z_{m-})
   \nonumber \\
   & \qquad-\nonumber \sum_{m,\mathcal{S}} \gamma_{\mathcal{S},m} h(Z_{m-},Y_{{\mathcal S^c}-} | X_{[k]-},T_{m+},T_{m-})
   \nonumber \\
   & = \color{black} \epsilon I(X_{[k]+};\tilde Y_{[k]+},Z_{m+}|T_{m+},T_{m-},Z_{m-})+\sum_{m,\mathcal{S}}\gamma_{\mathcal{S},m}I(X_{\mathcal S+};Z_{m+},Y_{{\mathcal S^c}+}|X_{\mathcal S^c+},T_{m+},T_{m-},Z_{m-})
   \nonumber \\
   & \quad \color{purple} + \epsilon I(X_{[k]-};\tilde Y_{[k]-},Z_{m-} | T_{m+},T_{m-}) + \sum_{m,\mathcal{S}} \gamma_{\mathcal{S},m} I(X_{\mathcal S-};Z_{m-},Y_{{\mathcal S^c}-} | X_{\mathcal S^c-},T_{m+},T_{m-})
   \nonumber \\
   & \quad-\sum_{m,\mathcal{S}} \gamma_{\mathcal{S},m} I(Z_{m-},Y_{{\mathcal S^c}-};X_{\mathcal S^c+} | X_{\mathcal S^c-},T_{m+},T_{m-})
   \label{ggap1}
   \\
   & \quad-
   \sum_{m,\mathcal{S}} \gamma_{\mathcal{S},m} I(Z_{m+},Y_{{\mathcal S^c}+};Y_{{\mathcal S^c}-},X_{\mathcal S^c-} | X_{\mathcal S^c+},T_{m+},T_{m-},Z_{m-})
   \label{ggap2}
   \\
   & \quad -\epsilon I(\tilde Y_{[k]-};\tilde Y_{[k]+},Z_{m+}|T_{m+},T_{m-},Z_{m-})
   \label{ggap3}
\end{align}
\end{subequations}
where step $(a)$ follows from \eqref{MCc1} and \eqref{MCc2}. The colored terms are single-letterizations for the $+$ and $-$ components using the identification $T_{m+}, T_{m-}$ for the $-$ variables, and $T_{m+}, T_{m-}, Z_{m-}$ for the $+$ variables.

Let $Q\in\{+,-\}$ be a uniform time-sharing random variable and set $\hat{T}_m=(T_{m+},T_{m-},Q)$ if $Q=-$ and  $\hat{T}_m=(T_{m+},T_{m-},Z_{m-},Q)$ if $Q=+$. The above argument shows that the gap terms in \eqref{ggap1}, \eqref{ggap2} and \eqref{ggap3} vanish for the maximizer. In particular, since $\epsilon>0$  we deduce
\begin{align}
    I(\tilde Y_{[k]-};\tilde Y_{[k]+},Z_{m+} | T_{m+},T_{m-},Z_{m-})
    & = 0.
\end{align}
Proposition 2 in \cite{geng2014capacity} implies
\begin{align}
   I(X_{[k]-};X_{[k]+}|T_{m+},T_{m-},Z_{m-})
   & =0. \label{eqnDM1}
\end{align}
We also have
\begin{align}
I(Z_{m-};X_{[k]+}|T_{m+},T_{m-},X_{[k]-})&=0.\label{eqnDM2}
\end{align}
Equations \eqref{eqnDM1} and \eqref{eqnDM2} indicate Markov chains in different orders. The Double Markovity lemma \cite[Exercise 16.25]{csk11} (see also \cite[Lemma 6]{gohari2021outer}) shows that
\begin{equation}
    I(X_{[k]+};X_{[k]-},Z_{m-}|T_{m+},T_{m-}) = 0
\end{equation}
because $Z_{m-}$ and $X_{[k]-}$ have no Gacs-Korner common part.
This implies $I(X_{[k]+};X_{[k]-}|T_{m},T'_{m})=0$. By the Skitovic-Darmois characterization of Gaussian distributions, $X_{[k]}$ is jointly Gaussian conditioned on $T_m$, and the covariance matrix of $X_{[k]}$ given $T_m=t_m$ is independent of $t_m$.
This property should hold for \emph{any} maximizer $(X_{[k]}, T_{[a]})$. Let $K_{X_{[k]}}$ and $K_{X_{[k]} \mid T_m}$ denote the unconditional and conditional covariance matrices, respectively.

We next identify a new maximizer $(X_{[k]}, \tilde T_{[a]})$ satisfying
\begin{equation} \itemsep 0pt
   \label{eqn1nnt}
   p(x_{[k]}, \tilde t_{[a]}) = p(x_{[k]}) \cdot \left( \prod_{m \in [a]} p(\tilde t_m \mid x_{[k]}) \right)
\end{equation}
and the following two properties:
\begin{itemize}
  \item $(X_{[k]}, \tilde T_m)$ is a jointly Gaussian random vector for all $m$;
  \item $\tilde T_m$ is a $k$-dimensional random vector.
\end{itemize}
By \eqref{eqn1nnt}, we only need to define the joint distribution of $(X_{[k]}, \tilde T_m)$.
Note that $K_{X_{[k]} \mid T_m} \preceq K_{X_{[k]}}$, and let $\tilde T_m$ be a $k$-dimensional Gaussian vector with covariance matrix
\begin{equation}
   K_{\tilde T_m}
   = K_{X_{[k]}} - K_{X_{[k]} \mid T_m}
\end{equation}
and let $W_m$ be a Gaussian random vector (independent of $\tilde{T}_m$) with covariance matrix
\begin{equation}
   K_{W_m} = K_{X_{[k]} \mid T_m}.
\end{equation}
Define
\begin{equation}
   X_{[k]} = W_m + \tilde{T}_m.
\end{equation}
In this construction, $(X_{[k]}, \tilde T_m)$ is jointly Gaussian. Moreover, $X_{[k]}$ has unconditional covariance
\begin{equation}
K_{W_m} + K_{\tilde T_m} = K_{X_{[k]}},
\end{equation}
and conditional covariance
\begin{equation}
K_{X_{[k]} \mid \tilde T_m}= K_{X_{[k]} \mid T_m}.
\end{equation}
Therefore, this transformation preserves all relevant mutual information terms and yields a maximizer.
\end{proof}

\section{Calculations for the Gaussian Relay Channel}
\label{RelayAppendix}

Consider a Gaussian relay channel with equal power constraints $P$ on $X$ and $X_{\rmr}$:
\begin{subequations}
\begin{align} 
   Y_\mathrm{r} & = g_{12} X+N_\mathrm{r} \\
   Y &= g_{13} X+g_{23} X_\mathrm{r}+N_e \\
   Z_1 & = \alpha X + \beta X_\mathrm{r} + \gamma N_e + \eta N_\mathrm{r} + \zeta N
\end{align}
\end{subequations}
where $N_e, N_\rmr, N$ are  mutually independent standard Gaussian random variables.

We evaluate the bound for
\begin{align}
    K_{X,X_{\mathrm{r}}} & 
    =\begin{bmatrix} P & \rho P \\ \rho P & P \end{bmatrix} \\
    K_{X,X_{\mathrm{r}}|T_1} & =
    \begin{bmatrix} Q_1 & \tilde{\rho} \sqrt{Q_1Q_2} \\ \tilde{\rho} \sqrt{Q_1Q_2} & Q_2 \end{bmatrix} \preceq K_{X,X_{\mathrm{r}}}.
\end{align}
We have
\begin{align}
    & h(Y_{\mathrm{r}},Y,Z_1|X,X_{\mathrm{r}},T_1) = h(N_\rmr, N_e, \gamma N_e+\eta N_\mathrm{r}+\zeta N)=\frac12\log((2\pi e)^3\zeta^2) \\
    & h(Y_{\mathrm{r}},Y,Z_1|X_{\mathrm{r}},T_1) = h(g_{12} X+N_\mathrm{r}, g_{13} X+N_e, \alpha X+\gamma N_e+\eta N_\mathrm{r}+\zeta N|X_\rmr,T_1) \nonumber \\
    & \quad = \frac12\log\left((2\pi e)^3
    \det \begin{pmatrix} g_{12}^2Q_1(1-\tilde{\rho}^2)+1 & g_{12}g_{13}Q_1(1-\tilde{\rho}^2) & g_{12}\alpha Q_1(1-\tilde{\rho}^2)+\eta \\
    g_{12}g_{13}Q_1(1-\tilde{\rho}^2) & g_{13}^2Q_1(1-\tilde{\rho}^2)+1 & g_{13}\alpha Q_1(1-\tilde{\rho}^2)+\gamma \\
    g_{12}\alpha Q_1(1-\tilde{\rho}^2)+\eta & g_{13}\alpha Q_1(1-\tilde{\rho}^2)+\gamma & \alpha^2Q_1(1-\tilde{\rho}^2)+\gamma^2+\eta^2+\zeta^2
    \end{pmatrix} \right) \nonumber \\
    & \quad = \frac12\log\left((2\pi e)^3\left(\zeta^2 + Q_1(1-\tilde{\rho}^2)\left[(g_{12}\eta + g_{13}\gamma - \alpha)^2 + \zeta^2(g_{12}^2 + g_{13}^2)\right]\right)\right)
\end{align}
and therefore
\begin{align}
    I(X;Y,Y_{\mathrm{r}},Z_1|X_{\mathrm{r}},T_1) & = \frac12\log\left(\zeta^2 + Q_1(1-\tilde{\rho}^2)\left[(g_{12}\eta + g_{13}\gamma - \alpha)^2 + \zeta^2(g_{12}^2 + g_{13}^2)\right]\right)-\frac12\log(\zeta^2).
\end{align}
We have
\begin{align}
    & h(Y,Z_1|T_1)=h(g_{13} X+g_{23} X_\mathrm{r}+N_e,\alpha X+\beta X_\mathrm{r}+\gamma N_e+\eta N_\mathrm{r}+\zeta N|T_1) =\frac12\log\left((2\pi e)^2
    \det(M) \right)
\end{align}
where 
$$
   M = \begin{pmatrix} g_{13}^2Q_1+g_{23}^2Q_2+2g_{13}g_{23}\tilde{\rho} \sqrt{Q_1Q_2}+1
   & \alpha g_{13}Q_1+\beta g_{23}Q_2+(\alpha g_{23}+\beta g_{13})\tilde{\rho} \sqrt{Q_1Q_2}+\gamma
   \\
   \alpha g_{13}Q_1+\beta g_{23}Q_2+(\alpha g_{23}+\beta g_{13})\tilde{\rho} \sqrt{Q_1Q_2}+\gamma
   & \alpha^2Q_1+\beta^2Q_2+2\alpha\beta\tilde{\rho} \sqrt{Q_1Q_2}+\gamma^2+\eta^2+\zeta^2
    \end{pmatrix}.
$$
Next, we have
\begin{align}
    &h(Y,Z_1|T_1,X,X_\rmr)=h(N_e,\gamma N_e+\eta N_\mathrm{r}+\zeta N|T_1)=\frac12\log((2\pi e)^2(\eta^2+\zeta^2))
\end{align}
and therefore
\begin{align}
   I(X,X_{\mathrm{r}};Y,Z_1|T_1)
   = \frac12\log \bigg\{
   & \left( g_{13}^2Q_1 + g_{23}^2Q_2 + 2g_{13}g_{23}\tilde{\rho} \sqrt{Q_1Q_2}+1 \right)
   \nonumber \\
   & \cdot \left( \alpha^2Q_1 + \beta^2Q_2 + 2\alpha\beta\tilde{\rho} \sqrt{Q_1Q_2} + \gamma^2 + \eta^2 + \zeta^2 \right)
   \nonumber \\
   & - \left( \alpha g_{13}Q_1+\beta g_{23}Q_2 + (\alpha g_{23}+\beta g_{13})\tilde{\rho} \sqrt{Q_1Q_2}+\gamma \right)^2 \bigg\}
   - \frac12 \log(\eta^2+\zeta^2).
\end{align}

Next, consider the expressions
\begin{align}
   I(X;X_\rmr|T_1) & = -\frac12\log(1-\tilde{\rho}^2) \\
   I(X;X_{\mathrm{r}},Y_{\mathrm{r}} | T_1,Z_1)
   &= I(X;X_{\mathrm{r}},Y_{\mathrm{r}},Z_1 | T_1)-I(X;Z_1|T_1)
   \\
   h(Z_1|T_1) & = \frac{1}{2} \log\left( 2\pi e(\alpha^2Q_1 + \beta^2Q_2 + 2\alpha\beta\tilde{\rho} \sqrt{Q_1Q_2} + \gamma^2+\eta^2+\zeta^2) \right)
   \\
   h(Z_1|X,T_1) & = \frac{1}{2} \log\left( 2\pi e(\beta^2Q_2(1-\tilde{\rho}^2) + \gamma^2 + \eta^2 + \zeta^2) \right).
\end{align}
We compute
\begin{align}
    I(X;Z_1|T_1) & = \frac{1}{2} \log\left( \alpha^2Q_1 + \beta^2Q_2+2\alpha\beta\tilde{\rho} \sqrt{Q_1Q_2}+\gamma^2+\eta^2+\zeta^2 \right)
    \nonumber \\
    & \quad - \frac{1}{2}\log\left( \beta^2Q_2(1-\tilde{\rho}^2)+\gamma^2+\eta^2+\zeta^2 \right).
\end{align}
Finally, we compute 
$I(X;X_{\mathrm{r}},Y_{\mathrm{r}},Z_1|T_1)$
via
\begin{align}
   & h(X_{\mathrm{r}},Y_{\mathrm{r}},Z_1|T_1)
   = h(X_{\mathrm{r}},g_{12} X + N_\mathrm{r},\alpha X + \beta X_\mathrm{r} + \gamma N_e + \eta N_\mathrm{r} + \zeta N|T_1)
   \nonumber \\
   & = \frac{1}{2} \log(2\pi e Q_2) + h\left (g_{12} X + N_\mathrm{r},\alpha X + \gamma N_e + \eta N_\mathrm{r}+\zeta N|T_1,X_\rmr \right)
   \nonumber \\
   & = \frac{1}{2} \log(2\pi e Q_2) + 
   \frac12 \log\left( (2\pi e)^2\det
   \begin{bmatrix}
   g_{12}^2Q_1(1-\tilde{\rho}^2)+1 & g_{12}\alpha Q_1(1-\tilde{\rho}^2)+\eta  \\
   g_{12}\alpha Q_1(1-\tilde{\rho}^2)+\eta & \alpha^2Q_1(1-\tilde{\rho}^2)+\gamma^2+\eta^2+\zeta^2
   \end{bmatrix}\right)
   \nonumber \\
   & = \frac{1}{2}\log((2\pi e)^3)+\frac{1}{2}\log(Q_2)+\frac{1}{2}\log\left(\gamma^2+\zeta^2+Q_1 (1 - \tilde{\rho}^2) \left[ (\alpha - \eta g_{12})^2 + g_{12}^2 (\gamma^2 + \zeta^2) \right]\right)
\end{align}
and
\begin{align}
    h(X_{\mathrm{r}},Y_{\mathrm{r}},Z_1|T_1,X)
    & = h(X_{\mathrm{r}},N_\mathrm{r},\beta X_\mathrm{r} + \gamma N_e + \eta N_\mathrm{r} + \zeta N|X,T_1)
    \nonumber \\
    & = \frac{1}{2} \log\left( (2\pi e)^3 (1-\tilde{\rho}^2) Q_2(\gamma^2+\zeta^2) \right).
\end{align}
Thus, we have
\begin{align}
    I(X;X_{\mathrm{r}},Y_{\mathrm{r}},Z_1|T_1)
    & = \frac{1}{2}\log\left( \gamma^2 + \zeta^2 + Q_1 (1 - \tilde{\rho}^2) \left[ (\alpha - \eta g_{12})^2 + g_{12}^2 (\gamma^2 + \zeta^2) \right] \right)
    \nonumber \\
    & \quad - \frac{1}{2} \log\left( (1-\tilde{\rho}^2) (\gamma^2+\zeta^2) \right).
\end{align}

\section{Noisy Feedback}
\label{noisy-feedback}

For noisy feedback, the bounds \eqref{DBMAC-21}--\eqref{DBMAC-2b} are
\begin{subequations}
\begin{align}
    R_{1} & \leq  \min\left(\, I(X_{1};Y|X_{2},T_1),\,
    I(X_1;Y|X_2,T_2) \,\right)
    \label{DBMAC-21-1} \\
    R_{2} & \leq \min\left(\, I(X_{2};Y|X_{1},T_1),\, I(X_2;Y|X_1,T_2) \,\right)
    \label{DBMAC-22-1} \\
    R_{1}+R_{2} & \leq \min\left(\, I(X_{1},X_{2};Y|T_1), \, I(X_{1},X_{2};Y|T_2) \,\right)
    \label{DBMAC-23-1} \\
    I(X_{1};X_{2}|T_1) & \leq I(X_{1};X_{2}|Y_{1},Y_{2},T_1)
    \label{DBMAC-2a-1} \\
    I(X_1;X_2|T_2) & \leq I(X_1 ; X_2 | Y,T_2) .
    \label{DBMAC-2b-1}
\end{align}
\end{subequations}
The papers \cite{gastpar06,tandon2011dependence} established \eqref{DBMAC-21-1}--\eqref{DBMAC-2a-1} and \cite[Sec.~X]{tandon2011dependence} shows that joint Gaussian $X_1,X_2,T_1$ are optimal. Moreover, if one chooses $p_{T_2|X_1,X_2}=p_{T_1|X_1,X_2}$, the expression \cite[eq. (66)]{tandon2011dependence} shows that \eqref{DBMAC-2a-1} implies \eqref{DBMAC-2b-1}.  Thus, Corollary~\ref{cor-6} does not improve \cite[Theorem~1]{tandon2011dependence} for noisy feedback.

\begin{remark}
The above example gives insight:\ the bound \eqref{DBMAC-2a-1} is stronger than \eqref{DBMAC-2b-1} for finite noise variances, but the opposite is true for infinite noise variances. More precisely, for $\mathsf{Var}(N_1)\rightarrow\infty$ and $\mathsf{Var}(N_2)\rightarrow\infty$, the papers \cite{gastpar06,tandon2011dependence} show one recovers the capacity region without feedback.
However, if we begin with $\mathsf{Var}(N_1)=\mathsf{Var}(N_2)=\infty$, the bound \eqref{DBMAC-2a-1} is vacuous and Corollary~\ref{cor-6} gives the cut-set bound. We thus have a discontinuity at the limit.
\end{remark}

\begin{remark}
The paper \cite{kramer2006dependence} points out that the DB constraint \eqref{DBMAC-2b-1} restricts the correlations, while the cut-set bound does not, but \eqref{DBMAC-2b-1} admits the correlations that optimize the cut-set bound. 
\end{remark}

\begin{remark}
We simulated the sum-rate bound in Theorem \ref{thm2} for
\begin{align}
    Z_1 & = (Y_1,Y_2,\tilde{Z}_1) \\
    \tilde{Z}_1 & = \alpha X_1 + \beta X_2 + \gamma N + \theta N_3
\end{align}
for various parameters $\alpha,\beta,\gamma,\theta$ and noise $N_3$ independent of the channel inputs and other noise. However, we did not encounter examples that improve upon \cite[Theorem~1]{tandon2011dependence}.
\end{remark}


\bibliographystyle{IEEEtran}
\bibliography{mybiblio}
\end{document}